\theoremstyle{plain}
\newtheorem{thm}{Theorem}
\newtheorem{prop}{Proposition}
\newtheorem{lem}{Lemma}
\newtheorem{cor}{Corollary}
\newcommand{\ket}[1]{\left| #1 \right\rangle} 
\newcommand{\bra}[1]{\left\langle #1 \right|} 
\newcommand{\ketbra}[2]{\ket{#1}\hspace*{-0.mm}\bra{#2}}
\newcommand{\ketbras}[3]{\ket{#1}\hspace*{-0.mm}\bra{#2}_{#3}}
\newcommand{\Id}{\mathbb{1}}
\newcommand{\Tr}[1]{\mathrm{\text{Tr}}\left[#1\right]}
\newcommand{\TrS}[2]{\mathrm{\text{Tr}}_{#1}\left[#2\right]}
\newcommand{\Eq}[1]{Eq.~\eqref{#1}}
\newcommand{\bk}{\textbf{\textit{k}}}
\newcommand{\av}[1]{\left\langle #1 \right\rangle} 
\begin{document}
	
\title{Versatile relative entropy bounds for quantum networks}

\author{Luca Rigovacca}
\affiliation{NTT Basic Research Laboratories, NTT Corporation, 3-1 Morinosato-Wakamiya, Atsugi 243-0198, Japan}
\affiliation{Blackett Laboratory, Imperial College London, London SW7 2AZ, United Kingdom}

\author{Go Kato}
\affiliation{NTT Communication Science Laboratories, NTT Corporation, 3-1 Morinosato-Wakamiya, Atsugi 243-0198, Japan}
\affiliation{NTT Research Center for Theoretical Quantum Physics, NTT Corporation, 3-1 Morinosato-Wakamiya, Atsugi 243-0198, Japan}	

\author{Stefan B\"{a}uml}
\affiliation{NTT Basic Research Laboratories, NTT Corporation, 3-1 Morinosato-Wakamiya, Atsugi 243-0198, Japan}
\affiliation{NTT Research Center for Theoretical Quantum Physics, NTT Corporation, 3-1 Morinosato-Wakamiya, Atsugi 243-0198, Japan}	

\author{M. S. Kim}
\affiliation{Blackett Laboratory, Imperial College London, London SW7 2AZ, United Kingdom}

\author{W. J. Munro}
\affiliation{NTT Basic Research Laboratories, NTT Corporation, 3-1 Morinosato-Wakamiya, Atsugi 243-0198, Japan}
\affiliation{National Institute of Informatics, 2-1-2 Hitotsubashi, Chiyoda-ku, Tokyo 101-8430, Japan}
\affiliation{NTT Research Center for Theoretical Quantum Physics, NTT Corporation, 3-1 Morinosato-Wakamiya, Atsugi 243-0198, Japan}

\author{Koji Azuma}
\affiliation{NTT Basic Research Laboratories, NTT Corporation, 3-1 Morinosato-Wakamiya, Atsugi 243-0198, Japan}	
\affiliation{NTT Research Center for Theoretical Quantum Physics, NTT Corporation, 3-1 Morinosato-Wakamiya, Atsugi 243-0198, Japan}

\begin{abstract}
	We provide a versatile upper bound on the number of maximally entangled qubits, or private bits, shared by two parties via a generic adaptive communication protocol over a quantum network when the use of classical communication is not restricted. Although our result follows the idea of Azuma {\it et al.} [Nat. Comm. {\bf 7}, 13523 (2016)] of splitting the network into two parts, our approach relaxes their strong restriction, consisting of the use of a single entanglement measure in the quantification of the maximum amount of entanglement generated by the channels. In particular, in our bound the measure can be chosen on a channel-by-channel basis, in order to make it as tight as possible. This enables us to apply the relative entropy of entanglement, which often gives a state-of-the-art upper bound, on every Choi-simulable channel in the network, even when the other channels do not satisfy this property. We also develop tools to compute, or bound, the max-relative entropy of entanglement for channels that are invariant under phase rotations. In particular, we present an analytical formula for the max-relative entropy of entanglement of the qubit amplitude damping channel.
	\end{abstract}
	
\maketitle

\section{Introduction}
Whenever two parties, say Alice and Bob, want to communicate by using a quantum channel, its noise unavoidably limits their communication efficiency \cite{Wilde_book}. In the limit of many channel uses, their asymptotic optimal performance can be quantified by the channel capacity, which represents the supremum of the number of qubits/bits that can be faithfully transmitted per channel use.
Obtaining an exact expression for this quantity is typically far from trivial. Indeed, in addition to the difficulty of studying the asymptotic behaviour of the channel, the value of the capacity also depends on the task Alice and Bob want to perform, as well as on the free resources available to them \cite{Wilde_book}. 
Two representative tasks, which will be considered in our paper, involve the generation and distribution of a string of shared private bits (pbits) \cite{Horodecki_05,Horodecki_09} or of maximally entangled states (ebits) \cite{Horodecki_ent}. These are known to be fundamental resources for more complex protocols, such as secure classsical communication \cite{Ekert_91,Bennett_92}, quantum teleportation \cite{Bennett_93_teleportation}, and quantum state merging \cite{Horodecki_05_stateMerging}. 
An example of free resource involves the possibility of exchanging classical information over a public classical channel, such as a telephone line or over the internet. Depending on the restrictions on this, the capacity is said to be assisted by zero, forward, backward, or two-way classical communication \cite{Wilde_book}. In this paper we will focus on the last option, that is, no restriction will be imposed on the use of classical communication.

Although the capacity of a quantum channel is by definition an abstract and theoretical quantity, it is also practically useful in that it can be compared with the performance of known transmission schemes. This comparison could then give an indication on the extent of improvements that could be expected in the future.
From this perspective, similar conclusions could be obtained even by studying upper bounds on the channel capacity itself, if they are close enough to its value. For example, with this approach Takeoka {\it et al.} \cite{TGW_nature_14} provided strong evidences for the need of quantum repeaters for long-distance quantum key distribution (QKD) \cite{Munro_2012, Azuma_2015, Azuma_2015b}. 
This reason, together with the fundamental appeal of characterising the ultimate transmission rate achievable by a channel, led to recent intensive research for computable and simple upper bounds on channel capacities, preferably determined by a single use of the channel \cite{TGW_nature_14,TGW_IEEE_14,Pirandola_15,Christandl_16,Goodenough_16,Wilde_2017,Kaur_2017,Cope_2017,Laurenza_2017}.

The results in this direction have been obtained by considering the maximum entanglement that could be shared through a \emph{single} use of a channel $\mathcal N_{A \to B}$, which takes as input a quantum state on Alice's side and yields an output on Bob's one. Indeed, for any entanglement measure $E$ across the bipartition $A:B$, we can define the entanglement of the channel as
\begin{equation}\label{def: E entanglement of channel}
E(\mathcal N) \equiv \max_{\rho_{A A^\prime}} E (\mathcal N_{A^\prime \to B}[\rho_{A A^\prime}]),
\end{equation}
along the lines of Refs.~\cite{TGW_IEEE_14,Christandl_16,Wilde_2017,Pirandola_15}.
For some choices of $\mathcal N_{A\to B}$ and $E$, this can be used to upper bound the private capacity $K(\mathcal N)$, assisted by two-way classical communication. Hence, $E(\mathcal N)$ also acts as an upper bound on the two-way quantum capacity $Q(\mathcal N)$ of the channel, because $Q(\mathcal N) \leq K(\mathcal N)$ (since an ebit can be considered a special case of pbit \cite{Horodecki_05,Horodecki_09}). By generically labeling with $C(\mathcal N)$ one of these two capacities, these upper bounds can be compactly written as
\begin{equation} \label{def: capacity UB}
C(\mathcal N) \leq E(\mathcal N).
\end{equation}
A result of this form has been proven in \cite{TGW_nature_14,TGW_IEEE_14} for \emph{any} quantum channel by employing a particular entanglement measure, the squashed entanglement $E_{\text{sq}}$ \cite{Christandl_04}.  
However, due to the difficulty of computing $E_{\rm{sq}}(\mathcal N)$ exactly \cite{Huang_Ref_2014,Pirandola_15,Goodenough_16}, one often needs to resort to upper bounds on it, thus loosening the bound for the capacity. 
The relative entropy of entanglement $E_{\text{R}}$ is also known to provide an upper bound on the capacity of Choi-simulable quantum channels \cite{Pirandola_15, Wilde_2017}, i.e., channels that can be simulated by performing LOCCs on their Choi-Jamio\l{}kowski states. Quantum channels with this property are also called Choi-stretchable channels \cite{Pirandola_15}.
Remarkably, this upper bound often has no gap with respect to the best known lower bound on the capacity, and when this happens a single-letter formula for the capacity has been found. However, a drawback of the upper bound based on the relative entropy of entanglement is that at the moment it is not known whether \Eq{def: capacity UB}, with $E = E_{\text{R}}$, is valid when applied on a generic, non Choi-simulable, quantum channel. Another option is to use in \Eq{def: capacity UB} the max-relative entropy of entanglement $E_{\text{max}}$ \cite{Christandl_16}.  The resulting bound is formally proven only for quantum channels acting on finite dimensional systems, but it is thought to hold in general (see Ref.~\cite{Christandl_16} for a short discussion).
The set of pairs $(E,\mathcal N)$ for which \Eq{def: capacity UB} is known to hold is the subject of ongoing research, and its extension represents an interesting and challenging problem. 
   
In the future, it is reasonable to expect that all the parties involved in a communication task will be located at different nodes of a quantum network. In this vision, multiple users will be interconnected by a network of quantum channels, which can be utilised with the aim of transmitting or sharing quantum information. This scenario represents the evolution of today's internet in a quantum regime, and is therefore known as ``quantum internet'' \cite{Kimble_08,Azuma_16, Munro_repeaters, Schoute_2016}. Experimental demonstrations of quantum key distribution over metropolitan networks are currently under way \cite{Sangouard_2011, Peev_2009, Stucki_2011, Sasaki,Scarani_QKD_2009}.
Similarly to the single-channel scenario, it is of fundamental and practical importance to seek upper bounds on the rate at which ebits (or pbits) can be shared by two parties by using the channels of the network. This issue has been addressed in Refs. \cite{Pirandola_Network_2016} and \cite{Azuma_16}, where the authors obtained network versions of \Eq{def: capacity UB}, by respectively using $E_\text{R}$ or $E_{\text{sq}}$ as measures of entanglement.
The possibility of dealing with quantum broadcast channels \cite{QBC_def} has also been considered in Refs.~\cite{Yard_11,Seshadreesan_16_broadcast,Baeuml_16,Laurenza_broadcast_2016,Takeoka_broadcast_2016,Takeoka_broadcast_17}. When multiple channels are involved, a typical approach consists in splitting the whole network into two parts, and then in using the maximum amount of entanglement generated by the channels connecting them in order to bound the number of ebits (pbits) produced by a communication protocol.
Thanks to the broad applicability of the single-channel bound given in \Eq{def: capacity UB} for $E = E_{\text{sq}}$, the result of Ref.~\cite{Azuma_16} holds for arbitrary quantum networks.
However, a non-vanishing gap with the optimal number of ebits (or pbits) generated by the network could exist, in analogy with the single-channel case where the upper bounds on the capacity based on the squashed entanglement are typically not tight.
It is thus natural to wonder whether different entanglement measures could improve this sort of network bound, and to what extent the choice of entanglement measure could be tailored to the characteristics of the channels in the network.

In this paper, we start by emphasising how a common strategy is adopted in all the known proofs of the bounds with the form given in \Eq{def: capacity UB}. This allows us to formally identify two sufficient properties that, if satisfied by a given pair $(E,\mathcal N)$, lead to a new instance of \Eq{def: capacity UB}. 
We then show that those two properties also allow us to generalise the result of Ref.~\cite{Azuma_16} on quantum networks to different entanglement measures: $E_{\rm{R}}$ when the channels in the network are Choi-simulable, or $E_{\rm max}$.
The first case is particularly interesting, because \Eq{def: capacity UB} 
is often known to be tighter when stated in terms of $E_{\text{R}}$, rather than in terms of $E_{\text{sq}}$.
The same advantage is therefore expected to be inherited by the corresponding upper bounds on the performance of quantum networks.
However, notice that the $E_{\rm R}$-based bound cannot be applied to arbitrary quantum networks. For example, even if a quantum network is composed almost entirely by Choi-simulable channels that are well bounded by their relative entropy entanglement, the presence of a single channel that is not Choi-simulable forces the use of a weaker entanglement measure (such as $E_{\text{sq}}$) for the whole network.
This suggests that a better bound could be obtained if there was the possibility of changing entanglement measures on a channel-by-channel basis. Our second and most important result goes exactly in this direction.
We exploit an intermediate step in the discussion by Christandl and collaborators in Ref.~\cite{Christandl_16} in order to bound the performance of a quantum network by means of either $E_{\text{R}}$ or $E_{\text{max}}$. 
In particular, as $E_{\text{max}}$ is always larger than $E_{\text{R}}$, we use the relative entropy of entanglement on the Choi-simulable channels of the network, and the max-relative entropy of entanglement on the others. The resulting bound allows us to maintain the precision guaranteed by the relative entropy of entanglement, without the need to restrict its applicability to Choi-simulable networks.
After having presented this general result, we will provide examples of networks where our bound yields an advantage over its counterpart based on the squashed entanglement. In order to do this, we will also evaluate the max-relative entropy of entanglement for the most common qubit channels, by exploiting their symmetry under phase rotations and a recent semidefinite programming (SDP) formulation of $E_{\text{max}}$ \cite{berta2017amortization}. 
In particular, for the qubit amplitude damping channel we are able to analytically solve the SDP optimisation, thus finding the exact expression for its max-relative entropy of entanglement. This quantity upper bounds the private and quantum capacities of the channel assisted by unlimited classical communication, but is less tight than the best known upper bound based on the squashed entanglement \cite{Pirandola_15}.

The remainder of this paper is organised as follows. In Sec. \ref{sec: preliminaries} we introduce our notation and some preliminary notions that will be used in the following. In Sec. \ref{sec: abastract approach} we formally identify sufficient properties that, if satisfied by a pair $(E,\mathcal N)$, lead to an upper bound on the capacity of the channel as in \Eq{def: capacity UB}. Furthermore, along the lines of Ref.~\cite{Azuma_16}, we show how the same properties are also sufficient to obtain an upper bound on the number of ebits (or pbits) generated through a quantum network. 
Our main result is presented in Sec. \ref{sec: hybrid bound}, where we derive a similar versatile upper bound, in which different entanglement measures are applied to the channels of the network depending on their Choi-simulability.
Analytical or numerical evaluations of the max-relative entropy of entanglement for the most common qubit channels can be found in Sec.~\ref{sec: channels}, while examples of networks where our bound performs better than the one based on the squashed entanglement are presented in Sec.~\ref{sec: examples}. A final discussion on our results can be found in Sec.~\ref{sec: disc and conclusions}, together with our conclusions. Technical details are left for the appendices.

\section{Preliminaries} \label{sec: preliminaries}
In this section we introduce the basic concepts necessary to understand the remainder of the paper, and we describe the notation we will use. In particular, we start by looking at the definitions and properties of the relative and max-relative entropy. Then, we introduce the notion of private states and of Choi-simulable channels. We also formally describe the structure of a quantum network and of the most general adaptive protocol, assisted by free classical communication, that could be employed to share ebits (or pbits). 
At the end of the section, we discuss the figure of merit we use to quantify the performance of a given communication strategy, and we comment on its relation to the usual single-channel capacity.

\subsection{Relative and max-relative entropies}\label{sec: entropy defs}
Given two quantum states $\rho$ and $\sigma$, with supports satisfying $\text{Supp}(\rho) \subseteq \text{Supp}(\sigma)$, their relative entropy \cite{umegaki1962} and max-relative entropy \cite{Datta_09} are respectively defined as
\begin{align}
S(\rho\Vert\sigma) &= \Tr{\rho(\log_2\rho - \log_2\sigma)},\\
D_{\text{max}}(\rho\Vert\sigma) &= \inf \{x \in \mathbb{R}\vert 2^x \sigma - \rho \geq 0 \}, \label{def: Dmax}
\end{align}
while their values are set to $\infty$ if the condition on the supports is not satisfied. The relative and max-relative entropy of two states are related by
\begin{equation} \label{eq: S Dmax ineq}
S(\rho\Vert\sigma) \leq D_{\text{max}}(\rho\Vert\sigma),
\end{equation}
they are also non-negative, equal to zero if and only if $\rho = \sigma$, and invariant under joint unitary operations, that is:
\begin{equation}\label{def: invariance joint unitaries}
S(U \rho U^\dagger \Vert U \sigma U^\dagger) = S(\rho \Vert \sigma), \qquad D_{\text{max}}(U \rho U^\dagger \Vert U \sigma U^\dagger) = D_{\text{max}}(\rho \Vert \sigma),
\end{equation}
for any unitary $U$.
Moreover, the relative entropy is jointly convex in its arguments \cite{JointConvexity}, whereas the max-relative entropy is jointly quasi-convex:
\begin{align}
S\left(\sum_i p_i \rho_i \bigg\Vert \sum_i p_i \sigma_ i\right) &\leq \sum_i p_i S(\rho_i\Vert\sigma_i), \\
D_{\text{max}}\left(\sum_i p_i \rho_i \bigg\Vert \sum_i p_i \sigma_ i\right) &\leq \max_i  D_{\text{max}}(\rho_i\Vert\sigma_i), \label{def: joint quasi convexity}
\end{align}
where $\{\rho_i\}_i$ and $\{\sigma_i\}_i$ are quantum states, and $p_i\geq 0$ with $\sum_i p_i = 1$.

The relative and max-relative entropies can be used to define entanglement measures respectively known as relative entropy of entanglement \cite{Plenio_1998} and max-relative entropy of entanglement \cite{Datta_09}. For a given bipartite state $\rho_{AB}$, their values are obtained by optimising over all separable states as follows:
\begin{align} \label{def: ER}
E_{\text{R}}^{A:B}(\rho_{AB}) & = \min_{\sigma_{AB}\in \text{SEP} } S(\rho_{AB}\Vert\sigma_{AB}),\\
E^{A:B}_{\rm{max}}(\rho_{AB}) & = \min_{\sigma_{AB}\in \text{SEP}} D_{\text{max}}(\rho_{AB}\Vert\sigma_{AB}). \label{def: Emax}
\end{align}
In the following we do not explicitly write the bipartition $A:B$ in the symbols $E_{\text{R}}$ and $E_{\text{max}}$, unless needed to avoid confusion. If the local quantum systems of Alice (or Bob) are divided into smaller subsystems, these will be labelled for example as $A,A^\prime,A^{\prime\prime}$ (or $B,B^\prime,B^{\prime\prime}$). In this case, the default evaluation of an entanglement measure has to be considered across the bipartition $AA^\prime A^{\prime\prime}:BB^\prime B^{\prime\prime}$.
As any good entanglement measure, $E_{\text{R}}$ and $E_{\text{max}}$ are, on average, monotonically non-increasing under local operations and classical communication (LOCC). For an entanglement measure $E$, this property can be explicitly written as
\begin{equation}\label{eq: LOCC monotonicity}
\sum_k p_k E\left(\rho^{(k)}_{AB}\right) \leq E(\rho_{AB}), 
\end{equation}
where $k$ represents the measurement outcome of the LOCC operation applied on $\rho_{AB}$, $p_k$ is the probability of obtaining it, and $\rho^{(k)}_{AB}$ is the output state of the system post-selected on that result.
Moreover, the ordering relation in \Eq{eq: S Dmax ineq} can also be straightforwardly extended to the entanglement measures $E_{\text{R}}$ and $E_{\text{max}}$, as well as to the entanglement of a channel $\mathcal N$ [see \Eq{def: E entanglement of channel}]:
\begin{equation}\label{eq: Er Emax relation}
E_{\text{R}}(\rho_{AB}) \leq E_{\text{max}}(\rho_{AB}), \qquad E_{\text{R}}(\mathcal N) \leq E_{\text{max}}(\mathcal N).
\end{equation}
Further details on $E_{\text{max}}$ can be found in Ref. \cite{Datta_09a}.

We stress that in the remainder of this paper any generic entanglement measure $E$ satisfies \Eq{eq: LOCC monotonicity}, and becomes zero when evaluated on any separable state.

\subsection{Target states: maximally entangled or private states}
The typical goal of two parties, say Alice and Bob, in a quantum communication protocol  is to share one or multiple copies of a $d$-dimensional maximally entangled state
\begin{equation} \label{def: max ent state}
\psi_{AB}(d) = \sum_{i,j=1}^{d} \frac{1}{d}\ketbras{ii}{jj}{AB},
\end{equation}
where $\{\ket{i}_{A (B)}\}_i$ forms a local orthonormal basis. Any single copy of these states corresponds to $\log_2 d$ ebits, which Alice and Bob can use to perform one of many possible tasks. For example, they can transmit any $d$-dimensional state via the teleportation protocol, or they can perform a projective measurement on it in order to share a string of $\log_2 d$ bits of private randomness. The maximally entangled state, however, is not the only quantum state from which a private key can be obtained by performing local measurements. It has been shown that this is possible whenever Alice and Bob are able to distill via LOCC a so-called ``private state'' \cite{Horodecki_05,Horodecki_09}, which has the following form:
\begin{equation}\label{def: private state}
\gamma_{AB A^\prime B^\prime}(d) = U^{(\text{twist})}_{AB A^\prime B^\prime} \left(\psi_{AB}(d)\otimes \sigma_{A^\prime B^\prime}\right) U^{(\text{twist})\dagger}_{AB A^\prime B^\prime}.
\end{equation}
The state $\sigma_{A^\prime B^\prime}$ is arbitrary and the controlled unitary
\begin{equation}
U^{(\text{twist})}_{AB A^\prime B^\prime} = \sum_{ij=1}^d \ketbras{i}{i}{A}\otimes\ketbras{j}{j}{B}\otimes U_{A^\prime B^\prime}^{(ij)}
\end{equation}
is known as ``twisting unitary'', with each $U_{A^\prime B^\prime}^{(ij)}$ a unitary operator.
The local subsystems $A$ and $B$ are called ``key systems'', whereas $A^\prime$ and $B^\prime$ are known as ``shield systems''. The role of the latter is to prevent an eavesdropper from getting access to the key component, and they could have any dimension.

\subsection{Choi-simulable channels}
The idea of using quantum teleportation in order to simplify the structure of a computation for communication task has been used several times in the past \cite{Bennett_1996,Gottesman_99,horodecki_tp_1999,KLM,Wolf_tp_2007,Cerf_tp_2009,Cerf_tp_2009,MH_thesis}.
Recently, a similar idea has been used in Ref.~\cite{Pirandola_15} and in Refs. \cite{Pirandola_Network_2016,Wilde_2017} in order to obtain upper bounds on the capacities of quantum channels $\mathcal N$ such that their action on a quantum state $\tilde\rho_{A^\prime}$ can be written as
	\begin{equation}\label{eq: streatchability condition}
	\mathcal N_{A^\prime \to B^\prime}(\tilde\rho_{A^\prime}) = \Lambda_{A^{\prime} A^{\prime\prime}:B^\prime} \left(\tilde\rho_{A^\prime} \otimes \pi_{A^{\prime\prime}B^\prime}(\mathcal N) \right).
	\end{equation}
Here $\Lambda_{A^{\prime} A^{\prime\prime}:B^\prime}$ is a trace-preserving LOCC operation and $\pi_{A^{\prime\prime}B^\prime}(\mathcal N) = \mathcal N_{\tilde A\to B^\prime} (\psi_{A^{\prime\prime} \tilde A})$ represents the Choi-Jamio\l{}kowski state associated with the quantum channel $\mathcal N$, with $\psi_{A^{\prime\prime} \tilde A}$ a maximally entangled state.
We will say that channels satisfying \Eq{eq: streatchability condition} are Choi-simulable, as they can be simulated by applying LOCCs to their Choi-Jamio\l{}kowski state. This property can also go under the name of ``Choi-stretchability'' \cite{Pirandola_15,Pirandola_Network_2016}. 
The importance of \Eq{eq: streatchability condition} lies in the fact that it gives the possibility of reducing the effect of a quantum channel to the presence of an initially shared Choi state, up to some LOCC transformation. 
Equation \eqref{eq: streatchability condition} makes the description of the quantum communication much simpler, because the LOCC $\Lambda_{A^{\prime} A^{\prime\prime}:B^\prime}$ can be included among those freely performed by the parties.
In the following, if a channel $\mathcal N$ is Choi-simulable we will write $\mathcal N \in \mathcal S$. 

Remarkably, the relative entropy of entanglement of a Choi-simulable channel $\mathcal N$, as defined in \Eq{def: E entanglement of channel}, provides an upper bound on its capacity assisted by two-way classical communication \cite{Pirandola_15, Wilde_2017}.
Moreover, $E_{\text{R}}(\mathcal N)$ exactly coincides with the capacity $C(\mathcal N)$ on a particular subset of Choi-simulable channels, whose capacities $C(\mathcal N)$ can thus be written as  single-letter formulas \cite{Pirandola_15}. Channels for which this happens can also be called ``distillable'' \cite{Pirandola_15,Pirandola_Network_2016}.
Among these, we can enumerate the erasure and dephasing channels in finite dimensional systems, as well as the bosonic lossy channel.
Interestingly, for many Choi-simulable channels (such as Pauli channels) $E_{\text{R}}(\mathcal N)$ turns out \cite{Pirandola_15} to be a tighter upper bound on $C(\mathcal N)$ than other known upper bounds based on the squashed entanglement \cite{TGW_nature_14,TGW_IEEE_14}. 
However, one should keep in mind that this is not always the case, as can be seen by considering a channel having an antisymmetric Choi state. Indeed, the squashed entanglement of this state, and thus of the associated quantum channel, can be arbitrarily small compared to its relative entropy of entanglement \cite{Christandl_antisymm_2010,Christandl_antisymm_2012}.

We now explicitly derive a property that the relative entropy of entanglement satisfies when applied on the output of a Choi-simulable channel. Although it is obvious from the discussion in Ref.~\cite{Pirandola_15}, it is beneficial to go through its proof in detail, because it will play a central role in the remainder of this paper.   In particular, we prove that if $\tilde \rho_{AB^\prime B}$ is obtained as output of a Choi-simulable channel $\mathcal N \in \mathcal S$ as
\begin{equation}
\tilde \rho_{AB^\prime B} = \mathcal N_{A^\prime \to B^\prime}(\rho_{AA^\prime B}),
\end{equation}
the following chain of inequalities holds:
\begin{align}\label{eq: ER bound}
E_{\text{R}}(\tilde\rho_{A B^\prime B}) &\leq E_{\text{R}}\left(\rho_{AA^\prime B} \otimes \pi_{A^{\prime\prime}B^\prime}(\mathcal N)\right)\notag\\
& \leq E_{\text{R}}\left(\pi_{A^{\prime\prime}B^\prime}(\mathcal N)\right) + E_{\text{R}}\left(\rho_{AA^\prime B}\right)\notag\\
& = E_{\text{R}}\left(\mathcal N_{A^\prime \to B^\prime}\right) + E_{\text{R}}\left(\rho_{AA^\prime B}\right).
\end{align}
The first inequality comes from \Eq{eq: streatchability condition} and from the monotonicity of $E_{\text{R}}$ under LOCC, while the second one follows from its sub-additivity under tensor products. The final equality can be proven by showing inequalities in both directions. Indeed, the inequality ``$\leq$'' is obtained by noticing that a maximisation over all input states would be needed in order to obtain the relative entropy of entanglement of a channel [see \Eq{def: capacity UB}]. The converse direction, instead, is once again a consequence of \Eq{eq: streatchability condition} and of the monotonicity of $E_{\text{R}}$ under LOCC \cite{Pirandola_15}:
\begin{align}
E_{\text{R}}(\mathcal N_{A^\prime\to B^\prime}[\rho_{A A^\prime}]) &= E_{\text{R}}\left(\Lambda_{A^{\prime} A^{\prime\prime}:B} \left[\rho_{A A^\prime} \otimes \pi_{A^{\prime\prime}B^\prime}(\mathcal N) \right]\right) \leq E_{\text{R}}\left( \pi_{A^{\prime\prime}B}(\mathcal N) \right),
\end{align}
which holds for any $\rho_{A A^\prime}$ and thus also for its maximum value $E_{\text{R}}\left(\mathcal N_{A^\prime \to B^\prime}\right)$. Hence, \Eq{eq: ER bound} shows that the amount of entanglement which can be found in output of a Choi-simulable channel, as measured by $E_{\text{R}}$, can be upper bounded by the amount already present in input plus the maximum amount that can be created by the channel itself. Up to date, it is not known whether the same conclusion could be obtained also for any quantum channel.

\subsection{Quantum networks as graphs}\label{sec: network and graph}
The simplest setup that allows Alice and Bob to exchange quantum information is shown in Fig.~\ref{fig: single channel}, where a quantum channel $\mathcal{N}_{A\to B}$ connects Alice's laboratory with Bob's. More generally, we can think of them as being two local users having access to a quantum network, as in Fig.~\ref{fig: quantum network}.
A quantum network is composed of several nodes, connected by many quantum channels potentially different from each other. 
We can formally describe this structure by a directed graph $G = (V,L)$, where $V = \{V_0, \ldots, V_{M+1}\}$ is the set of nodes and $L$ is the set of directed edges, or links, between the nodes. For any edge $l = (V_i,V_j) \in L$, there is a quantum channel $\mathcal N^{(l)}$ from node $V_i$ to node $V_j$. 
Without loss of generality, we can assume that nodes $A = V_0$ and $B = V_{M+1}$ are respectively controlled by Alice and Bob, whereas the remaining nodes $\{C_i\}_{i=1}^M$, with $C_i = V_i$, are not.

\begin{figure}
	\centering
	\subfloat[Single channel. \label{fig: single channel}]{\includegraphics[scale=0.5]{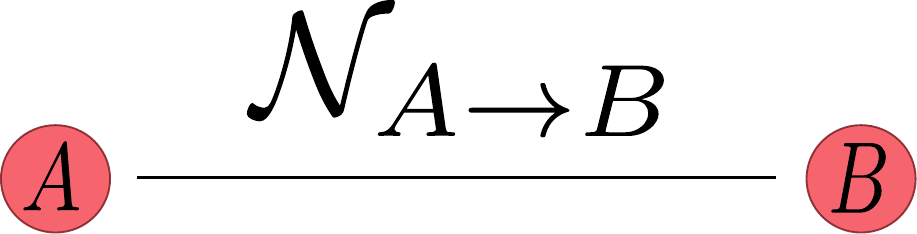}}\\
	\subfloat[Quantum network. \label{fig: quantum network}]{\includegraphics[scale=0.5]{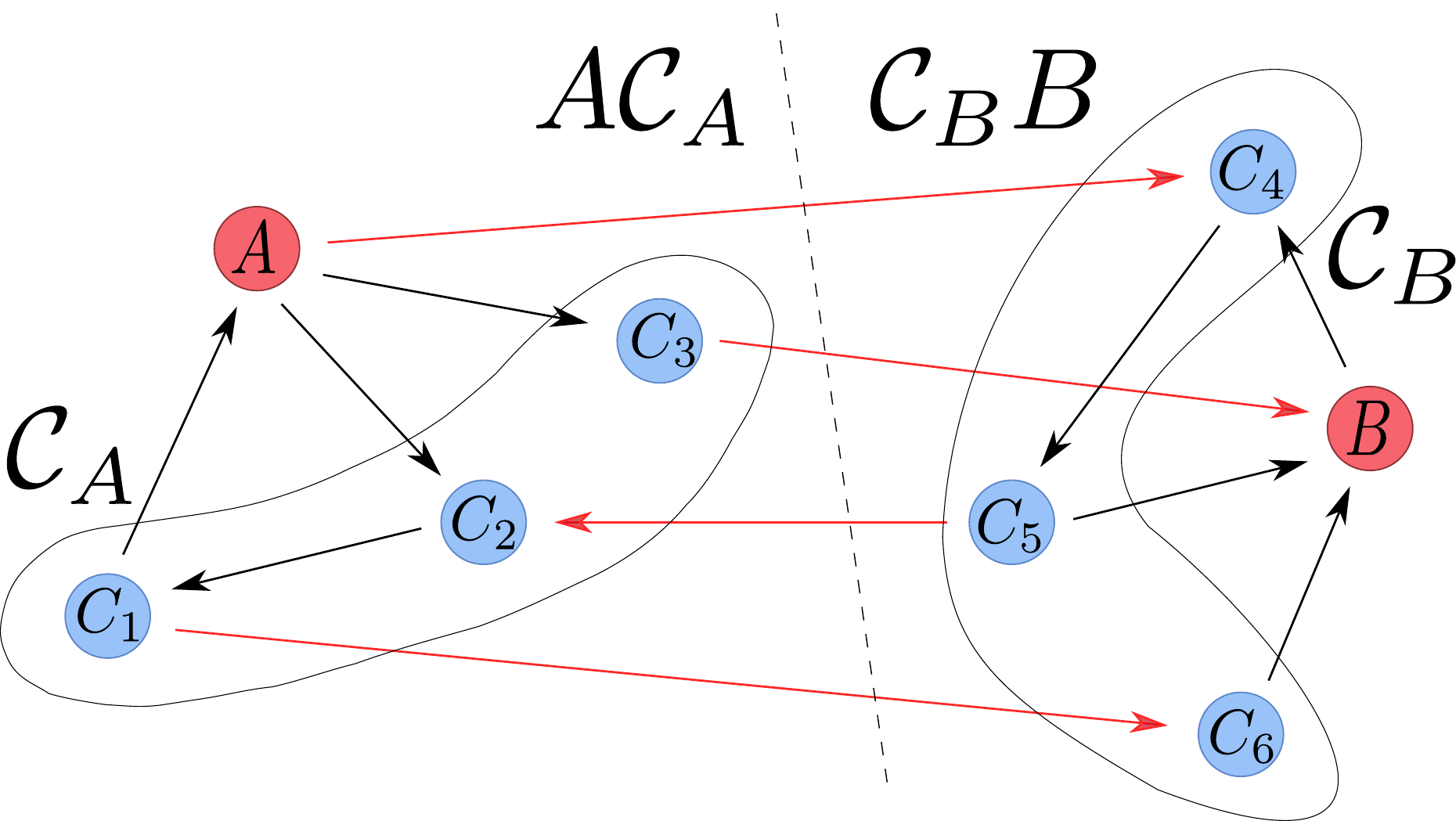}}
	\caption{
		(a) Single-channel communication scenario, where $A$ and $B$ are connected through the channel $\mathcal{N}_{A\to B}$.
		(b) An example of quantum network, with $M=6$ additional nodes. Every arrow corresponds to a quantum channel. The  bipartition $\mathcal C_A = \{C_1,C_2,C_3\}$, $\mathcal C_B = \{C_4,C_5,C_6\}$ is shown as an example, and the channels $\mathcal N^{(l)}$ with $l \in L_{\mathcal C_A}$, which connect the two partitions, are coloured in red. 
	}
\end{figure}

In the following we will often make use of the notion of ``bipartition'' of a quantum network. This is defined by dividing the nodes $\{C_i\}_i$ into two disjoint sets: $\mathcal C_A \subset \{C_i\}_i$ and $\mathcal C_B = \{C_i\}_i \setminus \mathcal C_A$. Once a bipartition has been chosen, the set of edges connecting the nodes in $\{A\}\cup \mathcal C_A$ with those in $\mathcal C_B \cup \{B\}$, or vice versa, will be labelled as $L_{\mathcal C_A} \subset L$. Moreover, in order to keep our notation simple, in the remainder of this paper we will refer to the subsets of nodes $\{A\}\cup \mathcal C_A$ and $\mathcal C_B \cup \{B\}$ by writing respectively $A \mathcal C_A$ and $\mathcal C_B B$.

\subsection{Adaptive strategy over quantum networks} \label{sec: adaptive strategy}

We assume that full quantum control over the local systems is available on each node of the network, and that all parties can freely exchange classical information at any stage of the protocol in order to coordinate their strategy. Moreover, we also assume that every node in the network will collaborate with Alice and Bob in order to allow them to achieve their goal.
At the beginning of the most general adaptive communication protocol, the parties initialise their systems in a separable state $\rho^{(1)}_{ABC_1\ldots C_M}$. Then, they iteratively exchange (part of) their systems via the quantum channels, and perform LOCCs on the obtained states, which may involve measurements. For this reason, every choice made by the parties at a certain stage of the protocol may depend on all previously obtained LOCC outcomes. In the remainder of this section we formally describe any protocol of this kind, similarly to what has been done in Refs. \cite{Azuma_16,Pirandola_Network_2016,Baeuml_16}. For the sake of simplicity, we will drop the subscript $ABC_1\ldots C_M$ from states spread over the whole network.

Between any two channel uses several LOCC may be performed, but we can group them into a single ``round of LOCCs'' yielding an overall multi-index outcome $k$. In this way, a single ``round of the protocol'' will be composed by the application of a channel followed by a round of LOCCs.
Let us group within the vector ${\bk}_i = (k_0,k_1, \ldots, k_{i-1},k_i)$ the sequence of LOCC outcomes obtained in the first $i$ rounds, with $k_0 \equiv 1$ added for convenience. 
In this way, the $i$-th round of the protocol receives as input $\rho^{\bk_{i-1}}$ and transforms it into $\rho^{\bk_{i}}$ via the following two steps.
\begin{itemize}
	\item Depending on the previous LOCC outcomes, grouped within $\bk_{i-1}$, the parties may use the channel $\mathcal N^{(l_{\bk_{i-1}})}$ to transmit a quantum state along the edge $l_{\bk_{i-1}} \in L$ of the graph $G$ characterising the network. The global state at the end of this step is labeled by $\tilde\rho^{\bk_{i-1}}$;
	\item A round of LOCCs $\Lambda^{(\bk_{i-1})}$ is performed on $\tilde\rho^{\bk_{i-1}}$, with output $k_i$ obtained with probability $p(k_i|\bk_{i-1})$. The output quantum state $\rho^{\bk_i}$ will be used as input for the following round of the protocol.
\end{itemize} 
When the protocol stops, say after $n$ rounds, the final state $\rho^{\bk_n}_{AB} = \TrS{C_1,\dots,C_M}{\rho^{\bk_n}}$ shared by Alice and Bob
has to be $\epsilon$-close in trace distance to an ideal target state $\phi_{AB}(d_{\bk_n})$, i.e., such that for any sequence of outcomes $\bk_n$ one has
\begin{equation}\label{eq: epsilon close}
\Vert\rho^{\bk_n}_{AB}-\phi_{AB}(d_{\bk_n})\Vert_1 = \epsilon,
\end{equation}
where $\Vert O\Vert_1 \equiv \text{Tr}\left[\sqrt{O^\dagger O}\right]$.
The target state $\phi_{AB}(d_{\bk_n})$ can either be a maximally entangled state $\psi_{AB}(d_{\bk_n})$ [see \Eq{def: max ent state}] or a private state $\gamma_{AB}(d_{\bk_n})$ [see \Eq{def: private state}], depending on the task of Alice and Bob. 

All the details of the adaptive strategy leading to \Eq{eq: epsilon close} are determined by the protocol $\mathcal P_{\epsilon,n}$ that the parties are following. These details include the error threshold $\epsilon$, the maximum number of rounds $n$, the target states $\phi_{AB}(d_{\bk_n})$, and the set of rules that, at any round of the protocol, map the vectors of previous outcomes $\{\bk_i\}_{i=0}^{n-1}$ to the channel and LOCC operations used in the following. In the remainder of this paper we will often have to average some function $F(\bk_n)$ over all possible LOCC outcomes $\{\bk_n\}$. It is thus convenient to introduce the shorthand notation
\begin{equation}
\av{F}_{\mathcal P_{\epsilon,n}} \equiv \sum_{\bk_n} p(\bk_n) F(\bk_n),
\end{equation}
where $p(\bk_n)$ is the probability of obtaining this particular sequence of LOCC outcomes according to the protocol $\mathcal P_{\epsilon,n}$.

We point out that the number of channels used in the protocol will generally be smaller than $n$. This is because in any round the parties \emph{may} decide to use a channel of the network, but are not forced to do so.
However, without loss of generality we can assume that a channel is used in any round of the protocol up to a certain point, after which the parties can only perform LOCCs and the communication protocol is effectively aborted.
Indeed, if this were not the case, we could recover this situation simply by merging all the LOCCs performed between two channel uses into a single round of LOCCs. Notice that depending on the LOCC outcomes already obtained, the parties can decide to effectively abort the communication after different numbers of channel uses.
In particular, for any edge $l \in L$ and vector $\bk_n$, we can define as $m^{(l)}(\bk_n)$ the total number of times channel $\mathcal N^{(l)}$ has been used in that particular realisation of outcomes. Formally, this can be written as
\begin{equation}\label{def: m for a channel}
m^{(l)}(\bk_n) = \sum_{i=0}^{n-1} \delta_{l,l_{\bk_i}},
\end{equation}
where the symbol $\delta$ represents the Kronecker delta, while the total number of channel uses is
\begin{equation}
m(\bk_n) = \sum_{l \in L} m^{(l)}(\bk_n).
\end{equation}
A value of $m(\bk_n)$ strictly smaller than $n$ means that the protocol has been effectively interrupted after $m(\bk_n)$ rounds.

\subsection{Quantifying the performance of a communication protocol}
The quality of a point-to-point adaptive communication protocol $\mathcal P_{\epsilon,n}$ can be quantified by its ability to produce a large number of shared ebits (or pbits) between  Alice and Bob. For any realisation $\bk_n$ of LOCC outcomes, this corresponds to the logarithm of the dimension $d_{\bk_n}$ that characterises the target state $\phi_{AB}(d_{\bk_n})$, $\epsilon$-close to the final state $\rho^{\bk_n}_{AB}$ produced by the protocol. Therefore, a good figure of merit for $\mathcal P_{\epsilon,n}$ can be obtained by averaging this quantity over all LOCC outcomes. In our notation, this can be written as $\av{\log_2 d}_{\mathcal P_{\epsilon,n}}$. 

This approach is particularly suitable to characterise the performance of protocols that use the channels of the network a finite number of times, because it directly provides the length of ebits (pbits) that Alice and Bob can expect to share at the end of the communication. 
However, the quantity $\av{\log_2 d}_{\mathcal P_{\epsilon,n}}$ becomes unbounded when the asymptotic limit of infinitely many channel uses is considered.
In a single-channel scenario, this issue has been traditionally addressed by considering the communication rate, i.e., the number of bits produced per channel use. We should point out that in this case one does not typically consider the possibility of interrupting the protocol depending on previous LOCC outcomes. This is because otherwise with non-zero probability the asymptotic regime of infinitely many channel uses would not be reached. For this reason  only protocols which use the quantum channel after \emph{every} round of LOCC are normally considered when assessing its asymptotic performance. In this paper, a protocol of this kind will be labelled as $\tilde{\mathcal P}_{\epsilon,N}$, where $\epsilon$ represents the error threshold and $N$ is the fixed number of channel uses. With this notation, the quantum (or private) capacity of a quantum channel $\mathcal N$ assisted by two-way classical communication can be obtained as the limit
\begin{equation}\label{def: capacity}
C(\mathcal N) = \lim_{\epsilon\to 0} \lim_{N\to \infty} \sup_{\tilde{\mathcal P}_{\epsilon,N}} \frac{\av{\log_2 d}_{\tilde{\mathcal P}_{\epsilon,N}}}{N}.
\end{equation}

For a generic quantum network the situation is more involved, and in the literature one can find multiple ways of assessing its communication performance in the asymptotic limit. For example, one can fix the frequency with which each channel is used, and divide $\av{\log_2 d}_{\mathcal P_{\epsilon,n}}$ by the total number of channel uses \cite{Azuma&Kato_2016}. Other options, proposed in Ref.~\cite{Pirandola_Network_2016}, consist in using each ``path'' connecting Alice and Bob with a certain probability, or in using each channel of the network exactly once. Then, the number of produced ebits (pbits) is respectively divided by the number of paths used, or by the total number of times the network has been accessed.
Although the details of characterising the considered figure of merit can change on a case-by-case basis, one typically has to optimise $\av{\log_2 d}_{\mathcal P_{\epsilon,n}}$ over a chosen class of protocols, and divide it by a quantity that counts how many times a basic operation has been repeated. 
 
Similar to Refs. \cite{Azuma_16,Azuma&Kato_2016}, in the following we are able to provide an upper bound on $\av{\log_2 d}_{\mathcal P_{\epsilon,n}}$ for a generic adaptive protocol running on a quantum network with graph $G$. This bound only depends on the maximum amount of entanglement that could be generated by the quantum channels composing the network, and on the number of times each channel has been used. 
From the previous discussion, it should be clear that our bound can be easily converted to a bound on a broad class of figures of merit, which could be chosen to quantify the performance of the network. For example, in the case of a single channel, our bound can be connected to an upper bound on the capacity by means of \Eq{def: capacity}.

\section{Entanglement-based upper bounds} \label{sec: abastract approach}
As mentioned in the introduction, recently several authors provided bounds on the number of ebits (pbits) shared by two parties at the end of a point-to-point communication protocol assisted by two-way classical communication. Some studies deal with the capacity of a single quantum channel \cite{TGW_nature_14,TGW_IEEE_14,Christandl_16,Pirandola_15,Wilde_2017}, whereas others consider quantum networks with arbitrary topology \cite{Azuma_16,Azuma&Kato_2016,Pirandola_Network_2016}. 
However, they all share some common features. Here we identify these, and show how they can lead to known, new, or yet to be discovered communication bounds.

We start by considering a single channel $\mathcal N$ and a generic entanglement measure $E$, and we formally summarise in Theorem \ref{thm: key hypothesis} some important properties that have been used in the past in order to obtain upper bounds on the channel capacity. One advantage of this abstract formulation is that it can ease the process of identifying all the entanglement measures which can be used to bound the capacity of a given channel. By comparing all these bounds, it would then be possible to select the one with the minimum value, which represents the best known upper bound on the capacity $C(\mathcal N)$.
A second advantage of our abstract approach lies in the possibility of easily extending previous results on quantum networks to other entanglement measures, not explicitly studied in the original papers. This is because the same properties responsible for the upper bound on the capacity of a single channel are also the main ingredients used in Ref.~\cite{Azuma_16} to derive an upper bound for the number of shared ebits (pbits) produced by a quantum network.
In this way, we are able to show that the same bound of Ref.~\cite{Azuma_16}, originally expressed in terms of the squashed entanglement, is also valid for other entanglement measures: $E_{\rm{max}}$ and $E_{\text{R}}$, although the applicability of the latter is restricted to networks composed by Choi-simulable channels. This original contribution will be summarised as Theorem \ref{thm: ent-based network bound}.
 
Having multiple upper bounds on the communication performance of a quantum network, based on different entanglement measures, there is the possibility of combining them together in order to obtain a bound as tight as possible.
An obvious option consists in evaluating each upper bound separately, and then selecting the one which yields the minimum value. However, it is possible to do better than this, and in Sec.~\ref{sec: hybrid bound} we show how the bounds based on $E_{\text{max}}$ and $E_{\text{R}}$ can be joined together to form a single tighter bound.

\subsection{General Framework}
We start by discussing the case of a single channel $\mathcal N$, and then we move to the more general situation of a quantum network with arbitrary topology. The proofs for the theorems presented here can be found at the end of the section. 

All measures of entanglement $E$ known to yield a bound on the number of ebits (pbits) generated by a communication protocol satisfy the following property:
\begin{enumerate}
	\item[P1.] If a target state $\phi_{AB}(d)$ is $\epsilon$-close to a quantum state $\rho_{AB}$, i.e., if $\Vert\rho_{AB} - \phi_{AB}(d)\Vert_1 = \epsilon$,  then there exist two real functions $f_E$ and $g_E$, with $\lim_{\epsilon \to 0} g_E(\epsilon) = 1$ and $\lim_{\epsilon \to 0} f_E(\epsilon) = 0$, such that 
\begin{equation}\label{def: P1}
E(\rho_{AB} ) \geq g_E(\epsilon) \log_2 d - f_E(\epsilon).
\end{equation}
\end{enumerate}
For a maximally entangled target state, this property can be easily proven for every \emph{asymptotically continuous} \cite{SynakRadtke_06} measure $E$. On the contrary, more effort is usually required to prove it for private target states. The reason for this is that the quantity $d$ appearing on the right-hand side of \Eq{def: P1} needs to be the dimension of the key systems, rather than the dimension of the whole key-shield systems.  
Nonetheless, property $P1$ has been proven for $E_{\text{sq}}$ \cite{Wilde_16} and $E_{\text{R}}$ \cite{Horodecki_09,Wilde_2017,Pirandola_15}. It can also be easily proven for $E_{\text{max}}$, by slightly varying the proof of Lemma IV.2 in Ref.~\cite{Christandl_16} in order to obtain \Eq{def: P1} with 
\begin{equation} \label{eq: fg Emax}
g_{E_{\text{max}}} = 1, \qquad \text{and}\qquad f_{E_{\text{max}}} = - 2 \log_2(1-\epsilon/2).
\end{equation}
Another important property of a pair concerns the relation between the amount of entanglement in the input and output states of the channel $\mathcal N$, as measured by the entanglement measure $E$. A pair $(E,\mathcal N)$ is said to satisfy property $\rm{P2}$ if for all states $\rho_{AA^\prime B}$ one has
\begin{enumerate}
\item[P2.] $\tilde \rho_{AB^\prime B} = \mathcal N_{A^\prime \to B^\prime}(\rho_{AA^\prime B}) \quad\Longrightarrow\quad  E(\tilde \rho_{A B^\prime B})\leq E(\mathcal N) + E(\rho_{AA^\prime B})$,
\end{enumerate}
where $E(\mathcal N)$ is the maximum entanglement shared through a single use of the channel [see \Eq{def: E entanglement of channel}].
This is known to hold for any quantum channel when $E = E_{\text{sq}}$ \cite{TGW_nature_14,TGW_IEEE_14}, and for any Choi-simulable channel when $E = E_{\text{R}}$ [see \Eq{eq: ER bound}]. 
Moreover, property $P2$ has been recently shown for the max-relative entropy of entanglement for any channel acting on finite-dimensional system, but it is conjectured to hold even without this assumption \cite{Christandl_16}.

In order to ease the connection with quantum networks, we provide a bound on $\av{\log_2 d}_{\mathcal P_{\epsilon,n}}$ also in the single-channel scenario, from which the usual bound on the capacity can be recovered as a corollary by using the definition in \Eq{def: capacity}. Furthermore, we can also provide conditions sufficient to prove the strong converse property of an upper bound on the channel capacity. In particular, Corollary \ref{cor: capacity bound} can be used together with \Eq{eq: fg Emax} in order to show that $E_{\text{max}}$ provides a strong converse bound on the capacity of a single channel, as originally proven in Ref.~\cite{Christandl_16}.

\begin{thm} \label{thm: key hypothesis}
	 If $E$ and $\mathcal N$ satisfy properties $P1$ and $P2$, the average number of ebits (pbits) generated by an adaptive protocol $\mathcal P_{\epsilon,n}$ assisted by two-way classical communication can be upper bounded as
	 \begin{equation}\label{eq26}
	 \av{\log_2 d}_{\mathcal P_{\epsilon,n}} \leq  
	 \frac{1}{g_E(\epsilon)} \left[f_E(\epsilon)  +  \av{m}_{\mathcal P_{\epsilon,n}} E\left(\mathcal N\right)\right],
	 \end{equation}
	 where $\av{m}_{\mathcal P_{\epsilon,n}}$ is the average number of times the channel has been used.
\end{thm}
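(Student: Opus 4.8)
The plan is to track, as a potential, the average entanglement of the global state across the Alice:Bob cut as the protocol proceeds, and to show that it (i) starts at zero, (ii) increases by at most $E(\mathcal N)$ each time the channel is used and never increases under the interspersed LOCCs, and (iii) at the very end controls $\av{\log_2 d}_{\mathcal P_{\epsilon,n}}$ through property $P1$. The whole argument is a one-round recursion for this potential, followed by a telescoping sum and a final application of $P1$.

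First I would define the round-$i$ average entanglement $\mathcal E_i \equiv \sum_{\bk_i} p(\bk_i)\,E(\rho^{\bk_i})$ across the Alice:Bob cut, noting that $\mathcal E_0 = 0$ since the initial state is separable and $E$ vanishes on separable states. Each round splits into a channel step and an LOCC step, and I would bound the effect of each. Writing $u(\bk_{i-1})\in\{0,1\}$ for the indicator that $\mathcal N$ is used in round $i$ given the history $\bk_{i-1}$, the channel step leaves the state unchanged when $u=0$ and, when $u=1$, falls under property $P2$ with the transmitted subsystem playing the role of the $A'\!\to\! B'$ leg of $\mathcal N$; in either case
\begin{equation}
E(\tilde\rho^{\bk_{i-1}}) \leq E(\rho^{\bk_{i-1}}) + u(\bk_{i-1})\,E(\mathcal N).
\end{equation}
The LOCC step, averaged over its outcomes $k_i$ with weights $p(k_i|\bk_{i-1})$, can only decrease the mean entanglement by \Eq{eq: LOCC monotonicity}, i.e. $\sum_{k_i} p(k_i|\bk_{i-1})\,E(\rho^{\bk_i}) \leq E(\tilde\rho^{\bk_{i-1}})$. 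Weighting by $p(\bk_{i-1})$, summing over $\bk_{i-1}$, and using $p(\bk_i)=p(\bk_{i-1})\,p(k_i|\bk_{i-1})$ combines these into the recursion $\mathcal E_i \leq \mathcal E_{i-1} + E(\mathcal N)\sum_{\bk_{i-1}} p(\bk_{i-1})\,u(\bk_{i-1})$.

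Then I would telescope from $i=1$ to $n$. Since $\sum_{i=1}^n \sum_{\bk_{i-1}} p(\bk_{i-1})\,u(\bk_{i-1})$ is precisely the probability-weighted total number of channel uses, i.e. $\av{m}_{\mathcal P_{\epsilon,n}}$ by \Eq{def: m for a channel} together with the marginal consistency $p(\bk_{i-1}) = \sum_{k_i,\dots,k_n} p(\bk_n)$, this yields $\mathcal E_n \leq E(\mathcal N)\,\av{m}_{\mathcal P_{\epsilon,n}}$. For the lower bound, in the single-channel setting the global state is already the $A\!:\!B$ state $\rho^{\bk_n}_{AB}$, which is $\epsilon$-close to $\phi_{AB}(d_{\bk_n})$, so $P1$ applied term by term gives $E(\rho^{\bk_n}_{AB}) \geq g_E(\epsilon)\log_2 d_{\bk_n} - f_E(\epsilon)$, and averaging produces $\mathcal E_n \geq g_E(\epsilon)\av{\log_2 d}_{\mathcal P_{\epsilon,n}} - f_E(\epsilon)$. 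Chaining the two bounds on $\mathcal E_n$ and dividing by $g_E(\epsilon)$, which is positive for $\epsilon$ small enough because $g_E(\epsilon)\to 1$, gives exactly \Eq{eq26}.

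I expect the only real subtlety to be the bookkeeping in the averaging step rather than the inequalities themselves. One must check that $P2$ is legitimately applicable at every channel use (the subsystem sent is exactly the $A'\!\to\! B'$ leg, with the untouched systems playing the role of the reference $A$ and $B$, so that the relevant bipartition is correctly updated across the cut), and that the probability-weighted count of the channel-use indicators reassembles into $\av{m}_{\mathcal P_{\epsilon,n}}$ and not into $n$, since rounds in which no channel is used contribute only a (non-increasing) LOCC. Once $P1$, $P2$, and the LOCC monotonicity \Eq{eq: LOCC monotonicity} are invoked at the right places, the estimates are immediate.
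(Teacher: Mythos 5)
Your proof is correct and matches the paper's argument in substance: the paper obtains Theorem~\ref{thm: key hypothesis} as the single-channel specialization of Theorem~\ref{thm: ent-based network bound}, whose proof is exactly your recursion --- LOCC monotonicity via \Eq{eq: LOCC monotonicity} plus property $P2$ at each round (with rounds where no channel is used contributing nothing), telescoping from the separable initial state, and property $P1$ applied to the final state. Your bookkeeping step, reassembling the probability-weighted channel-use indicators into $\av{m}_{\mathcal P_{\epsilon,n}}$ rather than $n$, is likewise how the paper handles it through \Eq{def: m for a channel}.
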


\begin{cor} \label{cor: capacity bound}
	If $E$ and $\mathcal N$ satisfy properties $P1$ and $P2$, the capacity of $\mathcal N$ assisted by two-way classical communication  can be upper bounded as
	\begin{equation}\label{res: single-channel capacity bound}
	C(\mathcal N) \leq E(\mathcal N).
	\end{equation}
Furthermore, if $g_E(\epsilon)=1$ and $f_E(\epsilon)=c\log_2\frac{1}{1-\epsilon/2}$, for $c>0$, this is a strong converse bound.
\end{cor}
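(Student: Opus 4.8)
The plan is to derive both claims directly from Theorem~\ref{thm: key hypothesis}, specialised to the protocols $\tilde{\mathcal P}_{\epsilon,N}$ that appear in the definition of the capacity, \Eq{def: capacity}. The essential remark is that such protocols use the channel after \emph{every} round of LOCC, so the number of channel uses is deterministically $N$ and therefore $\av{m}_{\tilde{\mathcal P}_{\epsilon,N}} = N$. Inserting this into \Eq{eq26} gives
\[
\av{\log_2 d}_{\tilde{\mathcal P}_{\epsilon,N}} \leq \frac{1}{g_E(\epsilon)}\left[f_E(\epsilon) + N\, E(\mathcal N)\right],
\]
which is the single inequality from which everything follows.

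For the capacity bound I would divide through by $N$. Because the right-hand side is independent of the particular protocol, the supremum over all $\tilde{\mathcal P}_{\epsilon,N}$ leaves it unchanged, so
\[
\sup_{\tilde{\mathcal P}_{\epsilon,N}} \frac{\av{\log_2 d}_{\tilde{\mathcal P}_{\epsilon,N}}}{N} \leq \frac{1}{g_E(\epsilon)}\left[\frac{f_E(\epsilon)}{N} + E(\mathcal N)\right].
\]
I would then take the limits in the order prescribed by \Eq{def: capacity}: at fixed $\epsilon>0$ the term $f_E(\epsilon)/N$ vanishes as $N\to\infty$ since $f_E(\epsilon)$ is finite, and letting $\epsilon\to 0$ sends $g_E(\epsilon)\to 1$ by property $P1$. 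This yields $C(\mathcal N)\leq E(\mathcal N)$.

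For the strong converse I would substitute the explicit forms $g_E(\epsilon)=1$ and $f_E(\epsilon)=c\log_2\frac{1}{1-\epsilon/2}$ and rewrite the bound in terms of the rate $R = \av{\log_2 d}_{\tilde{\mathcal P}_{\epsilon,N}}/N$, obtaining $N(R-E(\mathcal N)) \leq c\log_2\frac{1}{1-\epsilon/2}$. For any fixed rate $R>E(\mathcal N)$ this rearranges to
\[
1 - \frac{\epsilon}{2} \leq 2^{-N(R-E(\mathcal N))/c}.
\]
Since $1-\epsilon/2$ is the success (fidelity-type) parameter, the inequality shows it is forced to decay exponentially to zero, i.e.\ $\epsilon\to 2$ towards its maximal value, as $N\to\infty$. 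Exceeding the threshold rate $E(\mathcal N)$ therefore drives the protocol's performance to the worst possible value, which is exactly the strong-converse property.

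Once Theorem~\ref{thm: key hypothesis} is available the argument is essentially bookkeeping, so I do not anticipate a real technical obstacle. The two points needing care are justifying $\av{m}=N$ from the structure of $\tilde{\mathcal P}_{\epsilon,N}$, and, in the strong-converse part, correctly identifying $1-\epsilon/2$ as the relevant success quantity while accounting for the $\Vert\cdot\Vert_1$ convention used here, in which the trace distance ranges up to $2$ rather than $1$.
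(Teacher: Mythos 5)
Your proposal is correct and follows essentially the same route as the paper: the capacity bound is obtained by inserting $\av{m}_{\tilde{\mathcal P}_{\epsilon,N}} = N$ into Theorem~\ref{thm: key hypothesis}, dividing by $N$, and taking the limits in the order prescribed by \Eq{def: capacity}, while the strong converse follows from the identical rearrangement $\epsilon/2 \geq 1-2^{-\frac{1}{c}[\av{\log_2 d}_{\tilde{\mathcal P}_{\epsilon,N}}-N E(\mathcal{N})]}$ used in the paper. Your closing remark about the $\Vert\cdot\Vert_1$ convention (so that $\epsilon/2\in[0,1]$ is the normalised error tending to its maximum) is exactly the point the paper makes when expressing the bound in terms of $\frac{1}{2}\Vert\rho_{AB}-\phi_{AB}(d)\Vert_1$.
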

\begin{proof}[Proof of Corollary \ref{cor: capacity bound}]
	By definition of capacity [see \Eq{def: capacity}], only protocols using the channel a fixed number of times should be considered. Equation \eqref{res: single-channel capacity bound} is thus a straightforward consequence of  $\av{m}_{\tilde{\mathcal P}_{\epsilon,N}} = N$, $\lim_{\epsilon \to 0} g_E(\epsilon) = 1$, and $\lim_{\epsilon \to 0} f_E(\epsilon) = 0$. In order to see the strong converse property, we need to express \Eq{eq26} in terms of the error $\frac{1}{2}\Vert\rho_{AB} - \phi_{AB}(d)\Vert_1 = \epsilon/2 \in [0,1]$. Namely
\begin{equation}
\epsilon/2\geq1-2^{-\frac{1}{c}\big[\av{\log_2 d}_{\tilde{\mathcal P}_{\epsilon,N}}-N E(\mathcal{N})\big]},	
\end{equation}
which tends to $1$ exponentially fast in the number $N$ of channel uses as the rate $\frac{1}{N}\av{\log_2 d}_{\tilde{\mathcal P}_{\epsilon,N}}$ exceeds $E(\mathcal{N})$.
\end{proof}


As can be expected, a bipartite situation $A:B$ is easier to study than a scenario in which Alice and Bob need to cooperate with other nodes $\{C_i\}_{i=1}^N$ in the network in order to achieve their communication task. Building on this intuition, the authors of Refs. \cite{Azuma_16,Pirandola_Network_2016} derived upper bounds on network capacities by considering a bipartition $A\mathcal C_A:B\mathcal C_B$, and by extending the regions controlled by Alice and Bob so as to include in them also the remaining nodes on their side. Intuitively, an upper bound can be obtained in this manner because the achievable communication rate between the ``extended'' parties has to be  larger than the one achievable by the real $A$ and $B$. In this framework, any given bipartition $\{\mathcal C_A, \mathcal C_B\}$ of the network leads to a different upper bound, in which only the channels corresponding to the edges in $L_{\mathcal C_A}$ contribute. 
Although the proof that led to the result in Ref.~\cite{Azuma_16} is based on a particular choice of entanglement measure, we can see how the same reasoning applies to any entanglement measure satisfying properties $P1$ and $P2$ for any channel connecting the two network partitions. This is the result of the next theorem.

\begin{thm} \label{thm: ent-based network bound}
Consider a quantum network with an associated directed graph G. For a given bipartition $\{\mathcal C_A,\mathcal C_B\}$ of the network nodes $\{C_i\}_i$, let $L_{\mathcal C_A} \subset L$ be the set of edges in G that connect a node in $A\mathcal C_A$ with one in $\mathcal C_B B$. The average number of ebits (or pbits) that Alice and Bob share at the end of a given adaptive protocol $\mathcal P_{\epsilon,n}$, assisted by unlimited classical communication, can be upper bounded as
\begin{equation}\label{eq: generic network bound for bits}
\av{\log_2 d}_{\mathcal P_{\epsilon,n}}  \leq   
\frac{1}{g_E(\epsilon)} \left[f_E(\epsilon) + 
{\mathcal E}_E(\mathcal P_{\epsilon,n},\mathcal C_A)
\right],
\end{equation}
where
\begin{equation}\label{def: U}
{\mathcal E}_E(\mathcal P_{\epsilon,n},\mathcal C_A) \equiv \sum_{l \in L_{\mathcal C_A}} \av{m^{(l)}}_{\mathcal P_{\epsilon,n}} E\left(\mathcal N^{(l)}\right),
\end{equation}
for any entanglement measure $E$ satisfying hypotheses $P1$ and $P2$ for any channel $\mathcal N^{(l)}$ with $l\in L_{\mathcal C_A}$.
\end{thm}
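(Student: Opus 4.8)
The plan is to follow the same accounting as in the single-channel Theorem~\ref{thm: key hypothesis} after coarse-graining the network into two super-parties, $A\mathcal C_A$ and $\mathcal C_B B$, and to track the entanglement $E$ of the \emph{global} state $\rho^{\bk_i}$ across the bipartition $A\mathcal C_A:\mathcal C_B B$ as the protocol proceeds. The point of this grouping is that each round of LOCCs $\Lambda^{(\bk_{i-1})}$, being an LOCC among all nodes, is in particular an LOCC across this coarser cut, so by \Eq{eq: LOCC monotonicity} it cannot increase $E$ on average; moreover, any channel whose edge lies \emph{inside} one region, $l\notin L_{\mathcal C_A}$, acts as a local quantum operation for one super-party and likewise cannot raise $E$. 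Only the crossing channels $l\in L_{\mathcal C_A}$ can contribute, and for each of them the hypothesis supplies property P2.

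First I would write the per-round recursion for $\mathcal E_i\equiv\sum_{\bk_i}p(\bk_i)\,E(\rho^{\bk_i})$. In the channel step of round $i$, property P2 (applied with the two super-parties playing the roles of Alice and Bob, in whichever order matches the edge's orientation, which is legitimate because $E$ is symmetric under interchange of the parties and $E(\mathcal N^{(l)})$ is intrinsic to the channel) gives $E(\tilde\rho^{\bk_{i-1}})\leq E(\rho^{\bk_{i-1}})+\sum_{l\in L_{\mathcal C_A}}\delta_{l,l_{\bk_{i-1}}}E(\mathcal N^{(l)})$, the extra term being absent when $l_{\bk_{i-1}}\notin L_{\mathcal C_A}$. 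The subsequent LOCC step contributes $\sum_{k_i}p(k_i|\bk_{i-1})E(\rho^{\bk_i})\leq E(\tilde\rho^{\bk_{i-1}})$ via \Eq{eq: LOCC monotonicity}. Multiplying by $p(\bk_{i-1})$, summing over $\bk_{i-1}$, and using $p(\bk_{i-1})p(k_i|\bk_{i-1})=p(\bk_i)$, these combine into
\begin{equation}
\mathcal E_i\leq\mathcal E_{i-1}+\sum_{\bk_{i-1}}p(\bk_{i-1})\sum_{l\in L_{\mathcal C_A}}\delta_{l,l_{\bk_{i-1}}}\,E(\mathcal N^{(l)}).
\end{equation}

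Next I would telescope from $i=1$ to $n$, using that the initial state $\rho^{\bk_0}=\rho^{(1)}$ is separable so $\mathcal E_0=0$. Reindexing $j=i-1$ and noting that $\delta_{l,l_{\bk_j}}$ depends only on $\bk_j$, the accumulated increments marginalize to $\sum_{j=0}^{n-1}\sum_{\bk_j}p(\bk_j)\delta_{l,l_{\bk_j}}=\av{m^{(l)}}_{\mathcal P_{\epsilon,n}}$ by the definition \Eq{def: m for a channel}, yielding $\mathcal E_n\leq\sum_{l\in L_{\mathcal C_A}}\av{m^{(l)}}_{\mathcal P_{\epsilon,n}}E(\mathcal N^{(l)})=\mathcal E_E(\mathcal P_{\epsilon,n},\mathcal C_A)$. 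To close the argument I would relate this tracked global entanglement to the target figure of merit: discarding $\mathcal C_A$ and $\mathcal C_B$ to obtain $\rho^{\bk_n}_{AB}=\TrS{C_1,\dots,C_M}{\rho^{\bk_n}}$ is itself an LOCC across the coarse cut, so \Eq{eq: LOCC monotonicity} gives $E(\rho^{\bk_n}_{AB})\leq E(\rho^{\bk_n})$; and since $\rho^{\bk_n}_{AB}$ is $\epsilon$-close to $\phi_{AB}(d_{\bk_n})$, property P1 gives $g_E(\epsilon)\log_2 d_{\bk_n}-f_E(\epsilon)\leq E(\rho^{\bk_n}_{AB})$. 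Averaging over $\bk_n$, chaining with $\mathcal E_n\leq\mathcal E_E(\mathcal P_{\epsilon,n},\mathcal C_A)$, and dividing by $g_E(\epsilon)$ produces \Eq{eq: generic network bound for bits}.

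I expect the main delicate points to be conceptual rather than computational: first, justifying that P2 may be invoked across the coarse bipartition for every crossing channel irrespective of its direction, which rests on the party-swap symmetry of $E$ and on $E(\mathcal N^{(l)})$ being channel-intrinsic; and second, the probabilistic bookkeeping that converts the telescoped increments into the averaged counts $\av{m^{(l)}}_{\mathcal P_{\epsilon,n}}$, which works precisely because each $\delta_{l,l_{\bk_j}}$ is a function of $\bk_j$ alone, so that the marginalization of $p(\bk_n)$ down to $p(\bk_j)$ is exact.
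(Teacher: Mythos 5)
Your proposal is correct and follows essentially the same route as the paper's own proof: track $E$ across the coarse cut $A\mathcal C_A:\mathcal C_B B$, apply property P2 at each channel step and LOCC monotonicity at each LOCC step, telescope down to the separable initial state so that the increments marginalize to $\sum_{l\in L_{\mathcal C_A}}\av{m^{(l)}}_{\mathcal P_{\epsilon,n}} E(\mathcal N^{(l)})$, and close with property P1 plus monotonicity under partial trace. The only differences are presentational (the paper states the P1 step first and unwinds the recursion backwards, and it leaves implicit the remarks you spell out about internal channels acting as local operations and about edge orientation).
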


At this point we can make a few comments on this bound. In virtue of Theorem~\ref{thm: key hypothesis}, we point out that the entanglement of the channel $\mathcal N^{(l)}$ has to be larger than the single-channel capacity $C(\mathcal N)$. Therefore, the gap between the two sides of \Eq{eq: generic network bound for bits} is reduced if a certain measure of entanglement can better approximate the capacity of the channels in $L_{\mathcal C_A}$.  
Furthermore, among the known entanglement measures satisfying $P1$ and $P2$ for any channel $\mathcal N^{(l)}$ with $l\in L_{\mathcal C_A}$, the best choice is to choose the one minimising ${\mathcal E}_E(\mathcal P_{\epsilon,n},\mathcal C_A)$.
If we label by $E|_{\mathcal C_A}$ the set of entanglement measures satisfying properties $P1$ and $P2$ for the channels connecting the two partitions, the following bound can be obtained: 
\begin{equation}\label{eq: network bound}
\av{\log_2 d}_{\mathcal P_{\epsilon,n}}  \leq \min_{\mathcal C_A} \min_{E|_{\mathcal C_A}} \frac{1}{g_E(\epsilon)} \left[f_E(\epsilon) + 
{\mathcal E}_E(\mathcal P_{\epsilon,n},\mathcal C_A)
\right],
\end{equation}
where we also optimised over all possible choices for $\mathcal C_A$.

\subsection{Proofs for Theorems \ref{thm: key hypothesis} and \ref{thm: ent-based network bound}}
Any single channel can be interpreted as a simple quantum network, hence we first show how Theorem~\ref{thm: key hypothesis} can be derived from Theorem~\ref{thm: ent-based network bound}, and then we prove the latter.
The ideas that will be used for these proofs are basically the same as those used in Refs.~\cite{Azuma_16,Azuma&Kato_2016,Pirandola_15,Pirandola_Network_2016,Christandl_16,Wilde_2017,TGW_nature_14,TGW_IEEE_14}.

\begin{proof}[Proof of Theorem \ref{thm: key hypothesis}]
For a single-channel scenario, the only possible bipartition $A \mathcal C_A :\mathcal C_B B$ of the network is the trivial one $A:B$. Moreover, at every round of the adaptive strategy the only channel Alice and Bob can use is $\mathcal N_{A\to B}$, which is associated with the only edge $l_0$ of the graph. 
Therefore, for all $\bk_n$  
\begin{equation}
m^{(l_0)}(\bk_n) = m(\bk_n),
\end{equation}
and the thesis of Theorem \ref{thm: ent-based network bound} simplifies to:
	 \begin{equation}
	\av{\log_2 d}_{\mathcal P_{\epsilon,n}} \leq  
	\frac{1}{g_E(\epsilon)} \left[f_E(\epsilon)  +  \av{m}_{\mathcal P_{\epsilon,n}} E\left(\mathcal N\right)\right].
	\end{equation}
\end{proof}

\begin{proof}[Proof of Theorem \ref{thm: ent-based network bound}]
In this proof we will make use of the notation introduced in Sec. \ref{sec: adaptive strategy} to describe a generic adaptive protocol $\mathcal P_{\epsilon,n}$. Property $P1$, together with \Eq{eq: epsilon close}, implies:
\begin{equation}
\log_2 d_{\bk_n} \leq \frac{1}{g_E(\epsilon)}\left(f_E(\epsilon) + E^{A:B}(\rho_{AB}^{\bk_n})\right).
\end{equation}
By exploiting the monotonicity of $E$ under partial trace, and by averaging over all possible outcomes,
we can write for any bipartition $\{\mathcal C_A,\mathcal C_B\}$ of the set of nodes $\{C_i\}_i$:
\begin{equation} \label{eq: continuity inequality generic E}
\av{\log_2 d}_{\mathcal P_{\epsilon,n}} = \sum_{\bk_n} p(\bk_n) \log_2 d_{\bk_n} \leq 
\frac{1}{g_E(\epsilon)}\left[f_E(\epsilon) + \sum_{\bk_n} p(\bk_n) E^{A\mathcal C_A:\mathcal C_B B}(\rho^{\bk_n})  \right],
\end{equation}
where $\rho^{\bk_n}$ is the final state of the protocol, spread across the whole network. The second term written between square brackets on the right-hand side can be expanded into two terms as
\begin{equation}\label{eq: last step}
\sum_{\bk_n} p(\bk_n) E^{A\mathcal C_A:\mathcal C_B B}\left(\rho^{\bk_n}\right) \leq 
\sum_{\bk_{n-1}} p(\bk_{n-1}) E^{A\mathcal C_A:\mathcal C_B B}\left(\rho^{\bk_{n-1}}\right) + 
\sum_{\bk_{n}} p(\bk_{n}) \sum_{l \in L_{\mathcal C_A}} \delta_{l,l_{\bk_{n-1}}} E[\mathcal N^{(l)}].
\end{equation}
The former is self-similar, but evaluated on the previous round of the protocol, while the latter characterises the ability of the last channel used to create entanglement across the bipartition $A\mathcal C_A:\mathcal C_B B$. In particular, the second term does not always appear, because the channel $\mathcal N^{(l_{\bk_{n-1}})}$ might not connect $A\mathcal C_A$ with $\mathcal C_B B$, or the parties may have decided not to use a channel at all. This last case could be represented, for example, by any value of  $l_{\bk_{n-1}}$ not in the set $L$ of graph edges.
In order to prove \Eq{eq: last step}, we can first expand the left-hand side as
\begin{align} \label{eq: step proof}
\sum_{\bk_n} p(\bk_n) E^{A\mathcal C_A:\mathcal C_B B}\left(\rho^{\bk_n}\right) 
& = \sum_{\bk_{n-1}} p(\bk_{n-1}) \left[\sum_{k_n} p(k_n|\bk_{n-1}) E^{A\mathcal C_A:\mathcal C_B B}\left(\rho^{\bk_n}\right)\right], 
\end{align}
and then use the following chain of inequalities:
\begin{equation} \label{eq: ent bound}
\sum_{k_n} p(k_n|\bk_{n-1}) E^{A\mathcal C_A:\mathcal C_B B}\left(\rho^{\bk_n}\right) \stackrel{\scriptstyle (\rm{i})}{\leq} E^{A\mathcal C_A:\mathcal C_B B}\left(\tilde\rho^{\bk_{n-1}}\right) \stackrel{\scriptstyle (\rm{ii})}{\leq} E^{A\mathcal C_A:\mathcal C_B B}\left(\rho^{\bk_{n-1}}\right) + \sum_{l \in L_{\mathcal C_A}} \delta_{l,l_{\bk_{n-1}}} E[\mathcal N^{(l)}],
\end{equation}
where $(\rm{i})$ is due to the monotonicity of $E$ under LOCC operations, while $(\rm{ii})$ directly follows from property $P2$.
After combining Eqs.~\eqref{eq: step proof} and \eqref{eq: ent bound}, we can recover \Eq{eq: last step} simply by noticing that the average over $\bk_{n-1}$ on the rightmost term of \Eq{eq: ent bound} can be freely changed into an average over $\bk_n$.
The same procedure can be iteratively applied for every round of the protocol, so that at the end we are left with
\begin{align}
\sum_{\bk_n} p(\bk_n) E^{A\mathcal C_A:\mathcal C_B B}\left(\rho^{\bk_n}\right) & \leq 
E^{A\mathcal C_A:\mathcal C_B B}\left(\rho^{(1)}\right) + 
\sum_{j=0}^{n-1} \sum_{\bk_{n}} p(\bk_{n}) \sum_{l \in L_{\mathcal C_A}} \delta_{l,l_{\bk_{j}}} E[\mathcal N^{(l)}] \notag \\
& = \sum_{l \in L_{\mathcal C_A}} \av{m^{(l)}} E[\mathcal N^{(l)}], 
\end{align}
where the last equality is due to the separability of the initial state $\rho^{(1)}$ and to the definition of $\av{m^{(l)}}$ given in \Eq{def: m for a channel}. At this point, the thesis of Theorem~\ref{thm: ent-based network bound} follows directly from the inequality given in \Eq{eq: continuity inequality generic E}.
\end{proof}

\section{Versatile upper bound for quantum networks}\label{sec: hybrid bound}

As we have seen, an entanglement measure $E$ can lead to an upper bound on the capacity of a channel if it satisfies a continuity inequality (property $P1$), and a recursive relation connecting the entanglement of the state before and after the channel application (property $P2$). In the previous section we discussed the possibility of changing entanglement measures across different bipartitions. However, in doing so we have to guarantee that, for each bipartition $\mathcal C_A$, the chosen measure satisfies property $P2$ for \emph{every} channel $\mathcal N^{(l)}$ with $l\in L_{\mathcal C_A}$. 
This constraint leads to weaker upper bounds than what would be obtained if we could change entanglement measure on a channel-by-channel basis. 
For example, consider a situation where all the channels in a given bipartition are Choi-simulable, with only one exception: the presence of this single unsimulable channel prevents us from using $E_{\text{R}}$ in the bound of Theorem \ref{thm: ent-based network bound}. Instead, we are forced to use some broadly applicable entanglement measure (as $E_{\text{sq}}$ or $E_{\text{max}}$) on every channel of the bipartition, thus loosening the bound.

In this section we overcome this issue, by exploiting a recent result on sandwiched R\'{e}nyi entropies \cite{Christandl_16}. In particular, we construct an upper bound on $\av{\log_2 d}_{\mathcal P_{\epsilon,n}}$ that allows us to switch between $E_{\text{R}}$ and $E_{\text{max}}$, depending on the Choi-simulability of each channel. 
To begin with, in the following we describe the recent result obtained in Ref. \cite{Christandl_16}, which is the cornerstone of our method. Then, we prove our main result.

\subsection{Versatile Property P2 for the relative and max-relative entropy of entanglement}
For any quantum channel $\mathcal N_{A^\prime \to B^\prime}$, and any real parameter $1\leq \alpha <\infty$, if
\begin{equation}
\tilde \rho_{AB^\prime B} = \mathcal N_{A^\prime \to B^\prime}(\rho_{AA^\prime B}),
\end{equation}
one has \cite{Christandl_16}
\begin{equation}\label{eq: generic Emax bound}
E_\alpha(\tilde\rho_{A B^\prime B}) \leq E_{\text{max}}(\mathcal N_{A^\prime \to B^\prime}) + E_\alpha(\rho_{A A^\prime B}).
\end{equation}
The quantity $E_\alpha$ is defined in terms of the sandwiched R\'{e}nyi relative entropy $\tilde D_\alpha$ \cite{Tomamichel_13,Wilde2014_entropy}:
\begin{align} \label{def: E alpha}
E_\alpha(\rho_{AB}) &= \min_{\sigma_{AB} \in \text{SEP}} \tilde D_{\alpha}(\rho_{AB}\Vert \sigma_{AB}) 
 \notag\\ 
&= \min_{\sigma_{AB} \in \text{SEP}} 
\frac{1}{\alpha-1}\log_2 \Tr{\left(\sigma_{AB}^{\frac{1- \alpha}{2\alpha}} \;\rho_{AB}\; \sigma_{AB}^{\frac{1- \alpha}{2\alpha}}\right)^\alpha},
\end{align}
where $\sigma_{AB}$ is optimised over all separable states.
As $E_\alpha$ tends respectively to $E_{\text{R}}$ and $E_{\text{max}}$ in the limits of $\alpha \to 1$ and $\alpha \to \infty$, by setting $\alpha = 1$ in \Eq{eq: Emax bound} we obtain
\begin{equation}\label{eq: Emax bound}
E_{\text{R}}(\tilde\rho_{A B^\prime B}) \leq E_{\text{max}}(\mathcal N_{A^\prime \to B^\prime}) + E_{\text{R}}(\rho_{A A^\prime B}).
\end{equation}
This inequality closely resembles property $P2$ for $E_{\text{R}}$, which was obtained in \Eq{eq: ER bound} for Choi-simulable channels. However, thanks to the introduction of $E_{\text{max}}$ on the right hand side, \Eq{eq: Emax bound} now holds even for non Choi-simulable channels.
By combining \Eq{eq: ER bound} with \Eq{eq: Emax bound}, we can obtain a versatile property $P2$ for the relative entropy of entanglement, in which the right-hand side changes according to the Choi-simulability of $\mathcal N_{A^\prime \to B^\prime}$:
\begin{equation} \label{eq: single-channel bound}
E_{\text{R}}(\tilde\rho_{A B^\prime B}) \leq E_{\text{R}}(\rho_{A A^\prime B}) + 
\begin{cases}
E_{\text{R}}\left(\mathcal N\right), & \text{if } \mathcal N \in \mathcal S,\\
 E_{\text{max}}\left(\mathcal N\right), & \text{otherwise},
\end{cases}
\end{equation}
where $\mathcal S$ is the set of Choi-simulable channels. Note that this is the best choice, as $E_{R} \leq E_{\text{max}}$ for all states [see \Eq{eq: Er Emax relation}].

\subsection{Versatile upper bound for quantum networks}\label{sec: new versatile bound}
We have now all the tools to obtain a versatile upper bound on the length of ebit (or pbits) shared by Alice and Bob at the end of a generic adaptive protocol $\mathcal P_{\epsilon,n}$, assisted by unlimited classical communication, over a quantum network.
\begin{thm} \label{thm: network bound}
	Consider a quantum network with an associated directed graph G. For a given bipartition $\{\mathcal C_A,\mathcal C_B\}$ of the network nodes $\{C_i\}_i$, let $L_{\mathcal C_A} \subset L$ be the set of edges in G that connect a node in $A\mathcal C_A$ with one in $\mathcal C_B B$. The average number of ebits (or pbits) that Alice and Bob share at the end of a given adaptive communication protocol $\mathcal P_{\epsilon,n}$, assisted by unlimited classical communication, can be upper bounded as
	\begin{equation}\label{eq: heterogeneous network bound}
	\av{\log_2 d}_{\mathcal P_{\epsilon,n}}  \leq 
	\frac{1}{g_{E_{\rm{R}}}(\epsilon)} \left[f_{E_{\rm{R}}}(\epsilon) + 
	{\mathcal E}^\prime(\mathcal P_{\epsilon,n}, \mathcal C_A)
	\right],
	\end{equation}
	where
	\begin{equation}\label{def: Uprime}
	{\mathcal E}^\prime(\mathcal P_{\epsilon,n}, \mathcal C_A) \equiv \sum_{\substack{l \in L_{\mathcal C_A}:\\\mathcal N^{(l)}\in\mathcal S}} \av{m^{(l)}}_{\mathcal P_{\epsilon,n}} 
	E_{\rm{R}}\left(\mathcal N^{(l)}\right) + 
	\sum_{\substack{l \in L_{\mathcal C_A}:\\\mathcal N^{(l)}\notin\mathcal S}} \av{m^{(l)}}_{\mathcal P_{\epsilon,n}} 
	E_{\rm{max}}\left(\mathcal N^{(l)}\right),
	\end{equation}
	with $f_{E_{\text{R}}}(\epsilon) = -2 [\epsilon\log_2 \epsilon + (1-\epsilon)\log_2 (1-\epsilon)]$ and $ g_{E_{\rm{R}}}(\epsilon) = 1-8\epsilon$.

\end{thm}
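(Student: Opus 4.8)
The plan is to retrace the proof of Theorem~\ref{thm: ent-based network bound} essentially verbatim, but with the entanglement measure on the \emph{states} fixed to be $E_{\text{R}}$ throughout, and with the ordinary property $P2$ replaced at each step by the versatile inequality in \Eq{eq: single-channel bound}. A direct appeal to Theorem~\ref{thm: ent-based network bound} does not suffice, because that theorem demands a \emph{single} measure $E$ satisfying $P2$ for \emph{every} channel crossing the cut $L_{\mathcal C_A}$, whereas here we want the channel contribution to be $E_{\text{R}}(\mathcal N^{(l)})$ on Choi-simulable links and $E_{\text{max}}(\mathcal N^{(l)})$ on the rest. Inequality \eqref{eq: single-channel bound} is precisely the hybrid tool that makes this possible, since it keeps $E_{\text{R}}$ on both the input and output state terms while letting the additive channel term adapt to the Choi-simulability of the link.

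First I would invoke property $P1$ for $E_{\text{R}}$ together with \Eq{eq: epsilon close} to obtain, for each outcome string $\bk_n$,
\begin{equation}
\log_2 d_{\bk_n} \leq \frac{1}{g_{E_{\text{R}}}(\epsilon)}\left(f_{E_{\text{R}}}(\epsilon) + E_{\text{R}}^{A:B}(\rho_{AB}^{\bk_n})\right),
\end{equation}
with the explicit $f_{E_{\text{R}}}$ and $g_{E_{\text{R}}}$ quoted in the statement coming from the known asymptotic-continuity bound for $E_{\text{R}}$ on private states. Using monotonicity of $E_{\text{R}}$ under the partial trace over $C_1,\dots,C_M$ to pass from the bipartition $A:B$ to $A\mathcal C_A:\mathcal C_B B$, and then averaging over all $\bk_n$, reduces the problem to bounding $\sum_{\bk_n} p(\bk_n)\, E_{\text{R}}^{A\mathcal C_A:\mathcal C_B B}(\rho^{\bk_n})$ by ${\mathcal E}^\prime(\mathcal P_{\epsilon,n},\mathcal C_A)$.

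The core is the recursion on the round index. For the final round I would write the conditional average, use LOCC monotonicity of $E_{\text{R}}$ to drop the last round of LOCCs, and then apply \Eq{eq: single-channel bound} to the channel use $\mathcal N^{(l_{\bk_{n-1}})}$, giving
\begin{equation}
\sum_{k_n} p(k_n|\bk_{n-1})\, E_{\text{R}}^{A\mathcal C_A:\mathcal C_B B}(\rho^{\bk_n}) \leq E_{\text{R}}^{A\mathcal C_A:\mathcal C_B B}(\rho^{\bk_{n-1}}) + \sum_{l \in L_{\mathcal C_A}} \delta_{l,l_{\bk_{n-1}}}\, e(\mathcal N^{(l)}),
\end{equation}
where $e(\mathcal N^{(l)})$ equals $E_{\text{R}}(\mathcal N^{(l)})$ if $\mathcal N^{(l)}\in\mathcal S$ and $E_{\text{max}}(\mathcal N^{(l)})$ otherwise. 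Iterating this down to the separable initial state $\rho^{(1)}$, whose entanglement vanishes, telescopes the bound into $\sum_{l \in L_{\mathcal C_A}} \av{m^{(l)}}\, e(\mathcal N^{(l)})$, which is exactly ${\mathcal E}^\prime(\mathcal P_{\epsilon,n},\mathcal C_A)$ as defined in \Eq{def: Uprime}. Substituting back into the averaged $P1$ bound yields \Eq{eq: heterogeneous network bound}.

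The one point requiring care---the ``obstacle''---is checking that the recursion genuinely closes with $E_{\text{R}}$ on the states at every step, rather than forcing an upgrade to $E_{\text{max}}$ on the state terms. This is guaranteed by the structure of \Eq{eq: single-channel bound}: its left-hand side and the state term on its right-hand side are both $E_{\text{R}}$, so the self-similar term produced at each round is again an $E_{\text{R}}$ of a state, and the induction hypothesis is preserved. The per-channel freedom to use $E_{\text{max}}$ lives entirely in the additive channel term and never propagates into the recursion. Everything else is identical to the proof of Theorem~\ref{thm: ent-based network bound}.
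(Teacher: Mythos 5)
Your proposal is correct and follows essentially the same route as the paper's own proof: the paper likewise runs the argument of Theorem~\ref{thm: ent-based network bound} with $E=E_{\rm R}$ fixed on all state terms and replaces property $P2$ at the channel-use step [Eq.~\eqref{eq: ent bound}] by the versatile inequality \eqref{eq: single-channel bound}, splitting the channel contribution according to Choi-simulability. Your observation that the recursion closes because Eq.~\eqref{eq: single-channel bound} keeps $E_{\rm R}$ on both the input and output state terms is exactly the point the paper's construction relies on.
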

\begin{proof}[Proof of Theorem \ref{thm: network bound}]
	The proof follows closely the one provided for Theorem \ref{thm: ent-based network bound}, with $E = E_{\text{R}}$. The only difference lies in \Eq{eq: ent bound}, where we use the inequality in  \Eq{eq: single-channel bound} instead of the original property $P2$. Therefore, \Eq{eq: ent bound} has to be substituted with
	\begin{equation} 
	E_{\text{R}}^{A\mathcal C_A:\mathcal C_B B}\left(\tilde\rho^{\bk_{n-1}}\right) \leq E_{\text{R}}^{A\mathcal C_A:\mathcal C_B B}\left(\rho^{\bk_{n-1}}\right) + \sum_{\substack{l \in L_{\mathcal C_A}:\\\mathcal N^{(l)}\in\mathcal S}} \delta_{l,l_{\bk_{n-1}}} E_{\text{R}}\left(\mathcal N^{(l)}\right)
	+ \sum_{\substack{l \in L_{\mathcal C_A}:\\\mathcal N^{(l)}\notin\mathcal S}} \delta_{l,l_{\bk_{n-1}}} E_{\text{max}}\left(\mathcal N^{(l)}\right),
	\end{equation}
	where we split the sum over the Choi-simulable and non-Choi-simulable channels connecting the nodes on different sides of the network partition.
	The remainder of the proof then follows the same steps used in the proof of Theorem~\ref{thm: ent-based network bound}. We also explicitly provide the expressions for the functions $f_{E_{\text{R}}}(\epsilon)$ and $g_{E_{\text{R}}}(\epsilon)$ appearing in Property $1$ (see e.g. Ref.~\cite{Pirandola_15}).
	\end{proof}

Thanks to this result, we have managed to merge the upper bounds based on the quantities ${\mathcal E}_{E_{\text{R}}}$ and ${\mathcal E}_{E_{\text{max}}}$ into a single bound, which retains the advantages given by the two entanglement measures, i.e., tightness and broad applicability. Therefore, in assessing the communication performance of an adaptive protocol $\mathcal P_{\epsilon,n}$ over a quantum network, for any given bipartition $A\mathcal C_A:\mathcal C_B B$ one just needs to compare ${\mathcal E}^\prime$ with the bound ${\mathcal E}_{E_{\text{sq}}}$ based on the squashed entanglement \cite{Azuma_16}. This is because the dependence on $f_E$ and $g_E$ vanishes for small errors $\epsilon$.
In particular, the advantage of using ${\mathcal E}^\prime$ over ${\mathcal E}_{E_{\text{sq}}}$ for the bipartition $A\mathcal C_A:\mathcal C_B B$ can be quantified by the parameter
\begin{equation}\label{def: mu parameter}
\mu_{\mathcal C_A}(\mathcal P_{\epsilon,n}) = \frac{{\mathcal E}_{E_{\text{sq}}}(\mathcal P_{\epsilon,n}, \mathcal C_A) - {\mathcal E}^\prime(\mathcal P_{\epsilon,n}, \mathcal C_A) }{{\mathcal E}_{E_{\text{sq}}}(\mathcal P_{\epsilon,n}, \mathcal C_A) + {\mathcal E}^\prime(\mathcal P_{\epsilon,n}, \mathcal C_A)},
\end{equation}
which is defined in the range $[-1,+1]$ and is positive when the versatile bound ${\mathcal E}^\prime$ is tighter than ${\mathcal E}_{E_{\text{sq}}}$.
The sign of $\mu_{\mathcal C_A}(\mathcal P_{\epsilon,n})$ will ultimately depend on the details of the bipartition and on the average number of times each channel is used. However, we can expect ${\mathcal E}^\prime$ to be tighter than ${\mathcal E}_{E_{\text{sq}}}$ on bipartitions mostly connected by Choi-simulable channels, because the most common of these channels satisfy $E_{\text{R}}(\mathcal N) < E_{\text{sq}}(\mathcal N)$. In contrast, when there is a considerable amount of channels that are not Choi-simulable, the sign of $\mu_{\mathcal C_A}(\mathcal P_{\epsilon,n})$ will strongly depend on the sign of $E_{\text{sq}}(\mathcal N) - E_{\text{max}}(\mathcal N)$: every non Choi-simulable channel $\mathcal N$ for which this difference is positive will enhance the usefulness of ${\mathcal E}^\prime$ over ${\mathcal E}_{E_\text{sq}}$.

We should stress that ${\mathcal E}^\prime$,  ${\mathcal E}_{E_\text{sq}}$, and thus $\mu_{\mathcal C_A}(\mathcal P_{\epsilon,n})$ might not be easily evaluated, because the exact values of $E_{\text{sq}}(\mathcal N)$ and $E_{\text{max}}(\mathcal N)$ are not known for many channels. When evaluating communication bounds, in practice it is common to consider the smallest known upper bounds $\tilde E_{\text{sq}}(\mathcal N)$ and $\tilde E_{\text{max}}(\mathcal N)$ on those unknown quantities, rather than their exact values. 
When these approximations are introduced in Eqs.~\eqref{def: U} and \eqref{def: Uprime} we are left with slightly different quantities  $\tilde {\mathcal E}^\prime$ and $\tilde{\mathcal E}_{E_\text{sq}}$, which if used instead of ${\mathcal E}^\prime$ and ${\mathcal E}_{E_\text{sq}}$ in \Eq{def: mu parameter} lead to a modified parameter $\tilde\mu_{\mathcal C_A}(\mathcal P_{\epsilon,n})$.
Then, we can say that currently our versatile upper bound yields a better result than the network bound based on squashed entanglement when $\tilde\mu_{\mathcal C_A}(\mathcal P_{\epsilon,n}) > 0$.

Before discussing examples of networks where the bound provided by Theorem~\ref{thm: network bound} becomes tighter than its counterpart based on the squashed entanglement, we first need to evaluate $E_{\text{max}}(\mathcal N)$ for some channels of interest.
In particular, in the next section we will consider typical qubit quantum channels.

\section{Max-relative entropy of entanglement of qubit channels}\label{sec: channels}

In this section we develop a method to obtain lower and upper bounds on the max-relative entropy of entanglement of channels invariant under phase rotations, and to evaluate $E_{\rm{max}}$ itself for Choi-simulable channels with the same symmetry. After that, we discuss the possibility of using semidefinite programming (SDP) in order to evaluate the max-relative entropy of entanglement of qubit channels, by using a formulation recently introduced in Ref.~\cite{berta2017amortization}.
Interestingly, by combining these tools we are able to analytically obtain the max-relative entropy of entanglement of the qubit amplitude damping channel $\mathcal N^{(\text{ad})}$. As this channel is not Choi-simulable its capacity is still unknown, although several upper bounds on it have been recently derived \cite{Pirandola_15,Goodenough_16}. At the end of this section we also numerically evaluate the max-relative entropy of entanglement of other common Choi-simulable qubit channels: dephasing, erasure and depolarising channels.
Although the relative entropy of entanglement could be used to bound the capacities of these channels, the purpose of this analysis is to see how far off the upper bound based on max-relative entropy of entanglement is, compared with other bounds known in the literature.

In general, the calculation of the max-relative entropy of entanglement of a channel involves a max-min optimisation [see Eqs. \eqref{def: Dmax} and \eqref{def: Emax}]:
\begin{align}\label{eq: Emax explicit}
E_{\text{max}}(\mathcal N) 
&= \max_{\rho_{A A^\prime}} \min_{\sigma_{AB}\in \text{SEP}} \inf_x\{x \in\mathbb{R} \vert 2^x \sigma_{AB} - \mathcal N_{A^\prime\to B}[\rho_{AA^\prime}]\geq 0 \}.
\end{align}
In fact, the maximisation over $\rho_{AA^\prime}$ can be restricted to bipartite pure states with the dimension of $A$ equal to that of $A^\prime$. This can be shown by purifying $\rho_{AA^\prime}$ and by applying the Schmidt decomposition and the date processing inequality for the sandwiched R\'{e}nyi relative entropy \cite{Beigi2013}. Nonetheless, typically the optimisation leading to $E_{\text{max}}(\mathcal N)$ is still not trivial to perform.
However, the max-relative entropy of entanglement of a channel can always be bounded from both sides as stated in the following proposition, whose proof can be found in Appendix~\ref{app: proposition proofs}. The upper bound is a re-elaborated version of the upper bound on the max-relative entropy of entanglement of a channel studied in Ref. \cite{Christandl_16}.
In order to explicitly perform the required optimisations, it is useful to exploit as much as possible the symmetries of the considered channel $\mathcal N$.  In particular, in Appendix~\ref{app: tech tools} we develop tools applicable to qubit channels invariant under phase rotations.

\begin{prop}\label{prop: bounds on Emax}
	Let $\pi_{\mathcal N} = \Id_{A}\otimes\mathcal N_{A^\prime \to B}[\psi_{AA^\prime}]$ be the Choi-Jamio\l{}kowski state associated with the quantum channel $\mathcal N$ with input dimension $d$, where $\psi_{AA^\prime}$ is a maximally entangled state. Then, we have
	\begin{equation} \label{eq: bounds on Emax}
	\min_{\sigma_{AB}\in \rm{SEP}}  D_{\rm{max}}(\pi_{\mathcal N}\Vert\sigma_{AB}) \leq 
	E_{\rm{max}}(\mathcal N)  
	\leq \min_{\substack{\sigma_{AB}\in \rm{SEP} \\ {\rm{Tr}}_{B}[\sigma_{AB}]=\Id_A/d}}  D_{\rm{max}}(\pi_{\mathcal N}\Vert\sigma_{AB}). 
	\end{equation}	
	Moreover, if $\mathcal N$ is Choi-simulable, the lower bound is equal to $E_{\rm{max}}(\mathcal N)$ itself.
\end{prop}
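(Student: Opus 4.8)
The plan is to establish the lower bound, then the upper bound, and finally to upgrade the lower bound to an equality in the Choi-simulable case. For the lower bound I would simply observe that, by definition [see \Eq{eq: Emax explicit}], $E_{\rm{max}}(\mathcal N)$ is a maximisation over all input states, so restricting the input to the maximally entangled state $\ket{\psi}_{AA^\prime}$ used in the Choi construction can only lower the value:
\[
E_{\rm{max}}(\mathcal N)\geq E_{\rm{max}}\bigl(\mathcal N_{A^\prime\to B}[\psi_{AA^\prime}]\bigr)=E_{\rm{max}}(\pi_{\mathcal N})=\min_{\sigma_{AB}\in\rm{SEP}}D_{\rm{max}}(\pi_{\mathcal N}\Vert\sigma_{AB}),
\]
which is exactly the left inequality of \Eq{eq: bounds on Emax}.

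For the upper bound, I would first invoke the reduction (stated just below \Eq{eq: Emax explicit}) that the maximisation defining $E_{\rm{max}}(\mathcal N)$ can be restricted to pure inputs $\phi_{AA^\prime}$ with $\dim A=\dim A^\prime=d$. From the Schmidt decomposition, any such state can be written as $\ket{\phi}_{AA^\prime}=\sqrt d\,(M_A\otimes\Id_{A^\prime})\ket{\psi}_{AA^\prime}$ for some operator $M_A$ with $\Tr{M_A^\dagger M_A}=1$. Applying the channel on $A^\prime$ and pulling the local operator out front gives $\mathcal N_{A^\prime\to B}[\phi_{AA^\prime}]=d\,(M_A\otimes\Id_B)\,\pi_{\mathcal N}\,(M_A^\dagger\otimes\Id_B)$. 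Then, given any feasible $\sigma_{AB}$ on the right-hand side of \Eq{eq: bounds on Emax} (separable with $\TrS{B}{\sigma_{AB}}=\Id_A/d$) realising $D_{\rm{max}}(\pi_{\mathcal N}\Vert\sigma_{AB})=x$, i.e.\ $2^x\sigma_{AB}-\pi_{\mathcal N}\geq 0$, I would propose the candidate $\tilde\sigma_{AB}=d\,(M_A\otimes\Id_B)\,\sigma_{AB}\,(M_A^\dagger\otimes\Id_B)$. Because conjugation $X\mapsto K X K^\dagger$ with $K=M_A\otimes\Id_B$ preserves the positive-semidefinite order, the inequality $2^x\sigma_{AB}\geq\pi_{\mathcal N}$ is transported to $2^x\tilde\sigma_{AB}\geq\mathcal N_{A^\prime\to B}[\phi_{AA^\prime}]$, so $E_{\rm{max}}(\mathcal N[\phi_{AA^\prime}])\leq x$.

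The main obstacle — and the step that pins down the constraint $\TrS{B}{\sigma_{AB}}=\Id_A/d$ — is verifying that $\tilde\sigma_{AB}$ is a bona fide separable \emph{state}. Positivity and separability are immediate, since conjugation by $M_A\otimes\Id_B$ is a local completely positive map on Alice's system and therefore maps separable operators to separable operators. Normalisation, however, is exactly where the trace condition is used:
\[
\Tr{\tilde\sigma_{AB}}=d\,\TrS{A}{M_A^\dagger M_A\,\TrS{B}{\sigma_{AB}}}=d\,\TrS{A}{M_A^\dagger M_A\,\Id_A/d}=\Tr{M_A^\dagger M_A}=1.
\]
Without the restriction $\TrS{B}{\sigma_{AB}}=\Id_A/d$ the candidate would generically fail to be normalised, which is precisely why the upper bound carries this extra constraint on the separable state. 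Taking the maximum over pure inputs $\phi_{AA^\prime}$ and the minimum over admissible $\sigma_{AB}$ then yields the right inequality of \Eq{eq: bounds on Emax}.

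Finally, for the Choi-simulable case I would sharpen the lower bound to an equality, in direct analogy with the argument leading to \Eq{eq: ER bound} but for $E_{\rm{max}}$. Using the stretchability relation \Eq{eq: streatchability condition}, every output can be written as $\mathcal N_{A^\prime\to B}[\phi_{AA^\prime}]=\Lambda_{A^\prime A^{\prime\prime}\to B}\bigl(\phi_{AA^\prime}\otimes\pi_{\mathcal N}\bigr)$ for a trace-preserving LOCC $\Lambda$ across the cut $A:B$. Since $\phi_{AA^\prime}$ lives entirely on Alice's side, I would apply monotonicity of $E_{\rm{max}}$ under LOCC and then note that appending the local pure state $\phi_{AA^\prime}$ does not increase the entanglement (testing with $\phi_{AA^\prime}\otimes\sigma_{A^{\prime\prime}B}^\ast$ and using the factorisation of $D_{\rm{max}}$ on a common pure tensor factor), obtaining $E_{\rm{max}}(\mathcal N[\phi_{AA^\prime}])\leq E_{\rm{max}}(\pi_{\mathcal N})$. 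Maximising over inputs gives $E_{\rm{max}}(\mathcal N)\leq E_{\rm{max}}(\pi_{\mathcal N})$, which combined with the lower bound already proven forces equality. I expect the bookkeeping of the cut $AA^\prime A^{\prime\prime}:B$ versus $A:B$ in this last step to be the only delicate point, but it is handled exactly as in the $E_{\rm{R}}$ derivation of \Eq{eq: ER bound}.
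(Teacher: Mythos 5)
Your proof is correct. The lower bound and the Choi-simulable equality follow the paper exactly: the former by restricting the maximisation in \Eq{eq: Emax explicit} to the input $\psi_{AA^\prime}$, the latter by recycling the LOCC-monotonicity argument of the last step of \Eq{eq: ER bound}, which the paper explicitly notes holds for any entanglement measure (your $D_{\text{max}}$ factorisation test with $\phi_{AA^\prime}\otimes\sigma^*_{A^{\prime\prime}B}$ fills in a detail the paper leaves implicit). Where you genuinely diverge is the upper bound. The paper follows Christandl \emph{et al.}: it restricts the minimisation to separable states arising as outputs of entanglement-breaking channels $\mathcal T$, observes that $2^x\pi_{\mathcal T}-\pi_{\mathcal N}\geq 0$ makes the map $2^x\mathcal T-\mathcal N$ completely positive, hence $2^x\mathcal T[\rho_{AA^\prime}]-\mathcal N[\rho_{AA^\prime}]\geq 0$ for \emph{every} (possibly mixed) input, and finally identifies the Choi states of EB channels with the separable states satisfying $\TrS{B}{\sigma_{AB}}=\Id_A/d$ through an explicit teleportation construction. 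You instead invoke the pure-input reduction stated below \Eq{eq: Emax explicit}, write $\ket{\phi}_{AA^\prime}=\sqrt{d}\,(M_A\otimes\Id_{A^\prime})\ket{\psi}_{AA^\prime}$, and transport the feasible $\sigma_{AB}$ along the same local filtering, $\tilde\sigma_{AB}=d\,(M_A\otimes\Id_B)\sigma_{AB}(M_A^\dagger\otimes\Id_B)$, using that conjugation preserves the operator order and separability, and that the marginal constraint is exactly what guarantees $\Tr{\tilde\sigma_{AB}}=1$. The two mechanisms rest on the same underlying fact---your conjugation is the paper's complete-positivity argument evaluated on a pure input---but the trade-offs differ: the paper's route needs no pure-state reduction and makes the EB-channel picture explicit, whereas yours is more elementary once the reduction is granted and isolates transparently where the constraint $\TrS{B}{\sigma_{AB}}=\Id_A/d$ enters. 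The only caveat is that your route inherits the reduction's reliance on data processing for sandwiched R\'enyi entropies, a dependency the paper's own proof of this proposition avoids.
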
 

An alternative expression for the max-relative entropy of a channel has  been recently proposed in Ref.~\cite{berta2017amortization}, and can be written as
\begin{equation}\label{eq: SDP Emax}
E_{\text{max}}(\mathcal N)=\log_2\Sigma(\mathcal N),
\end{equation}
where
\begin{equation}\label{eq:Sigma}
\Sigma(\mathcal N)=\min_{Y_{AB}\in \overrightarrow{\rm{SEP}}}\left\{\left\Vert{\rm{Tr}}_{B}[Y_{AB}]\right\Vert_\infty:Y_{AB} - d \, \pi_{\mathcal N}\geq0\right\}.
\end{equation}
Here $d$ is the input dimension of the channel $\mathcal N$, and $\overrightarrow{\rm{SEP}}$ denotes the cone of (unnormalised) separable operators, i.e., the set of all operators $X_{AB}$ that can be decomposed as $X_{AB}=\sum_{i=1}^L P_A^i\otimes Q_B^i$ for some positive integer $L$ and positive semidefinite operators $P_A^i$ and $Q_B^i$. Note that for qubit channels we can replace $\overrightarrow{\rm{SEP}}$ by the cone of all positive semidefinite operators that are PPT, thus making the evaluation of \Eq{eq:Sigma} efficiently computable via SDP.

\subsection{Amplitude damping channel}
We begin by studying the most important example among channels that are not Choi-simulable: the qubit amplitude damping channel $\mathcal N^{(\text{ad})}_\lambda$, which can be written as 
\begin{equation}
\mathcal N^{(\text{ad})}_\lambda (\rho) = \sum_{i=1}^2 M_i(\mathcal N^{(\text{ad})}_\lambda) \rho M_i^\dagger(\mathcal N^{(\text{ad})}_\lambda),
\end{equation}
in terms of the Kraus operators:
\begin{equation}
M_1(\mathcal N^{(\text{ad})}_\lambda) = \ketbra{0}{0} + \sqrt{1-\lambda} \ketbra{1}{1}, \qquad M_2 (\mathcal N^{(\text{ad})}_\lambda) = \sqrt{\lambda}\ketbra{0}{1}.
\end{equation}
Note that $\mathcal N^{(\text{ad})}_\lambda$ reduces to the identity channel when $\lambda = 0$.
In particular, we analytically calculate the lower and upper bounds on $E_{\text{max}}(\mathcal N^{(\text{ad})}_\lambda)$ found in Proposition \ref{prop: bounds on Emax}:
\begin{equation} \label{eq: explicit bounds on amplitude damping}
F(\lambda)
\leq E_{\rm{max}}(\mathcal N^{(\rm{ad})}_\lambda)\leq \tilde E_{\rm{max}}(\mathcal N^{(\rm{ad})}_\lambda),
\end{equation}
where
\begin{equation}
F(\lambda) \equiv \begin{cases}
\log_2\left[\frac{1}{2}(1+\sqrt{1-\lambda})^2\right], & \rm{if }\lambda\leq \frac{\sqrt{5}-1}{2},\\
\log_2\left(\frac{1+\lambda}{2\lambda}\right), &\rm{if } \lambda \geq \frac{\sqrt{5}-1}{2},
\end{cases}
\qquad \text{and} \qquad
\tilde E_{\rm{max}}(\mathcal N^{(\rm{ad})}_\lambda) \equiv \log_2\left(2-\lambda\right).
\end{equation}
The proofs for these inequalities can be found respectively in Appendices \ref{app: upper bound amp damp} and \ref{app: lower bound proof}.
We stress that $\tilde E_{\rm{max}}(\mathcal N^{(\rm{ad})}_\lambda)$ is also an upper bound on the capacity $C\big(\mathcal N^{(\text{ad})}_\lambda\big)$, whereas $F(\lambda)$ does not have any known relation with the capacity. 
Interestingly, the numerical evaluation of $E_{\text{max}}(\mathcal N^{(\lambda)})$ via the SDP procedure in \Eq{eq:Sigma} coincides with the upper bound found in \Eq{eq: explicit bounds on amplitude damping} up to numerical errors. This suggests that for all $\lambda \in [0,1]$ the max-relative entropy of entanglement of the amplitude damping channel exactly coincides with its upper bound found through Proposition \ref{prop: bounds on Emax}.
Indeed, this is analytically proven in Appendix~\ref{app: Emax Amd Damp}, and we can write it here as a proposition.
\begin{prop}\label{prop: B UB for ampd damp}
	The max-relative entropy of entanglement of a qubit amplitude damping channel $\mathcal N^{(\rm{ad})}_\lambda$ is 
	\begin{equation}
	E_{\rm{max}}\big(\mathcal N^{(\rm{ad})}_\lambda\big) = \log_2(2-\lambda).
	\end{equation}
\end{prop}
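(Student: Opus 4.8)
The plan is to establish the two matching inequalities $E_{\rm max}(\mathcal N^{(\rm ad)}_\lambda) \le \log_2(2-\lambda)$ and $E_{\rm max}(\mathcal N^{(\rm ad)}_\lambda) \ge \log_2(2-\lambda)$ separately. The upper bound is already supplied by $\tilde E_{\rm max}$ in Eq.~\eqref{eq: explicit bounds on amplitude damping} (equivalently, by the constrained minimisation on the right-hand side of Proposition~\ref{prop: bounds on Emax}), so the whole task reduces to producing a \emph{matching} lower bound, which must improve on the generic lower bound $F(\lambda)$. For this I would work with the exact semidefinite reformulation $E_{\rm max}(\mathcal N) = \log_2 \Sigma(\mathcal N)$ of Eqs.~\eqref{eq: SDP Emax}--\eqref{eq:Sigma}, replacing the separable cone $\overrightarrow{\rm SEP}$ by the PPT cone as allowed for qubit channels. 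Since $\Sigma$ is the optimal value of a \emph{minimisation} SDP, a lower bound cannot be read off from any primal feasible $Y_{AB}$; instead I would pass to the Lagrange dual and exhibit a single dual-feasible point whose objective equals $2-\lambda$, so that weak duality yields $\Sigma(\mathcal N^{(\rm ad)}_\lambda) \ge 2-\lambda$.

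Concretely, I would first write down the Choi state $\pi_{\mathcal N}$ of $\mathcal N^{(\rm ad)}_\lambda$ and observe that its partial transpose $\pi_{\mathcal N}^{T_B}$ is, off the diagonal, supported only on the $\{\ket{01},\ket{10}\}$ block, where it carries a single negative eigenvalue equal to $-\tfrac{1-\lambda}{2}$ with eigenvector proportional to $\ket{01}-\sqrt{1-\lambda}\,\ket{10}$. Dualising the SDP \eqref{eq:Sigma} (with constraints $Y_{AB}\ge d\,\pi_{\mathcal N}$ for $d=2$, $Y_{AB}^{T_B}\ge 0$, and $\|{\rm Tr}_B Y_{AB}\|_\infty \le t$) gives a maximisation over a state $\rho_A$ on Alice's system and a positive operator $Z_{AB}$ subject to $\rho_A\otimes\Id_B \ge Z_{AB}^{T_B}$, with objective $1-2\,\Tr{Z_{AB}\,\pi_{\mathcal N}^{T_B}}$ (the leading $1$ coming from $\TrS{B}{\pi_{\mathcal N}}=\Id_A/2$). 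The certificate I would propose is $Z_{AB}$ equal to the rank-one projector onto the negative-eigenvalue eigenvector above, together with the diagonal choice $\rho_A = {\rm diag}\!\big(\tfrac{1}{2-\lambda},\tfrac{1-\lambda}{2-\lambda}\big)$; this makes the objective evaluate to exactly $1+(1-\lambda)=2-\lambda$.

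The step I expect to require the most care is verifying dual feasibility, i.e.\ that $W_{AB}\equiv \rho_A\otimes\Id_B - Z_{AB}^{T_B}\ge 0$. With the proposed $\rho_A$, the two diagonal constraints coming from the $\ket{01}$ and $\ket{10}$ sectors both pin $p = \tfrac{1}{2-\lambda}$, and the remaining $\{\ket{00},\ket{11}\}$ block turns out positive semidefinite precisely because its determinant vanishes at this value of $p$; so feasibility holds, but only marginally, with $W_{AB}$ rank-deficient. This tightness is exactly what one expects at optimality, and it is also what guides the ansatz: the phase-rotation covariance of $\mathcal N^{(\rm ad)}_\lambda$, exploited through the tools of Appendix~\ref{app: tech tools}, justifies restricting $\rho_A$ to be diagonal and pins the certificate down to this one-parameter family. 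Combining the resulting dual bound $\Sigma \ge 2-\lambda$ with the upper bound from Eq.~\eqref{eq: explicit bounds on amplitude damping} then closes the gap and gives $E_{\rm max}(\mathcal N^{(\rm ad)}_\lambda)=\log_2(2-\lambda)$, confirming analytically the numerical coincidence reported in the main text.
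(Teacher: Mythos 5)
Your proof is correct, and it takes a genuinely different route from the paper's. Both arguments start from the SDP reformulation $E_{\rm max}(\mathcal N)=\log_2\Sigma(\mathcal N)$ of Eqs.~\eqref{eq: SDP Emax}--\eqref{eq:Sigma} with the separable cone replaced by the PPT cone, but the paper (Appendix~\ref{app: Emax Amd Damp}) solves the \emph{primal} minimisation outright: after the phase-covariance reduction of Lemma~\ref{lem: optimisation on smaller set} it parametrises every feasible $Y_{AB}$ and runs a long explicit case analysis (the quantities $A,A_1,\dots,A_6$ and $B,B_1,\dots,B_4$) showing the optimum equals $1-\tfrac12\lambda$, which yields both inequalities simultaneously. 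You instead import the upper bound $E_{\rm max}\le\log_2(2-\lambda)$ from Eq.~\eqref{eq: explicit bounds on amplitude damping} and certify the matching lower bound by weak duality, and your computations check out: the Lagrange dual of the primal with constraints $Y_{AB}\ge 2\pi_{\mathcal N}$, $Y_{AB}^{T_B}\ge 0$, $t\Id_A\ge{\rm{Tr}}_B[Y_{AB}]$ is exactly the maximisation you state, namely of $1-2\Tr{Z_{AB}\,\pi_{\mathcal N}^{T_B}}$ over states $\rho_A$ and $Z_{AB}\ge 0$ with $\rho_A\otimes\Id_B\ge Z_{AB}^{T_B}$; the negative eigenvalue of $\pi_{\mathcal N}^{T_B}$ is indeed $-(1-\lambda)/2$ with eigenvector proportional to $\ket{01}-\sqrt{1-\lambda}\,\ket{10}$, so your rank-one $Z_{AB}$ gives objective $1+(1-\lambda)=2-\lambda$; and with $\rho_A={\rm diag}\big(\tfrac{1}{2-\lambda},\tfrac{1-\lambda}{2-\lambda}\big)$ the operator $\rho_A\otimes\Id_B-Z_{AB}^{T_B}$ is block diagonal, vanishes on the $\ket{01}$ and $\ket{10}$ diagonal entries, and its $\{\ket{00},\ket{11}\}$ block has zero determinant and positive trace, hence is PSD, so the certificate is feasible; weak duality itself follows from $\Tr{M(Y_{AB}-2\pi_{\mathcal N})}\ge 0$ and $\Tr{Z_{AB}Y_{AB}^{T_B}}\ge 0$ with $M=\rho_A\otimes\Id_B-Z_{AB}^{T_B}$. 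As for what each approach buys: your route is far shorter, independently checkable, and explanatory -- since the eigenvalues of $\pi_{\mathcal N}^{T_B}$ sum in absolute value to $2-\lambda$, your certificate exhibits $E_{\rm max}\big(\mathcal N^{(\rm ad)}_\lambda\big)$ as the logarithmic negativity of the Choi state, which makes the final value transparent; the paper's primal computation, by contrast, is self-contained in that it pins down the exact optimum (and an optimiser) without invoking the separately proven upper bound of Appendix~\ref{app: upper bound amp damp}, but at the cost of several pages of algebra. One further simplification you could note: for the lower bound the SEP-to-PPT replacement requires no two-qubit Peres--Horodecki argument at all, since the PPT cone contains the separable cone, the PPT minimisation can only be smaller, and any dual bound on it automatically bounds the separable version.
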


The plot in Fig.~\ref{fig: amp damp bounds} shows how $E_{\text{max}}(\mathcal N^{(\rm{ad})}_\lambda)$, plotted as a black curve, can be compared with other bounds on $C(\mathcal N^{(\rm{ad})}_\lambda)$ known in the literature. In particular, it is much smaller than the upper bound on the capacity obtained in Ref. \cite{Pirandola_15}, represented by the top blue solid curve in Fig.~\ref{fig: amp damp bounds}.
The latter was obtained by decomposing the amplitude damping channel as $\mathcal N^{(\text{ad})}_\lambda = \mathcal N_1 \circ \mathcal L \circ  \mathcal N_2$, where $\mathcal L \in \mathcal S$ but $\mathcal N_1$ and $\mathcal N_2$ are not, and by considering the bound $C(\mathcal N^{(\text{ad})}_\lambda) \leq E_{\text{R}}(\mathcal L)$. 
However, the upper bound on the capacity based on the squashed entanglement \cite{Pirandola_15} is smaller than our result obtained through $E_{\text{max}}$.
For completeness, we also plotted the best known lower bound on $C(\mathcal N^{(\text{ad})}_\lambda)$, which narrows the region where the capacity value could be \cite{Pirandola_15,Goodenough_16}.
From this analysis, we can conclude that at the moment the best known upper bound on the capacity of the amplitude damping channel remains based on its squashed entanglement. 

\begin{figure}
	\centering
	\includegraphics[scale = 0.7]{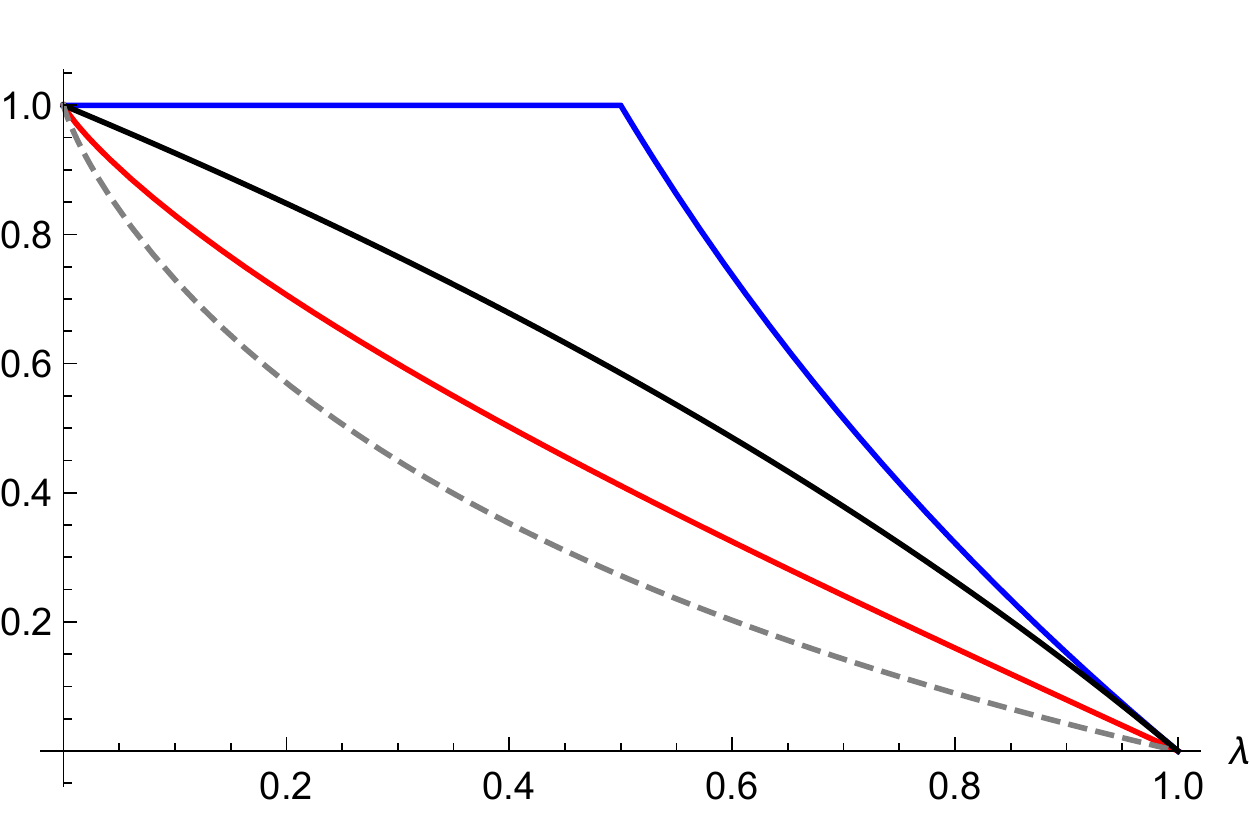}
	\caption{Grey dashed line: Best known lower bound on the capacity $C(\mathcal N^{(\text{ad})})$, corresponding to the reverse coherent information \cite{Devetak_RevCohInfo_2006} of the channel (see \cite{Pirandola_15}). The remaining solid lines are all upper bounds on the capacity $C(\mathcal N^{(\text{ad})})$. In particular, the top blue line is the bound based on the relative entropy of entanglement discussed in Ref.~\cite{Pirandola_15}, the bottom red line is the best known bound based on the squashed entanglement \cite{Pirandola_15}, and the black line in the middle is $E_{\text{max}}(\mathcal N^{(\text{ad})})$. \label{fig: amp damp bounds}}
\end{figure}

\subsection{Other Choi-simulable channels}\label{sec: other channels}
Here we numerically evaluate the max-relative entropy of entanglement of some common qubit channels: dephasing, erasure, and depolarising channels. Note that the capacities of the first two channels are given by single-letter formulas, and are thus known exactly. In our numerical simulations we perform the SDP optimisation in \Eq{eq:Sigma}, which yields the same results obtained by numerically evaluating the lower bound in Proposition \ref{prop: bounds on Emax}.

The dephasing channel $\mathcal N^{\text{(deph)}}_{\lambda}$ and depolarising channel $\mathcal{N}^{\text{(depo)}}_\lambda$ can be respectively written in terms of a set of $2$ and $5$ Kraus operators:
\begin{equation}
M_1(\mathcal N^{\text{(deph)}}_\lambda) = \sqrt{1-\frac{\lambda}{2}}\Id, \qquad M_2 (\mathcal N^{\text{(deph)}}_\lambda) = \sqrt{\frac{\lambda}{2}} \sigma_z,
\end{equation}
\begin{equation}
M_{0}(\mathcal N^{\text{(depo)}}_\lambda) = \sqrt{1-\lambda}\Id, \qquad M_{ij}(\mathcal N^{\text{(depo)}}_\lambda) = \sqrt{\frac{\lambda}{2}}\ketbra{i}{j},
\end{equation}
with $i,j = 0,1$. The erasure channel $\mathcal{N}^{\text{(er)}}$, on the other hand,  is characterised by the Kraus operators
\begin{equation}
M_2(\mathcal{N}^{\text{(er)}}_\lambda) = \sqrt{1-\lambda}\Id, \qquad M_i (\mathcal{N}^{\text{(er)}}_\lambda) = \sqrt{\lambda} \ketbra{e}{i},
\end{equation}
where $i=0,1$, and $\ket{e}$ is an error state orthogonal to both $\ket{0}$ and $\ket{1}$. All these channels reduce to the identity channel when $\lambda = 0$.

We point out that exact values for the max-relative entropy of entanglement of these channels are not needed when evaluating the versatile network bound of Theorem~\ref{thm: network bound}. This is because they are all Choi-simulable, and the entanglement generated by them can be quantified by means of $E_{\text{R}}$. Nonetheless, we numerically evaluated $E_{\text{max}}(\mathcal N)$ for these channels in order to see whether the obtained values could be smaller than their counterparts based on the squashed entanglement.  
The results can be seen in Fig.~\ref{fig: Choi-simulable plots}. In all these cases $E_{\text{R}}$ yields the tighter upper bound on the capacity, followed by
the squashed entanglement, while $E_{\text{max}}$ provides the loosest bound.

\begin{figure}
	\centering
	\subfloat[Dephasing channel. \label{fig: dephasing}]{\includegraphics[scale = 0.4]{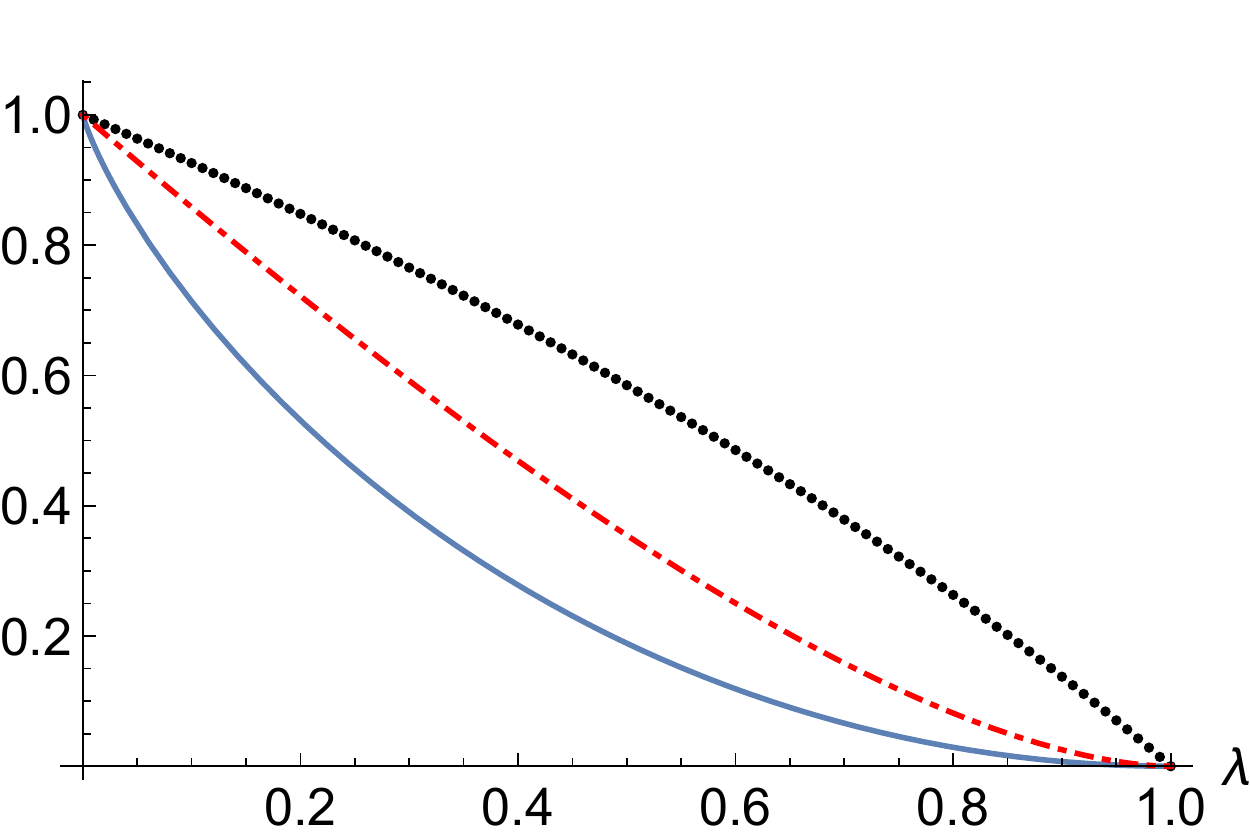}} $\quad$
	\subfloat[Erasure channel.]{\includegraphics[scale = 0.4]{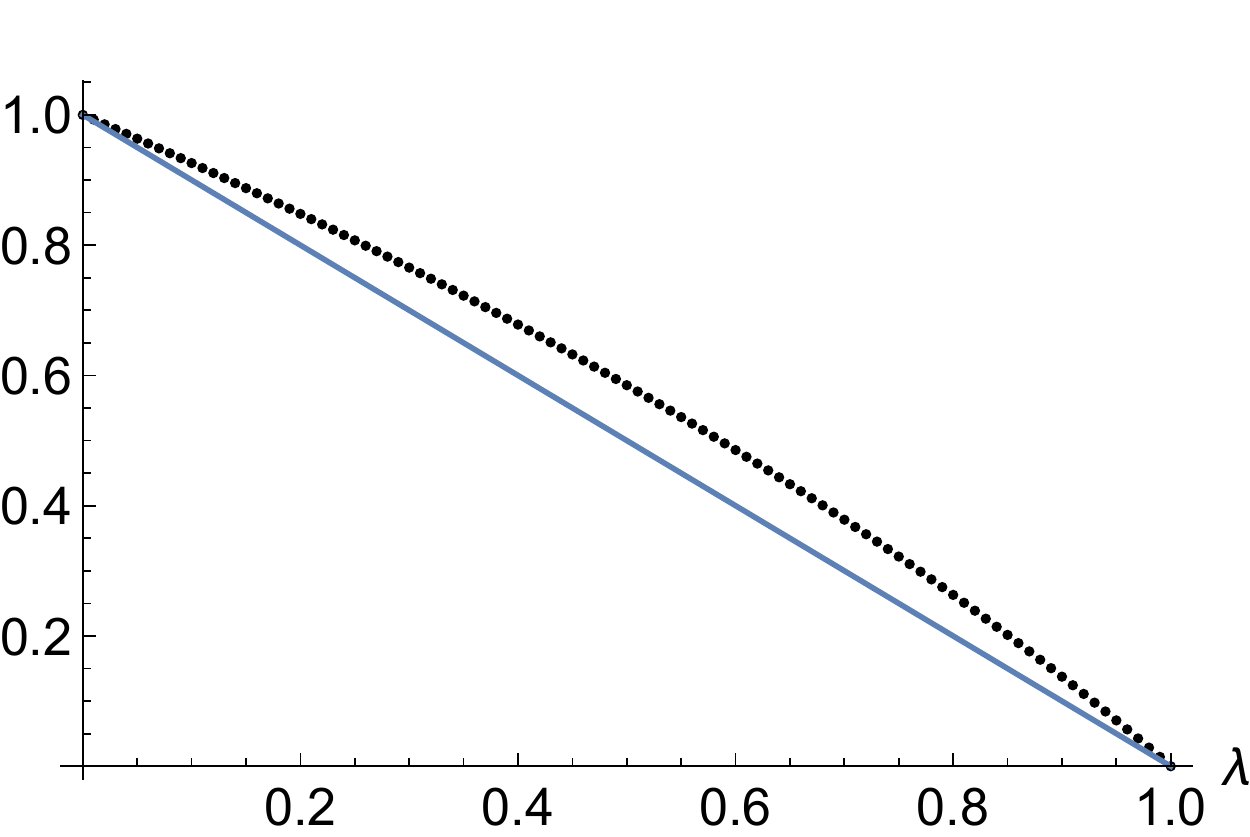}} $\quad$
	\subfloat[Depolarising channel.]{\includegraphics[scale = 0.4]{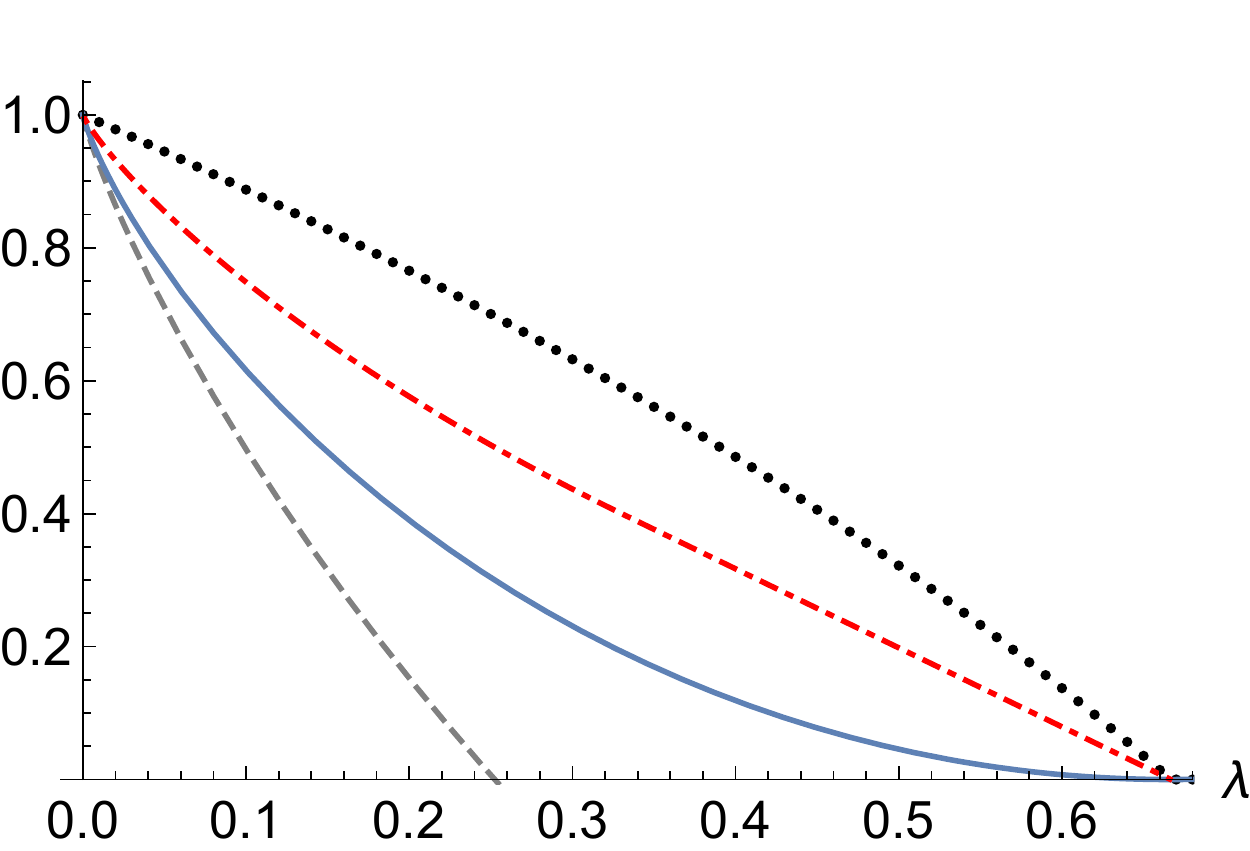}}
	\caption{Bounds on the capacity of three Choi-simulable qubit channels. In each plot, the solid blue line represents $E_{\text{R}}(\mathcal{E}_\lambda)$, and coincides with the capacity $C(\mathcal{E}_\lambda)$ for the dephasing and erasure channel. For the depolarising channel, the capacity $C\big(\mathcal{E}^{\text{(depo)}}_{\lambda}\big)$ lies between the blue solid line and the gray dashed line, which respectively represent its best known upper and lower bounds (see, e.g., Ref. \cite{Pirandola_15}). The depolarising channel has zero capacity for $\lambda>2/3$, where it becomes entanglement breaking, so that region has not been plotted. Red dot-dashed lines: smallest known upper bound on the squashed entanglement of the channels \cite{TGW_IEEE_14, Goodenough_16}. In the specific case of the erasure channel, one has $E_{\text{sq}}(\mathcal{E}^{(\text{er})}_\lambda) = E_{\text{R}}(\mathcal{E}^{(\text{er})}_\lambda)$ \cite{Goodenough_16,Pirandola_15}. Black dots: numerical evaluations of $E_{\text{max}}(\mathcal N_\lambda)$, obtained via the SDP optimisation in \Eq{eq:Sigma}.
		\label{fig: Choi-simulable plots}
	}
\end{figure}

\section{Examples}\label{sec: examples}
As we already mentioned in Sec.~\ref{sec: new versatile bound}, in order to assess whether Theorem~$\ref{thm: network bound}$ leads to a tighter bound than the version of Theorem~\ref{thm: ent-based network bound} based on the squashed entanglement, for any considered bipartition of the network one should study the sign of the parameter $\tilde \mu_{\mathcal C_A}$. 
This can be found as in \Eq{def: mu parameter}, but substituting $E_{\text{sq}}(\mathcal N)$ with its best known upper bound available in the literature.
In what follows we provide two examples where $\tilde \mu_{\mathcal C_A} >0$.

At first, we should stress that there are quantum channels with  $E_{\text{sq}}(\mathcal N)$ much larger than $E_{\text{max}}(\mathcal N)$. An example are the ``flower channels'' \cite{Christandl_05_flower,Horodecki_05_flower} for which the gap between these two quantities can increase with the dimension of the input system \cite{Christandl_16}. This is due to the fact that the squashed entanglement is ``lockable'', which means that by tracing out a subsystem of dimension $d$ its value can change by an amount more than logarithmic in $d$. On the contrary, $E_{\text{max}}$ is not lockable, and it does not suffer from this drawback. Therefore, ${\mathcal E}^\prime$ would be much tighter than ${\mathcal E}_{E_{\text{sq}}}$ when evaluated on bipartitions mostly composed by flower channels, or composed by flower channels and Choi-simulable channels with $E_{\text{R}}$ smaller than $E_{\text{sq}}$, as the qubit channels studied in Sec.~\ref{sec: other channels}.
However, it could be argued that this example is rather artificial, and it is not likely to appear in any realistic communication scenario. For this reason, we also consider a more practical example where the two components of a bipartition $A\mathcal C_A:\mathcal C_BB$ are connected by $k$ dephasing channels $\mathcal N^{(\text{deph})}_x$ and $1$ amplitude damping channel $\mathcal N^{(\text{ad})}_\lambda$, as shown in Fig.~\ref{fig: AD example}. 

\begin{figure}
	\centering
	\begin{tikzpicture}
	\draw (0,0) arc(-65:65:2);
	\draw (6,0) arc(180+65:180-65:2);
	\draw[->,decorate, decoration = snake] (1.5,1.67) -- (4.5,1.67);
	\draw[->] (1.2,2.68) -- (4.8, 2.68);
	\draw[->] (1.1,2.88) -- (4.9, 2.88);
	\draw[->] (1.2,0.68) -- (4.8, 0.68);
	\draw[->] (1.1,0.48) -- (4.9, 0.48);
	\node at (0,1.78) {\huge $A \mathcal C_A$};
	\node at (6,1.78) {\huge $\mathcal C_B B$};
	\node[above] at (3.1, 1.69) {\large $\mathcal N^{(\text{ad})}_\lambda$};
	\node[above] at (3.1, 2.9) {\large $\mathcal N^{(\text{deph})}_x \otimes \mathcal N^{(\text{deph})}_x$};
	\node[above] at (3.1, 0.68) {\large $\mathcal N^{(\text{deph})}_x \otimes \mathcal N^{(\text{deph})}_x$};
	\end{tikzpicture}
	\caption{Example of bipartition $A\mathcal C_A:\mathcal C_BB$ connected by $k = 4$ dephasing channels (straight lines) and $1$ amplitude damping channel (wiggling line). Once a bipartition of the network has been selected, it is not necessary to keep track of the precise nodes connected by the channels in order to apply Theorem \ref{thm: network bound}. \label{fig: AD example}}
\end{figure}
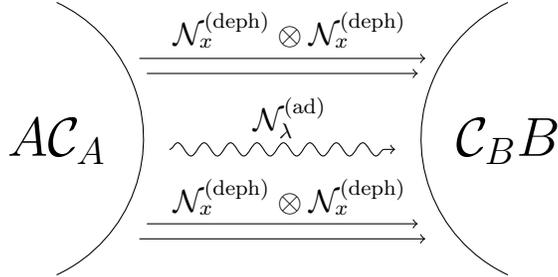

If we assume that all channels are used the same average number of times, we can express $\tilde \mu_{\mathcal C_A}$ as a function of $k$ and of the parameters $x,\lambda \in [0,1]$. In particular, we can write
\begin{equation}
\tilde \mu_{\mathcal C_A} = \frac{k[\tilde E_{\text{sq}}(\mathcal N^{(\text{deph})}_x) - E_{\text{R}}(\mathcal N^{(\text{deph})}_x)] + [\tilde E_{\text{sq}}(\mathcal N^{(\text{ad})}_\lambda) -  E_{\text{max}}(\mathcal N^{(\text{ad})}_\lambda)]}
{k[\tilde E_{\text{sq}}(\mathcal N^{(\text{deph})}_x) + E_{\text{R}}(\mathcal N^{(\text{deph})}_x)] + [\tilde E_{\text{sq}}(\mathcal N^{(\text{ad})}_\lambda) +  E_{\text{max}}(\mathcal N^{(\text{ad})}_\lambda)]},
\end{equation}
where $\tilde E_{\text{sq}}(\mathcal N^{(\text{deph})}_x)$ and  $\tilde E_{\text{sq}}(\mathcal N^{(\text{ad})}_\lambda)$ are respectively the best known upper bounds on $E_{\text{sq}}(\mathcal N^{(\text{deph})}_x)$ \cite{TGW_IEEE_14} and $E_{\text{sq}}(\mathcal N^{(\text{ad})}_\lambda)$ \cite{Pirandola_15}, which have been plotted as red dot-dashed curves in Figs. \ref{fig: dephasing} and \ref{fig: amp damp bounds}:
\begin{align}
\tilde E_{\text{sq}}(\mathcal N^{(\text{deph})}_x) &= h\left(\sqrt{\frac{x}{2}  \left(1-\frac{x}{2}\right)}+\frac{1}{2}\right),\\
\tilde E_{\text{sq}}(\mathcal N^{(\text{ad})}_\lambda) &= h\left(\frac{1}{2} - \frac{\lambda}{4}\right) - h\left(1 - \frac{\lambda}{4}\right),
\end{align}
where $h(y) \equiv - y \log_2 y - (1-y) \log_2 (1-y)$. 
Moreover, the quantity $E_{\text{max}}(\mathcal N^{(\text{ad})}_\lambda)$ has been shown to coincide with the upper bound obtained in Proposition~\ref{prop: bounds on Emax}, whereas the quantity $E_{\text{R}}(\mathcal N^{(\text{deph})}_x)$ is known to be equal to ${1- h(x/2)}$ \cite{Pirandola_15}.
The results obtained for $\tilde \mu_{\mathcal C_A} $ are plotted in Fig.~\ref{fig: examples} for $k = 1, 5, 10$ and $50$. As expected, we can see that the region of parameters $(x,\lambda)$ with $\tilde \mu_{\mathcal C_A} >0$, i.e., in which our versatile bound is advantageous, becomes larger with $k$. 
However, even for $k=1$ there is a broad set of parameters for which our versatile bound is tighter than the bound based on the squashed entanglement. In particular, this is the case when $\lambda \simeq 1$, because the negative contribution in  $\tilde \mu_{\mathcal C_A}$ from $ E_{\text{max}}(\mathcal N^{(\text{ad})}_\lambda) \geq \tilde E_{\text{sq}}(\mathcal N^{(\text{ad})}_\lambda)$ is close to zero. On the contrary, the bound based on the squashed entanglement is preferable when $x \simeq 1$, because 
$E_{\text{sq}}(\mathcal N^{(\text{deph})}_x)$ is close to zero and $E_{\text{R}}(\mathcal N^{(\text{deph})}_x)$ cannot be significantly smaller. The peak that can be observed in $\tilde \mu_{\mathcal C_A}$ for $x,\lambda \to 1$ is due to the fact that the upper bounds on the number of ebits (pbits) produced by the network go to zero, and small differences of one bound with respect to the other become significant.

\begin{figure}
	\centering
	\subfloat[$k = 1$.]{\includegraphics[scale = 0.63]{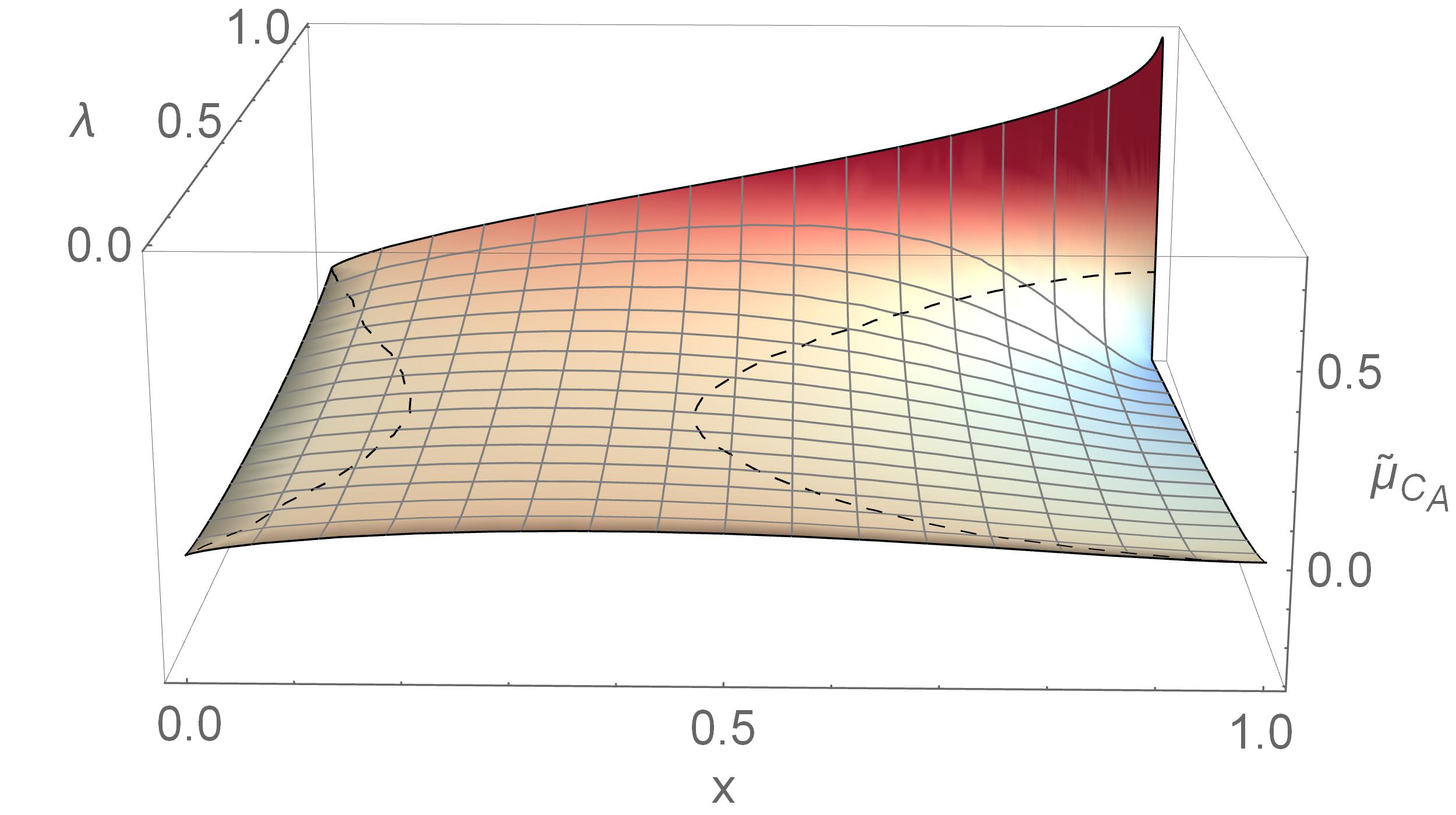}} 
	\subfloat[$k = 5$.]{\includegraphics[scale = 0.63]{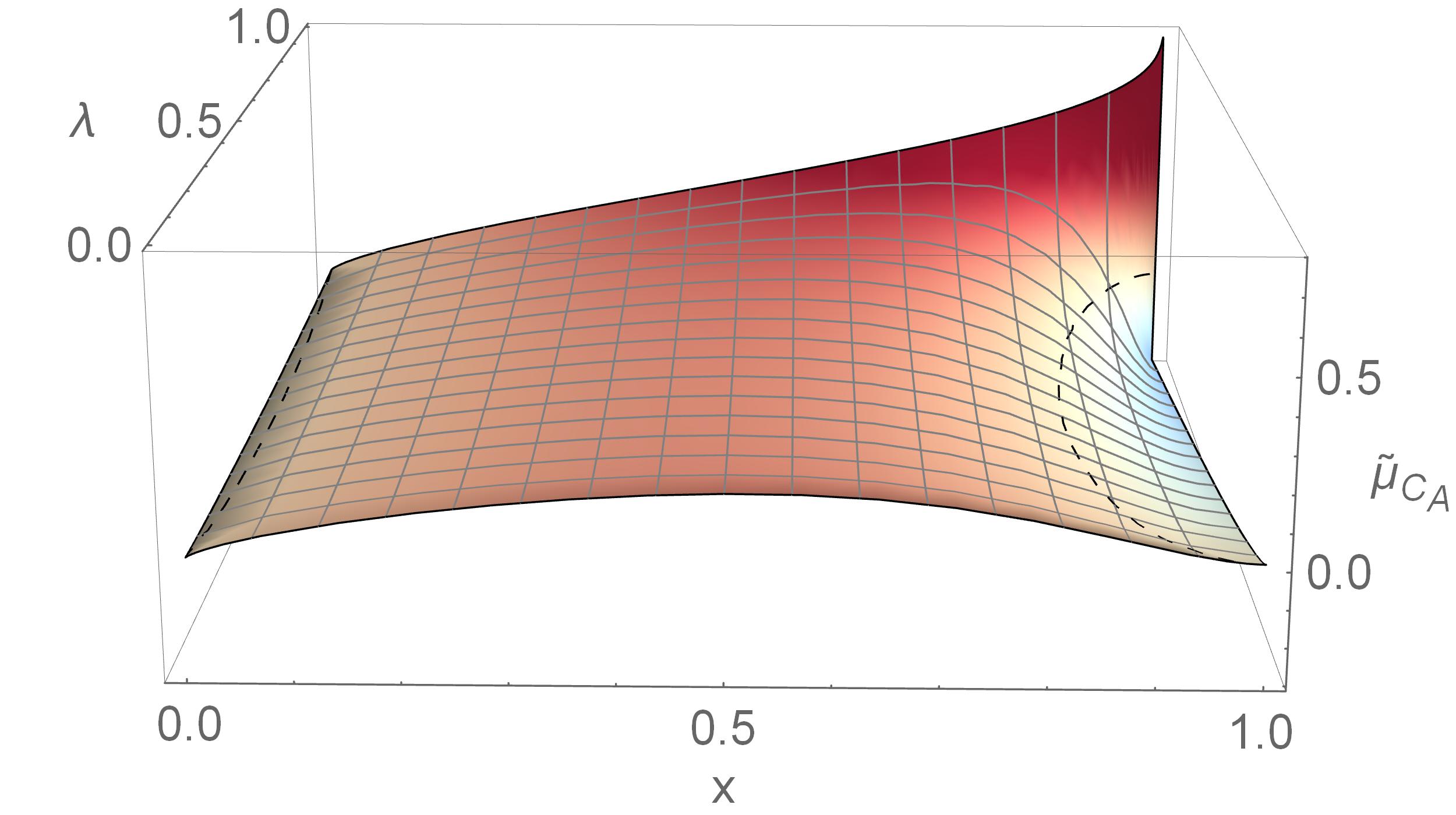}}\\
	\subfloat{\includegraphics[scale = 0.63]{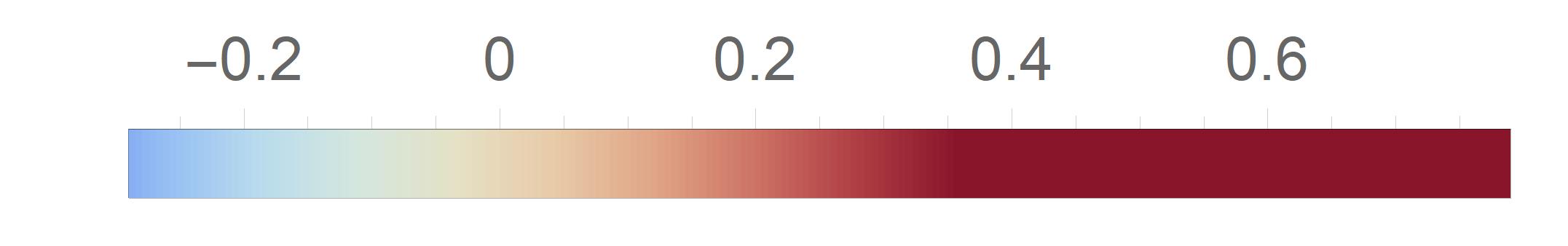}}\\
	\addtocounter{subfigure}{-1}
	\subfloat[$k = 10$.]{\includegraphics[scale = 0.63]{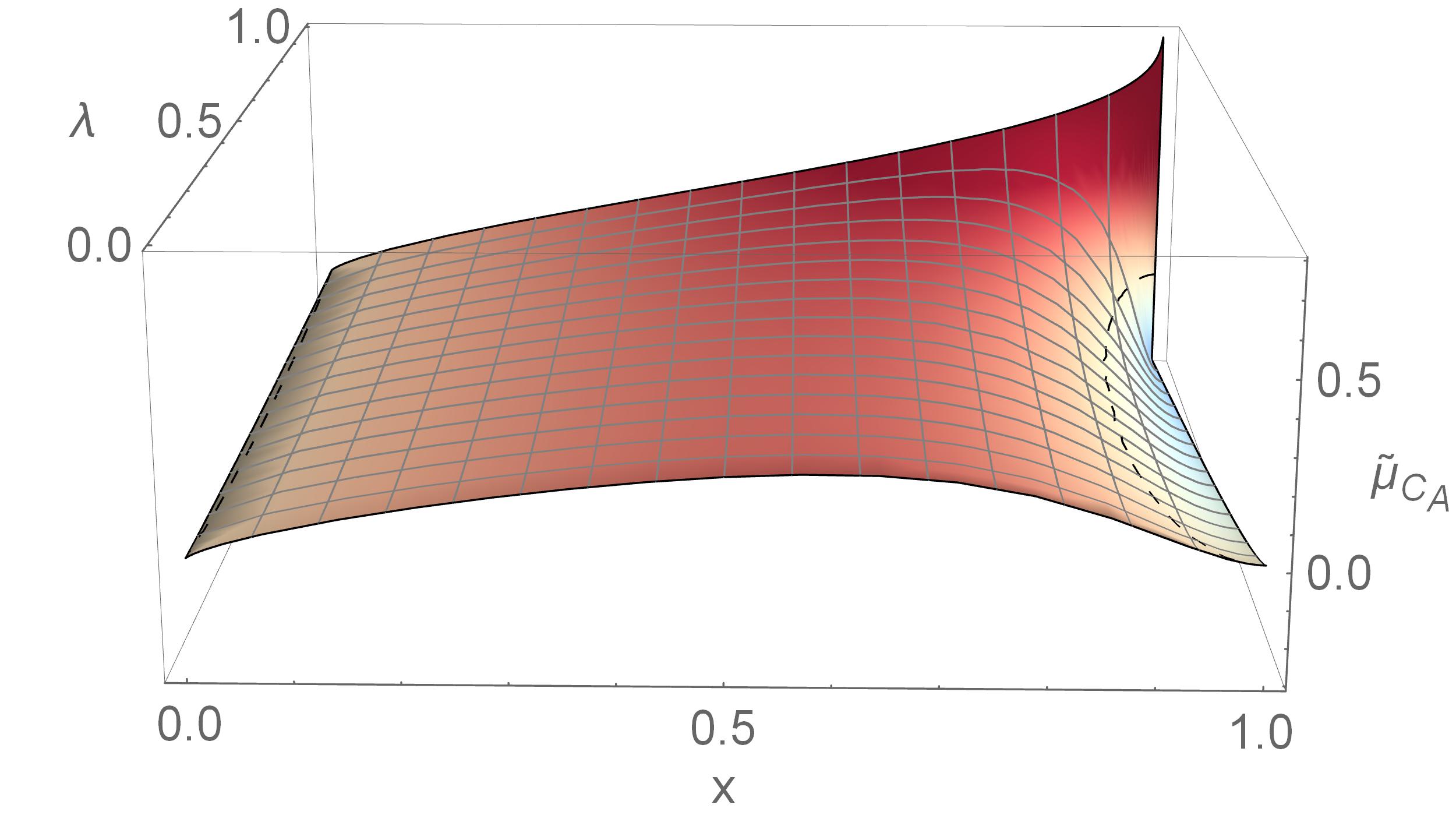}}
	\subfloat[$k = 50$.]{\includegraphics[scale = 0.63]{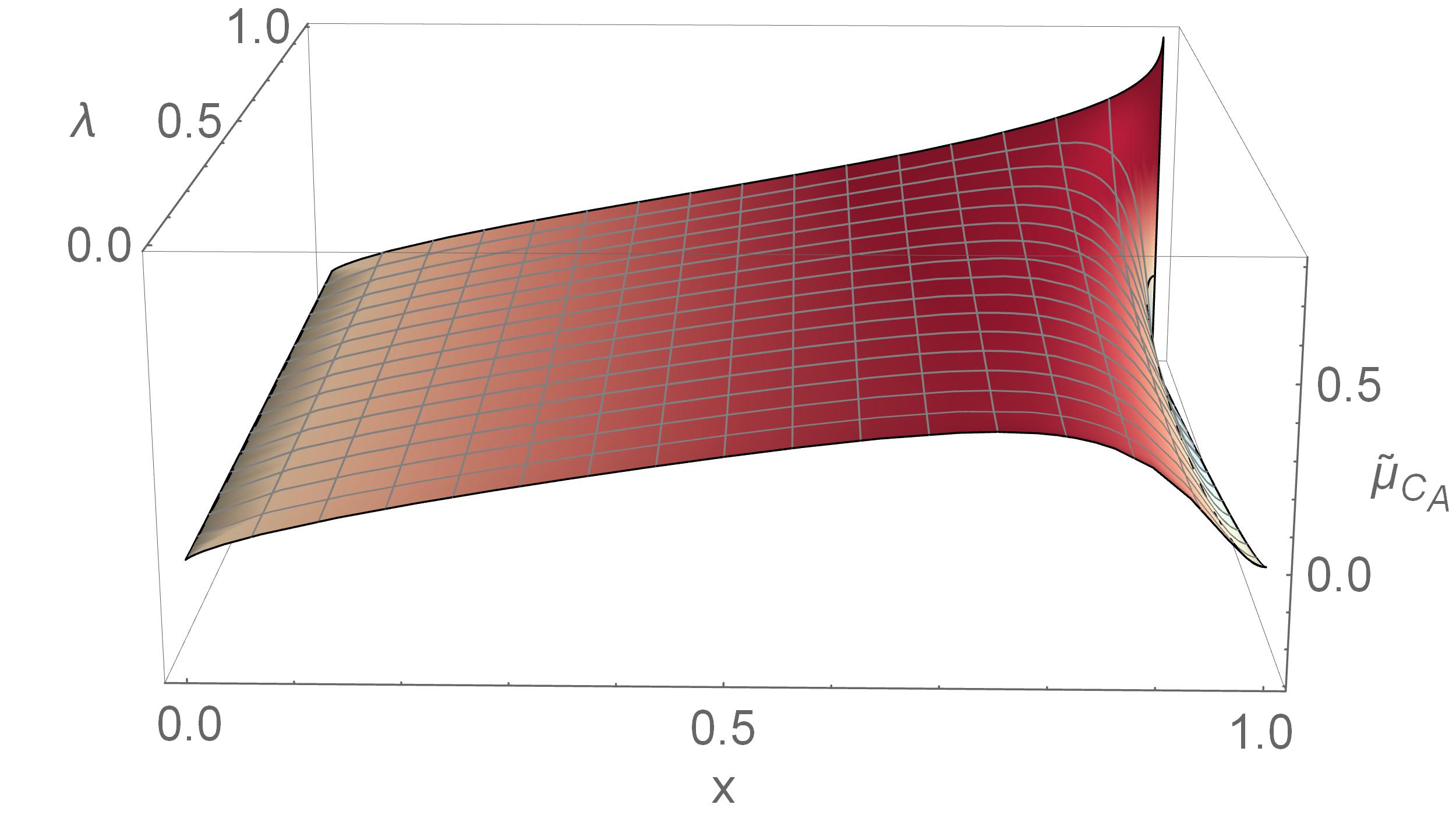}}
	\caption{Relative advantage of the versatile upper bound ${\mathcal E}^\prime$ over the upper bound ${\mathcal E}_{E_{\text{sq}}}$ based on the squashed entanglement, as measured by the parameter $\tilde \mu_{\mathcal C_A}$, for a bipartition of the network whose components are connected by $k$ dephasing channels $\mathcal N^{(\text{deph})}_x$ and $1$ amplitude damping channel $\mathcal N^{(\text{ad})}_\lambda$. The set of points characterised by $\tilde \mu_{\mathcal C_A} = 0$ is highlighted on the plots by dashed black curves. Our versatile bound is tighter than the best known upper bound based on the squashed entanglement on the regions where $\tilde \mu_{\mathcal C_A} > 0$.
		\label{fig: examples}
	}
\end{figure}

\section{Discussion and Conclusions}\label{sec: disc and conclusions}

In this paper, we investigated the possibility of using multiple entanglement measures in order to upper bound the number of ebits (or pbits) shared by two parties at the end of a communication protocol over a quantum network, with no limit on their classical communication. In particular, we exploited the special relation between the relative entropy and the max-relative entropy of entanglement, summarised by \Eq{eq: Emax bound}, in order to jointly use them in a single bound, which retains the advantages of both measures. For instance, it is possible to take advantage from the presence of Choi-simulable channels in the network, without requiring this property beforehand. 
From a theoretical perspective, our versatile bound performs much better than the previously known bound, which was based on the squashed entanglement, on networks composed by flower channels and Choi-simulable channels with $E_{\text{R}}$ smaller than $E_{\text{sq}}$. For more physically relevant quantum networks, in general one should check on a case-by-case basis which upper bound yields the tightest result. However, we can expect the versatile bound introduced in Theorem~\ref{thm: network bound} to be the best choice when the number of Choi-simulable channels is larger than the number of channels not satisfying this property, at least as long as $E_{\text{R}}$ provides tighter bounds than $E_{\text{sq}}$ on the Choi-simulable components of the network. This intuition was confirmed for a network composed by $k$ dephasing channels and one amplitude damping channel, where already for $k=5$ our versatile bound performed better on a broad range of parameters.

We should also reiterate that, according to the authors of Ref. \cite{Christandl_16}, \Eq{eq: generic Emax bound} has been rigorously proven only for channels acting on finite dimensional systems. As Theorem~\ref{thm: network bound} heavily relies upon that inequality, one should pay special attention when applying Theorem \ref{thm: network bound} to infinite dimensional channels, as long as the proof of \Eq{eq: generic Emax bound}  will not be suitably extended. Notice, however, that at least some bosonic channels (e.g., photon losses) are Choi-simulable: in these cases we can safely upper bound the entanglement of their output state via \Eq{eq: ER bound} \cite{Pirandola_15} and Theorem \ref{thm: network bound} still holds.

The advantage provided by our method would be further increased if more entanglement measures could be included within the same framework. An obvious candidate would be the squashed entanglement, because it typically provides tighter upper bounds on the capacity of a quantum channel than $E_{\text{max}}$, while being at the same time broadly applicable. This  research line could go together with the search for other entanglement measures that can provide upper bounds on channel capacities. From this point of view, we feel that the schematic framework provided by Theorems~\ref{thm: key hypothesis} and \ref{thm: ent-based network bound} could act as a guideline for future investigations. It would also be interesting to look into the possibility of extending this ``versatile'' approach to a multi-user scenario, where the network is composed by broadcast quantum channels \cite{Yard_11,Seshadreesan_16_broadcast,Baeuml_16,Laurenza_broadcast_2016,Takeoka_broadcast_2016,Takeoka_broadcast_17}.  

As a final remark, notice that the idea behind our result can be applied more generally in order to bound the rate at which a parallel composition of quantum channels can generate ebits (or pbits), when assisted by unlimited classical communication. 
Furthermore, although this paper has been developed from the perspective of quantum communication, it is worth stressing that the problem of quantifying the amount of bipartite, or multipartite, entanglement shared among the nodes of a network is also relevant from the perspective of quantum computation. In this paradigm, the quantum channels can be interpreted as noisy physical operations, and the nodes could represent, for example, the components of a cluster state. As the possibility of performing measurement-based universal quantum computation strongly depends on the entanglement of the initial resource state \cite{Briegel_MeasComp}, the ideas developed in this paper could also help in assessing the quality of entangled resources \cite{Yokoyama_cluster}, by considering $\mathcal P_{\epsilon,n}$ as the sequence of operations generating them.

\section*{Acknowledgements}
The authors thank Alexander M{\"u}ller-Hermes for discussions. We are also indebted with an anonymous referee for bringing to our attention the SDP formulation of the max-relative entropy of entanglement of qubit channels, as well as the sufficient conditions for the strong converse property mentioned in Corollary \ref{cor: capacity bound}.
LR expresses his gratitude to the Theoretical Quantum Physics Research Group of NTT BRL for the warm hospitality received during his visit, and acknowledges financial support from the People Programme (Marie Curie Actions) of the European Union’s Seventh Framework Programme (FP7/2007-2013) under REA Grant Agreement 317232. KA and GK thank support from the ImPACT Program of Council for Science, Technology and Innovation (Cabinet Office, Government of Japan). MSK acknowledges the UK EPSRC grant (EP/K034480/1), Samsung GRO programme and the Royal Society. WJM acknowledges support from the John Templeton Foundation (JTF \#60478). The opinions expressed in this publication are those of the author(s) and do not necessarily reflect the views of the John Templeton Foundation.

\newpage
\bibliography{Network}

\begin{thebibliography}{10}
\providecommand{\url}[1]{#1}
\csname url@samestyle\endcsname
\providecommand{\newblock}{\relax}
\providecommand{\bibinfo}[2]{#2}
\providecommand{\BIBentrySTDinterwordspacing}{\spaceskip=0pt\relax}
\providecommand{\BIBentryALTinterwordstretchfactor}{4}
\providecommand{\BIBentryALTinterwordspacing}{\spaceskip=\fontdimen2\font plus
\BIBentryALTinterwordstretchfactor\fontdimen3\font minus
  \fontdimen4\font\relax}
\providecommand{\BIBforeignlanguage}[2]{{%
\expandafter\ifx\csname l@#1\endcsname\relax
\typeout{** WARNING: IEEEtran.bst: No hyphenation pattern has been}%
\typeout{** loaded for the language `#1'. Using the pattern for}%
\typeout{** the default language instead.}%
\else
\language=\csname l@#1\endcsname
\fi
#2}}
\providecommand{\BIBdecl}{\relax}
\BIBdecl

\bibitem{Wilde_book}
M.~M. Wilde, \emph{Quantum Information Theory}, 1st~ed.\hskip 1em plus 0.5em
  minus 0.4em\relax New York, NY, USA: Cambridge University Press, 2013.

\bibitem{Horodecki_05}
K.~Horodecki, M.~Horodecki, P.~Horodecki, and J.~Oppenheim, ``Secure key from
  bound entanglement'', \emph{Phys. Rev. Lett.}, ~94, p. 160502, 2005.

\bibitem{Horodecki_09}
K.~Horodecki, M.~Horodecki, P.~Horodecki, and J.~Oppenheim, ``General paradigm
  for distilling classical key from quantum states'', \emph{IEEE Trans. Inf.
  Theory}, ~55, no.~4, pp. 1898--1929, 2009.

\bibitem{Horodecki_ent}
R.~Horodecki, P.~Horodecki, M.~Horodecki, and K.~Horodecki, ``Quantum
  entanglement'', \emph{Rev. Mod. Phys.}, ~81, pp. 865--942, 2009.

\bibitem{Ekert_91}
A.~K. Ekert, ``Quantum cryptography based on {B}ell's theorem'', \emph{Phys.
  Rev. Lett.}, ~67, pp. 661--663, 1991.

\bibitem{Bennett_92}
C.~H. Bennett, G.~Brassard, and N.~D. Mermin, ``Quantum cryptography without
  {B}ell's theorem'', \emph{Phys. Rev. Lett.}, ~68, pp. 557--559, 1992.

\bibitem{Bennett_93_teleportation}
C.~H. Bennett, G.~Brassard, C.~Cr\'epeau, R.~Jozsa, A.~Peres, and W.~K.
  Wootters, ``Teleporting an unknown quantum state via dual classical and
  einstein-podolsky-rosen channels'', \emph{Phys. Rev. Lett.}, ~70, pp.
  1895--1899, 1993.

\bibitem{Horodecki_05_stateMerging}
M.~Horodecki, J.~Oppenheim, and A.~Winter, ``Partial quantum information'',
  \emph{Nature},  436, pp. 673--676, 2005.

\bibitem{TGW_nature_14}
M.~Takeoka, S.~Guha, and M.~M. Wilde, ``Fundamental rate-loss tradeoff for
  optical quantum key distribution'', \emph{Nat. Commun.}, ~5, no. 5235, 2014.

\bibitem{Munro_2012}
W.~J. Munro, A.~M. Stephens, S.~J. Devitt, K.~A. Harrison, and K.~Nemoto,
  ``Quantum communication without the necessity of quantum memories'',
  \emph{Nat. Photon.}, ~6, no. 1749-4885, pp. 777--781, 2012.

\bibitem{Azuma_2015}
K.~Azuma, K.~Tamaki, and H.-K. Lo, ``All-photonic quantum repeaters'',
  \emph{Nat. Commun.}, ~6, no. 6787, 2015.

\bibitem{Azuma_2015b}
K.~Azuma, K.~Tamaki, and W.~J. Munro, ``All-photonic intercity quantum key
  distribution'', \emph{Nat. Commun.}, ~6, no. 10171, 2015.

\bibitem{TGW_IEEE_14}
M.~Takeoka, S.~Guha, and M.~M. Wilde, ``The squashed entanglement of a quantum
  channel'', \emph{IEEE Trans. Inf. Theory}, ~60, no.~8, pp. 4987--4998, 2014.

\bibitem{Pirandola_15}
S.~{Pirandola}, R.~{Laurenza}, C.~{Ottaviani}, and L.~{Banchi}, ``{Fundamental
  Limits of Repeaterless Quantum Communications}'', \emph{Nat. Commun.}, ~8,
  no. 15073, 2017.

\bibitem{Christandl_16}
M.~Christandl and A.~M{\"u}ller-Hermes, ``Relative entropy bounds on quantum,
  private and repeater capacities'', \emph{Commun. Math. Phys.},  353, no.~2,
  pp. 821--852, 2017.

\bibitem{Goodenough_16}
K.~Goodenough, D.~Elkouss, and S.~Wehner, ``Assessing the performance of
  quantum repeaters for all phase-insensitive gaussian bosonic channels'',
  \emph{New J. Phys.}, ~18, no.~6, p. 063005, 2016.

\bibitem{Wilde_2017}
M.~M. Wilde, M.~Tomamichel, and M.~Berta, ``Converse bounds for private
  communication over quantum channels'', \emph{IEEE Trans. Inf. Theory}, ~63,
  no.~3, pp. 1792--1817, 2017.

\bibitem{Kaur_2017}
E.~{Kaur} and M.~M. {Wilde}, ``Upper bounds on secret key agreement over lossy
  thermal bosonic channels'', \emph{ArXiv: 1706.04590}, 2017.

\bibitem{Cope_2017}
T.~P.~W. Cope, L.~Hetzel, L.~Banchi, and S.~Pirandola, ``Simulation of
  non-{P}auli channels'', \emph{Phys. Rev. A}, ~96, p. 022323, 2017.

\bibitem{Laurenza_2017}
R.~{Laurenza}, S.~L. {Braunstein}, and S.~{Pirandola}, ``{Finite-resource
  teleportation stretching for continuous-variable systems}'', \emph{ArXiv:
  1706.06065}, 2017.

\bibitem{Christandl_04}
M.~Christandl and A.~Winter, ````{Squashed entanglement}'': An additive
  entanglement measure'', \emph{J. Math. Phys.}, ~45, no.~3, pp. 829--840,
  2004.

\bibitem{Huang_Ref_2014}
Y.~Huang, ``Computing quantum discord is np-complete'', \emph{New J. Phys.},
  ~16, no.~3, p. 033027, 2014.

\bibitem{Kimble_08}
H.~J. Kimble, ``The quantum internet'', \emph{Nature},  453, pp. 1023--1030,
  2008.

\bibitem{Azuma_16}
K.~{Azuma}, A.~{Mizutani}, and H.-K. {Lo}, ``{Fundamental rate-loss tradeoff
  for the quantum internet}'', \emph{Nat. Commun.}, ~7, no. 13523, 2016.

\bibitem{Munro_repeaters}
W.~J. Munro, K.~Azuma, K.~Tamaki, and K.~Nemoto, ``Inside quantum repeaters'',
  \emph{IEEE J. Sel. Topics Quantum Electron.}, ~21, no.~3, pp. 78--90, 2015.

\bibitem{Schoute_2016}
E.~{Schoute}, L.~{Mancinska}, T.~{Islam}, I.~{Kerenidis}, and S.~{Wehner},
  ``{Shortcuts to quantum network routing}'', \emph{ArXiv: 1610.05238}, 2016.

\bibitem{Sangouard_2011}
N.~Sangouard, C.~Simon, H.~de~Riedmatten, and N.~Gisin, ``Quantum repeaters
  based on atomic ensembles and linear optics'', \emph{Rev. Mod. Phys.}, ~83,
  pp. 33--80, 2011.

\bibitem{Peev_2009}
M.~Peev \emph{et~al.}, ``The {SECOQC} quantum key distribution network in
  {V}ienna'', \emph{New J. Phys.}, ~11, no.~7, p. 075001, 2009.

\bibitem{Stucki_2011}
D.~Stucki \emph{et~al.}, ``Long-term performance of the {S}wiss{Q}uantum
  quantum key distribution network in a field environment'', \emph{New J.
  Phys.}, ~13, no.~12, p. 123001, 2011.

\bibitem{Sasaki}
M.~Sasaki \emph{et~al.}, ``Field test of quantum key distribution in the tokyo
  qkd network'', \emph{Opt. Express}, ~19, no.~11, pp. 10\,387--10\,409, 2011.

\bibitem{Scarani_QKD_2009}
V.~Scarani, H.~Bechmann-Pasquinucci, N.~J. Cerf, M.~Du\ifmmode~\check{s}\else
  \v{s}\fi{}ek, N.~L\"utkenhaus, and M.~Peev, ``The security of practical
  quantum key distribution'', \emph{Rev. Mod. Phys.}, ~81, pp. 1301--1350,
  2009.

\bibitem{Pirandola_Network_2016}
S.~{Pirandola}, ``{Capacities of repeater-assisted quantum communications}'',
  \emph{ArXiv:1601.00966}, 2016.

\bibitem{QBC_def}
T.~Cover, ``Broadcast channels'', \emph{IEEE Trans. Inf. Theory}, ~18, pp.
  2--14, 1972.

\bibitem{Yard_11}
J.~Yard, P.~Hayden, and I.~Devetak, ``Quantum broadcast channels'', \emph{IEEE
  Trans. Inf. Theory}, ~57, no.~10, pp. 7147--7162, 2011.

\bibitem{Seshadreesan_16_broadcast}
K.~P. Seshadreesan, M.~Takeoka, and M.~M. Wilde, ``Bounds on entanglement
  distillation and secret key agreement for quantum broadcast channels'',
  \emph{IEEE Trans. Inf. Theory}, ~62, no.~5, pp. 2849--2866, 2016.

\bibitem{Baeuml_16}
S.~B\"{a}uml and K.~Azuma, ``Fundamental limitation on quantum broadcast
  networks'', \emph{Quantum Sci. Technol.}, ~2, no.~2, p. 024004, 2017.

\bibitem{Laurenza_broadcast_2016}
R.~Laurenza and S.~Pirandola, ``General bounds for sender-receiver capacities
  in multipoint quantum communications'', \emph{Phys. Rev. A}, ~96, p. 032318,
  2017.

\bibitem{Takeoka_broadcast_2016}
M.~Takeoka, K.~P. Seshadreesan, and M.~M. Wilde, ``Unconstrained distillation
  capacities of a pure-loss bosonic broadcast channel'', in \emph{2016 IEEE
  ISIT}, 2016, pp. 2484--2488.

\bibitem{Takeoka_broadcast_17}
M.~Takeoka, K.~P. Seshadreesan, and M.~M. Wilde, ``Unconstrained capacities of
  quantum key distribution and entanglement distillation for pure-loss bosonic
  broadcast channels'', \emph{Phys. Rev. Lett.},  119, p. 150501, 2017.

\bibitem{berta2017amortization}
M.~Berta and M.~M. Wilde, ``Amortization does not enhance the max-rains
  information of a quantum channel'', \emph{ArXiv: 1709.04907}, 2017.

\bibitem{umegaki1962}
H.~Umegaki, ``Conditional expectation in an operator algebra. iv. entropy and
  information'', \emph{Kodai Math. Sem. Rep.}, ~14, no.~2, pp. 59--85, 1962.

\bibitem{Datta_09}
N.~Datta, ``Min- and max-relative entropies and a new entanglement monotone'',
  \emph{IEEE Trans. Inf. Theory}, ~55, no.~6, pp. 2816--2826, 2009.

\bibitem{JointConvexity}
G.~Lindblad, ``Expectations and entropy inequalities for finite quantum
  systems'', \emph{Commun. Math. Phys.}, ~39, pp. 111--119, 1974.

\bibitem{Plenio_1998}
V.~Vedral and M.~B. Plenio, ``Entanglement measures and purification
  procedures'', \emph{Phys. Rev. A}, ~57, pp. 1619--1633, 1998.

\bibitem{Datta_09a}
N.~Datta, ``Max-relative entropy of entanglement, alias log robustness'',
  \emph{Int. J. Quantum Inf.}, ~07, no.~02, pp. 475--491, 2009.

\bibitem{Bennett_1996}
C.~H. Bennett, D.~P. DiVincenzo, J.~A. Smolin, and W.~K. Wootters,
  ``Mixed-state entanglement and quantum error correction'', \emph{Phys. Rev.
  A}, ~54, pp. 3824--3851, 1996.

\bibitem{Gottesman_99}
D.~Gottesman and I.~L. Chuang, ``Demonstrating the viability of universal
  quantum computation using teleportation and single-qubit operations'',
  \emph{Nature},  402, pp. 390--393, 1999.

\bibitem{horodecki_tp_1999}
M.~Horodecki, P.~Horodecki, and R.~Horodecki, ``General teleportation channel,
  singlet fraction, and quasidistillation'', \emph{Phys. Rev. A}, ~60, pp.
  1888--1898, 1999.

\bibitem{KLM}
E.~Knill, R.~Laflamme, and G.~J. Milburn, ``A scheme for efficient quantum
  computation with linear optics'', \emph{Nature},  409, pp. 46--52, 2001.

\bibitem{Wolf_tp_2007}
M.~M. Wolf, D.~P\'erez-Garc\'{\i}a, and G.~Giedke, ``Quantum capacities of
  bosonic channels'', \emph{Phys. Rev. Lett.}, ~98, p. 130501, 2007.

\bibitem{Cerf_tp_2009}
J.~Niset, J.~Fiur\'a\ifmmode~\check{s}\else \v{s}\fi{}ek, and N.~J. Cerf,
  ``No-go theorem for gaussian quantum error correction'', \emph{Phys. Rev.
  Lett.},  102, p. 120501, 2009.

\bibitem{MH_thesis}
A.~M\"{u}ller-Hermes, ``Transposition in quantum information theory'',
  \emph{Master’s thesis, Technical University of Munich}, 2012.

\bibitem{Christandl_antisymm_2010}
M.~Christandl, N.~Schuch, and A.~Winter, ``Highly entangled states with almost
  no secrecy'', \emph{Phys. Rev. Lett.},  104, p. 240405, 2010.

\bibitem{Christandl_antisymm_2012}
M.~Christandl, N.~Schuch, and A.~Winter, ``Entanglement of the antisymmetric
  state'', \emph{Commun. Math. Phys.},  311, no.~2, pp. 397--422, 2012.

\bibitem{Azuma&Kato_2016}
K.~Azuma and G.~Kato, ``Aggregating quantum repeaters for the quantum
  internet'', \emph{Phys. Rev. A}, ~96, p. 032332, 2017.

\bibitem{SynakRadtke_06}
B.~Synak-Radtke and M.~Horodecki, ``On asymptotic continuity of functions of
  quantum states'', \emph{J. Phys. A: Math. Gen.}, ~39, no.~26, p. L423, 2006.

\bibitem{Wilde_16}
M.~M. Wilde, ``Squashed entanglement and approximate private states'',
  \emph{Quantum Inf. Process.}, ~15, no.~11, pp. 4563--4580, 2016.

\bibitem{Tomamichel_13}
M.~M\"{u}ller-Lennert, F.~Dupuis, O.~Szehr, S.~Fehr, and M.~Tomamichel, ``On
  quantum {R}\'{e}nyi entropies: A new generalization and some properties'',
  \emph{J. Math. Phys.}, ~54, no.~12, p. 122203, 2013.

\bibitem{Wilde2014_entropy}
M.~M. Wilde, A.~Winter, and D.~Yang, ``Strong converse for the classical
  capacity of entanglement-breaking and hadamard channels via a sandwiched
  r{\'e}nyi relative entropy'', \emph{Commun. Math. Phys.},  331, no.~2, pp.
  593--622, 2014.

\bibitem{Beigi2013}
S.~Beigi, ``Sandwiched {R}\'{e}nyi divergence satisfies data processing
  inequality'', \emph{J. Math. Phys.}, ~54, no.~12, p. 122202, 2013.

\bibitem{Devetak_RevCohInfo_2006}
I.~Devetak, M.~Junge, C.~King, and M.~B. Ruskai, ``Multiplicativity of
  completely bounded p-norms implies a new additivity result'', \emph{Commun.
  Math. Phys.},  266, no.~1, pp. 37--63, 2006.

\bibitem{Christandl_05_flower}
M.~Christandl and A.~Winter, ``Uncertainty, monogamy, and locking of quantum
  correlations'', \emph{IEEE Trans. Inf. Theory}, ~51, no.~9, pp. 3159--3165,
  2005.

\bibitem{Horodecki_05_flower}
K.~Horodecki, M.~Horodecki, P.~Horodecki, and J.~Oppenheim, ``Locking
  entanglement with a single qubit'', \emph{Phys. Rev. Lett.}, ~94, p. 200501,
  2005.

\bibitem{Briegel_MeasComp}
H.~J. Briegel, D.~E. Browne, W.~Dur, R.~Raussendorf, and M.~Van~den Nest,
  ``Measurement-based quantum computation'', \emph{Nat. Phys.}, ~5, pp. 19--26,
  2009.

\bibitem{Yokoyama_cluster}
S.~Yokoyama, R.~Ukai, S.~C. Armstrong, C.~Sornphiphatphong, T.~Kaji, S.~Suzuki,
  J.-i. Yoshikawa, H.~Yonezawa, N.~C. Menicucci, and A.~Furusawa,
  ``Ultra-large-scale continuous-variable cluster states multiplexed in the
  time domain'', \emph{Nat. Photon.}, ~7, no. 1749-4885, pp. 982--986, 2013.

\bibitem{Horodecki_PPT_96}
M.~Horodecki, P.~Horodecki, and R.~Horodecki, ``Separability of mixed states:
  necessary and sufficient conditions'', \emph{Phys. Lett. A},  223, no.~1, pp.
  1--8, 1996.

\end{thebibliography}
\bibliographystyle{ieeetranmod2}

\pagebreak
\appendix

\section{Proof of Proposition~\ref{prop: bounds on Emax}}\label{app: proposition proofs}
	The lower bound is simply obtained by using the maximally entangled state $\psi_{A A^\prime}$ as input in \Eq{eq: Emax explicit}, without optimising over all $\rho_{A A^\prime}$. Furthermore, its equality with $E_{\text{max}} (\mathcal N)$ itself in the case of Choi-simulable channels can be obtained as in the last step of \Eq{eq: ER bound}. Indeed, that argument holds for any entanglement measure and not only for $E_{\text{R}}$.
	
	The upper bound, on the other hand, is a re-elaborated version of the upper bound on the max-relative entropy of a channel studied in Ref. \cite{Christandl_16}. In order to obtain their result, the authors introduce a generic entanglement breaking (EB) channel $\mathcal T_{A^\prime\to B}$, and use the following chain of inequalities: 
	\begin{align}
	E_{\text{max}}(\mathcal N) &
	\leq \max_{\rho_{A A^\prime}} \min_{\mathcal T_{A^\prime\to B} \in \text{EB}} \inf_x\{x\in\mathbb{R} \vert \left(2^x \mathcal T_{A^\prime\to B} - \mathcal N_{A^\prime\to B}\right)[\rho_{AA^\prime}]\geq 0 \} \notag \\
	&\leq \min_{\mathcal T \in \text{EB}} \inf_x\{x\in\mathbb{R} \vert 2^x \pi_{\mathcal T} - \pi_{\mathcal N}\geq 0 \} 
	= \min_{\mathcal T \in \text{EB}}  D_{\text{max}}(\pi_{\mathcal N}\Vert\pi_{\mathcal T}). \label{eq: first UB on Emax}
	\end{align}
	The first inequality is obtained by optimising over a smaller set of separable states, in which $\sigma_{AB}$ is obtained as output of entanglement-breaking channels acting on the same input state $\rho_{A A^\prime}$. The second inequality is then obtained by noticing that $\left(2^x \mathcal T_{A^\prime\to B} - \mathcal N_{A^\prime\to B}\right)[\rho_{AA^\prime}]\geq 0$ for any input $\rho_{A A^\prime}$ if the operator $\left(2^x \mathcal T_{A^\prime\to B} - \mathcal N_{A^\prime\to B}\right)$ is completely positive, and that this last condition is implied by the positivity of its Choi-Jamio\l{}kowski state. 
	In order to obtain the upper bound of Proposition~\ref{prop: bounds on Emax}, we just need to show that the set of  states $\pi_{\mathcal T}$ appearing in \Eq{eq: first UB on Emax} corresponds to the set of separable density matrices $\sigma_{AB}$ such that $\TrS{B}{\sigma_{AB}} = \Id_A/d$. One inclusion is trivial, while the other follows from the fact that, for any such $\sigma_{AB}$, we can find a corresponding completely positive and trace preserving (CPTP) map $T^{(\sigma_{AB})} \in \rm{EB}$ via the teleportation protocol:
	\begin{equation}\label{eq: channel map from choi state}
	\mathcal T_{A^\prime\to B}^{(\sigma_{AB})} (\tau_{A^\prime}) = d^2 \,\TrS{A^\prime A}{\psi_{A^\prime A} \left(\tau_{A^\prime} \otimes\sigma_{AB}\right)},
	\end{equation}
	where $\psi_{A^\prime A}$ is a maximally entangled state. Indeed, this map is CPTP because from \Eq{eq: channel map from choi state} we obtain a possible set of Kraus operators given by:
	\begin{equation}
	N_{A^\prime\to B}^{(h,k)} = d \, {}_{A^\prime A}\hspace*{-0.1cm}\bra{\psi}\sqrt{\sigma_{AB}}\ket{k}_A\ket{h}_B,
	\end{equation}
	with $\sum_{h,k=1}^d \left(N_{A^\prime\to B}^{(h,k)}\right)^\dagger N_{A^\prime\to B}^{(h,k)} = \Id_{A^\prime}$, and a straightforward calculation shows that $\pi_{\mathcal T^{(\sigma_{AB})}} = \sigma_{AB}$, thus proving that $\mathcal T^{(\sigma_{AB})} \in \rm{EB}$ because of the separability of $\sigma_{AB}$.

\section{Bounding the max-relative entropy of entanglement of qubit channels invariant under phase rotations}\label{app: tech tools}
Most of the typical qubit channels are invariant under rotations around the axis associated with the Pauli matrix $\sigma_z = \text{Diag}(+1,-1)$, and it is thus interesting to study the consequences of this fact for the evaluation of the upper and lower bounds identified in Proposition~\ref{prop: bounds on Emax}.
Let $\mathcal N$ be a quantum channel acting on a qubit, such that 
\begin{equation}\label{def: phase rotation}
\mathcal N(e^{i \theta \sigma_z} \rho e^{-i \theta \sigma_z}) = e^{i \theta \sigma_z}\mathcal N(\rho)  e^{-i \theta \sigma_z},
\end{equation}
for all angles $\theta$ and input states $\rho$. As the maximally entangled state $\psi_{A A^\prime}$ is left invariant by the unitary operation
\begin{equation}\label{eq: Utheta structure}
U_\theta = e^{+i \frac{\theta}{2} \sigma^{(A)}_z }\otimes e^{-i \frac{\theta}{2} \sigma^{(B)}_z},
\end{equation}
we can conclude that its Choi state $\pi_{\mathcal N}$ is also invariant under $U_\theta$, for any $\theta \in [0,2\pi]$.
This immediately implies that the average of $\pi_{\mathcal N}$ over all possible $\theta$ rotations coincides with $\pi_{\mathcal N}$ itself:
\begin{equation} \label{eq: gamma average invariance}
\pi_{\mathcal N} = \int \frac{\text{d}\theta}{2\pi} U_\theta \pi_{\mathcal N} U_\theta^\dagger.
\end{equation}
This allows us to prove the following lemma, whose proof can be found at the end of this appendix.

\begin{lem}\label{lem: optimisation on smaller set}
	Let $\pi_{\mathcal N}$ be a bipartite state invariant under the separable unitary evolution $U_\theta$ defined in \Eq{eq: Utheta structure}, and $\sigma_{AB}^*$ be the state which minimises $D_{\rm{max}}(\pi_{\mathcal N}\Vert\sigma_{AB})$ among all separable states $\sigma_{AB}$. If $\overline{\sigma^*}_{AB}$ is the averaged version of $\sigma^*_{AB}$,
	\begin{equation}
	\overline{\sigma^*}_{AB} \equiv \int \frac{\text{d}\theta}{2\pi} U_\theta \sigma_{AB}^* U_\theta^\dagger,
	\end{equation}
	then $\overline{\sigma^*}_{AB}$ is separable and
	\begin{equation}
	D_{\rm{max}}\big(\pi_{\mathcal N}\big\Vert\sigma^*_{AB}\,\big) =
	D_{\rm{max}}\Big(\pi_{\mathcal N}\big\Vert\overline{\sigma^*}_{AB}\,\Big).
	\end{equation}
	Similarly, if $\sigma_{AB}^*$ is the state which minimises $D_{\rm{max}}(\pi_{\mathcal N}\Vert\sigma_{AB})$ over all separable states $\sigma_{AB}$ with ${\rm{Tr}}_{B}[\sigma_{AB}] = \Id_A/2$, the same conclusion holds with ${\rm{Tr}}_{B}[\overline{\sigma^*}_{AB}]= \Id_A/2$. 
\end{lem}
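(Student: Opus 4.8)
The plan is to show that the $\theta$-twirl maps the optimal separable state $\sigma^*_{AB}$ to a state $\overline{\sigma^*}_{AB}$ that is still an admissible competitor in the same minimisation and that leaves the value of $D_{\rm{max}}(\pi_{\mathcal N}\Vert\,\cdot\,)$ unchanged; the equality then follows by proving two opposite inequalities. First I would check feasibility. Because the unitary $U_\theta$ of \Eq{eq: Utheta structure} is a product $V_A\otimes W_B$, conjugation by it sends every product operator $P_A\otimes Q_B$ to a product of positive operators, hence preserves separability; the twirl $\overline{\sigma^*}_{AB}$, being an average of the separable states $U_\theta\sigma^*_{AB}U_\theta^\dagger$, is therefore separable as well. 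For the constrained minimisation I would additionally compute, using cyclicity of the partial trace over $B$, that $\mathrm{Tr}_B[U_\theta\sigma^*_{AB}U_\theta^\dagger]=e^{+i\frac{\theta}{2}\sigma_z}\,\mathrm{Tr}_B[\sigma^*_{AB}]\,e^{-i\frac{\theta}{2}\sigma_z}$, so that if $\mathrm{Tr}_B[\sigma^*_{AB}]=\Id_A/2$ the maximally mixed marginal commutes with the rotation and stays fixed, giving $\mathrm{Tr}_B[\overline{\sigma^*}_{AB}]=\Id_A/2$.

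With feasibility established, one direction is immediate: since $\sigma^*_{AB}$ is by definition the minimiser and $\overline{\sigma^*}_{AB}$ is a feasible separable state (also meeting the marginal constraint when imposed), optimality gives $D_{\rm{max}}(\pi_{\mathcal N}\Vert\sigma^*_{AB})\leq D_{\rm{max}}(\pi_{\mathcal N}\Vert\overline{\sigma^*}_{AB})$. For the reverse inequality I would invoke the $U_\theta$-invariance of $\pi_{\mathcal N}$ recorded in \Eq{eq: gamma average invariance}, i.e. $\pi_{\mathcal N}=\int\frac{\mathrm{d}\theta}{2\pi}U_\theta\pi_{\mathcal N}U_\theta^\dagger$. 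Applying the joint quasi-convexity of $D_{\rm{max}}$ [\Eq{def: joint quasi convexity}] to the averaged pair $\rho_\theta=U_\theta\pi_{\mathcal N}U_\theta^\dagger$ and $\sigma_\theta=U_\theta\sigma^*_{AB}U_\theta^\dagger$, whose averages are respectively $\pi_{\mathcal N}$ and $\overline{\sigma^*}_{AB}$, yields $D_{\rm{max}}(\pi_{\mathcal N}\Vert\overline{\sigma^*}_{AB})\leq\sup_\theta D_{\rm{max}}(U_\theta\pi_{\mathcal N}U_\theta^\dagger\Vert U_\theta\sigma^*_{AB}U_\theta^\dagger)$. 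Each term on the right equals $D_{\rm{max}}(\pi_{\mathcal N}\Vert\sigma^*_{AB})$ by the invariance of $D_{\rm{max}}$ under joint unitaries [\Eq{def: invariance joint unitaries}], so the supremum collapses and the reverse inequality follows. Chaining the two bounds proves the claimed equality, and the same argument applies verbatim in the constrained case.

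The step I expect to require the most care is that \Eq{def: joint quasi convexity} is formulated for finite convex combinations, whereas the twirl is a continuous average over $\theta\in[0,2\pi]$. My plan is to approximate the integral by the uniform discrete average over the $N$ angles $\theta_k=2\pi k/N$: since $\pi_{\mathcal N}$ is invariant under every individual $U_{\theta_k}$, the finite quasi-convexity applies directly to the discrete twirl $\tfrac{1}{N}\sum_k U_{\theta_k}\sigma^*_{AB}U_{\theta_k}^\dagger$ and bounds its max-relative entropy from $\pi_{\mathcal N}$ by $D_{\rm{max}}(\pi_{\mathcal N}\Vert\sigma^*_{AB})$ uniformly in $N$. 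Letting $N\to\infty$, the discrete twirl converges to $\overline{\sigma^*}_{AB}$, and I would pass to the limit using the lower semicontinuity of $D_{\rm{max}}$ in its second argument. The remaining manipulations are routine, so this continuous-average version of quasi-convexity is the only genuinely delicate point.
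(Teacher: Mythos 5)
Your proposal is correct and takes essentially the same route as the paper's proof: separability of the twirl follows from the product structure of $U_\theta$, optimality of $\sigma^*_{AB}$ gives one inequality, and joint quasi-convexity of $D_{\rm{max}}$ combined with its invariance under joint unitaries and the $U_\theta$-invariance of $\pi_{\mathcal N}$ gives the other. The only difference is that you rigorously justify applying quasi-convexity to the continuous average via discretization and lower semicontinuity, a step the paper applies directly to the integral without comment; this is a refinement of the same argument rather than a different approach.
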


As a corollary of Lemma \ref{lem: optimisation on smaller set}, we can restrict the minimisation over all separable states $\sigma_{AB}$ in \Eq{eq: bounds on Emax} to be only over the states which are left unaltered by being averaged over all possible $\theta$ rotations. The density matrix associated with these states in the basis $\{\ket{00},\ket{01},\ket{10},\ket{11}\}$
has the form 
\begin{equation}\label{eq: sigma generic form}
\sigma_{AB} = \frac{1}{2}\left(\begin{array}{c|cc|c}
\alpha & & & \xi e^{i\phi}  \\\hline
& \gamma & & \\
& & \delta & \\\hline
\xi e^{-i\phi} & & & \beta
\end{array}\right),
\end{equation}
with $\alpha,\beta,\gamma,\delta,\xi \geq 0$, $\alpha + \beta +  \gamma + \delta = 2$, $\phi \in [0,2\pi]$ and $0\leq\xi \leq \min\{\sqrt{\alpha\beta},\sqrt{\gamma\delta}\}$.
Note that the last inequality comes from the PPT criterion, which works for two-qubit states as a necessary and sufficient condition for separability \cite{Horodecki_PPT_96}. 
When evaluating the upper bound in Proposition~\ref{prop: bounds on Emax}, we simply need to add the additional constraints $\gamma = 1-\alpha$ and $\delta = 1-\beta$, in order to assure $\TrS{B}{\sigma_{AB}} = \Id_A/2$.

\begin{proof}[Proof of Lemma \ref{lem: optimisation on smaller set}]
	The max-relative entropy $ D_{\text{max}}(\rho\Vert\sigma)$ is invariant under joint unitary operations applied on both $\rho$ and $\sigma$, and is jointly quasi-convex. Both these properties have been previously introduced in Sec.~\ref{sec: entropy defs}, respectively in \Eq{def: invariance joint unitaries} and \Eq{def: joint quasi convexity}.
	Together with \Eq{eq: gamma average invariance}, these facts lead to 
	\begin{align}
	D_{\text{max}}\Big(\pi_{\mathcal N^{(\text{ad})}_\lambda}\Big\Vert\overline{\sigma^*}_{AB}\,\Big) & = 	
	D_{\text{max}}\bigg(\int \frac{\text{d}\theta}{2\pi} U_\theta \pi_{\mathcal N^{(\text{ad})}_\lambda} U_\theta^\dagger
	\bigg\Vert
	\int \frac{\text{d}\theta}{2\pi} U_\theta \sigma_{AB}^* U_\theta^\dagger\bigg) \notag\\
	& 	\leq \max_\theta  D_{\text{max}}( U_\theta \pi_{\mathcal N^{(\text{ad})}_\lambda} U_\theta^\dagger\Vert U_\theta \sigma_{AB}^* U_\theta^\dagger) = 
	D_{\text{max}}( \pi_{\mathcal N^{(\text{ad})}_\lambda} \Vert  \sigma_{AB}^* ).
	\end{align}
	The converse inequality follows because $\overline{\sigma^*}_{AB}$ is separable, due to the structure of $U_{\theta}$ [see \Eq{eq: Utheta structure}], and  because $\sigma_{AB}^*$ minimises $D_{\text{max}}(\pi_{\mathcal N}\Vert\sigma_{AB})$ over all separable states. The final remark can be easily proven by noticing that $\TrS{B}{\overline{\sigma^*}_{AB}}=\Id_A/2$ if $\TrS{B}{\sigma^*_{AB}}=\Id_A/2$.
\end{proof}

\section{Proof for the upper bound in \Eq{eq: explicit bounds on amplitude damping}} \label{app: upper bound amp damp}
	In order to prove the desired result, we need to explicitly perform the optimisation appearing in the upper bound of Proposition~\ref{prop: bounds on Emax}, i.e.
	\begin{equation}
	\tilde E_{\text{max}}(\mathcal N^{(\text{ad})}_\lambda) \equiv  \min_{\substack{\sigma_{AB}\in \rm{SEP} \\ {\rm{Tr}}_{B}[\sigma_{AB}]=\Id_A/2}}  D_{\rm{max}}(\pi_{\mathcal N^{(\text{ad})}_\lambda}\big\Vert\sigma_{AB})
	= \min_{\sigma_{AB}} \inf \{x\in\mathbb{R} \vert 2^x \sigma_{AB} - \pi_{\mathcal N^{(\text{ad})}_\lambda} \geq 0\},
	\end{equation} 
	where thanks to Lemma \ref{lem: optimisation on smaller set} on the rightmost term we can consider only states $\sigma_{AB}$ with the form given in \Eq{eq: sigma generic form}, with $\gamma = 1-\alpha$ and $\delta = 1-\beta$. Let us introduce the parameter $y = 2^x$. By explicitly computing the Choi-Jamio\l{}kowski state $\pi_{\mathcal N^{(\text{ad})}_\lambda}$, the condition $y \sigma_{AB} - \pi_{\mathcal N^{(\text{ad})}_\lambda} \geq 0$ can be rewritten as the system of inequalities:
	\begin{equation}\label{eq: conditions}
	\begin{cases}
	y(1-\beta)\geq {\lambda}, \\
	y \tilde\sigma - \tilde \pi_\lambda \geq 0,
	\end{cases}
	\end{equation}
	where $\tilde \sigma$ and $\tilde \pi$ are $2\times 2$ matrices
	\begin{equation}
	\tilde{\sigma} = \left(\begin{array}{cc}
	\alpha & \xi e^{i\phi} \\
	\xi e^{-i\phi} & \beta
	\end{array}\right), \qquad
	\tilde{\pi}_\lambda = \left(\begin{array}{cc}
	1 & \sqrt{1-\lambda} \\
	\sqrt{1-\lambda} & 1-\lambda
	\end{array}\right).
	\end{equation}
	We now define $y_1(\lambda,\sigma_{AB})$ and $y_2(\lambda,\sigma_{AB})$ as the smallest values of $y$ that satisfy respectively the first and the second inequalities appearing in \Eq{eq: conditions}, and we rewrite the minimisation leading to the upper bound on $E_{\text{max}}(\mathcal N^{(\text{ad})}_\lambda)$ as
	\begin{equation}
	 \tilde E_{\text{max}}(\mathcal N^{(\text{ad})}_\lambda) \equiv \min_{\sigma_{AB}} \inf \{x\in\mathbb{R} \vert 2^x \sigma_{AB} - \pi_{\mathcal N^{(\text{ad})}_\lambda} \geq 0\} = \log_2 \min_{\sigma_{AB}} \max \{y_1(\lambda,\sigma_{AB}),y_2(\lambda,\sigma_{AB})\}.
	\end{equation} 		
	We can easily show that this quantity is smaller than or equal to $\log_2(2-\lambda)$ by providing a matrix $\sigma_{AB}$ of the desired form such that $\max \{y_1(\lambda,\sigma_{AB}),y_2(\lambda,\sigma_{AB}) \}= 2-\lambda$. This can be achieved with the choices:
	\begin{equation}
	\alpha = \frac{1}{2-\lambda},\qquad \beta = 1-\alpha, \qquad\xi = \sqrt{\alpha\beta}, \qquad \phi=0,
	\end{equation}
	which yield $y_1 = \lambda(2-\lambda)$ and $y_2 = 2-\lambda$, as can be verified by directly substituting these values into \Eq{eq: conditions}.
	The converse inequality, i.e. $\tilde E_{\text{max}}(\mathcal N^{(\text{ad})}_\lambda) \geq \log_2(2-\lambda)$,  requires some additional work. Thanks to the monotonicity of the logarithm and the trivial relation $\max \{y_1,y_2\}\geq y_2$, we can bound $\tilde E_{\text{max}}(\mathcal N^{(\text{ad})}_\lambda) $ from below as
	\begin{equation}
\tilde E_{\text{max}}(\mathcal N^{(\text{ad})}_\lambda)  \geq  \log_2 \min_{\sigma_{AB}} y_2(\lambda,\sigma_{AB}).
	\end{equation}
	Hence, we are left with the task of showing that $\min_{\sigma_{AB}} y_2(\lambda,\sigma_{AB}) \geq 2 - \lambda$, where the optimisation has to be effectively performed over the parameters $\alpha,\beta,\xi,\phi$ satisfying the conditions detailed after \Eq{eq: sigma generic form}, with $\gamma = 1-\alpha$ and $\delta = 1-\beta$. 
	
	The condition $y \tilde\sigma - \tilde \pi_\lambda \geq 0$ involves $2 \times 2$ matrices, and can be rewritten using Pauli matrices $\vec{\sigma} = \{\sigma_x,\sigma_y,\sigma_z\}$ as
	\begin{equation}
	y(\alpha + \beta)(\Id + \vec{v}\cdot \vec{\sigma}) -(2-\lambda)(\Id + \hat{n}\cdot\vec{\sigma}) \geq 0, 
	\end{equation}
	where
	\begin{equation}
	\vec{v}  = \frac{1}{\alpha+\beta} \left(\begin{array}{c}
	2 \xi \cos\phi \\
	- 2 \xi \sin\phi\\
	\alpha - \beta
	\end{array}\right), \qquad 
	\hat{n}  = \frac{1}{2-\lambda} \left(\begin{array}{c}
	2\sqrt{1-\lambda} \\
	0\\
	\lambda
	\end{array}\right).
	\end{equation}
	This in turn reduces to
	\begin{equation} \label{eq: y2 condition}
	y \geq \frac {2-\lambda}{\alpha + \beta} \; \frac{2(1-v\cos\psi)}{1-v^2} \equiv y_2(\lambda,\sigma_{AB}),
	\end{equation}
	where $v = |\vec{v}|\leq 1$ and $\psi$ is the angle between $\vec{v}$ and $\hat{n}$. Note that the second fraction appearing in \Eq{eq: y2 condition} is always larger than $1$, therefore, when $\alpha+\beta \leq 1$ the condition $y_2(\lambda,\sigma_{AB}) \geq 2-\lambda$ holds. On the other hand, if $1\leq \alpha + \beta \leq 2$, we can use the parametrisation:
	\begin{equation}
	2\xi = \eta(2-\alpha-\beta) \sin\zeta, \qquad \alpha-\beta = \eta(2-\alpha-\beta) \cos\zeta,
	\end{equation}
	with $\eta \in [0,1]$ and $\zeta \in [0,\pi]$.
	This allows us to conclude because of the following chain of inequalities:
	\begin{align}
	y_2(\lambda,\sigma_{AB}) &= 2 (2-\lambda) \frac{(\alpha+\beta) - \eta(2-\alpha-\beta)\left[\cos(\theta-\zeta)-\sin\theta\sin\zeta(1-\cos\phi)\right]}{(\alpha+\beta)^2-\eta^2(2-\alpha-\beta)^2} \notag\\
	& \geq 2 (2-\lambda) \frac{(\alpha+\beta) - \eta(2-\alpha-\beta)}{(\alpha+\beta)^2-\eta^2(2-\alpha-\beta)^2} \notag \\
	& = (2-\lambda) \frac{2}{2\eta +(1-\eta)(\alpha+\beta)} \geq (2-\lambda),
	\end{align}
	where $\theta = \arctan(2\sqrt{1-\lambda}/\lambda)$ is the angle describing the direction of $\hat{n}$.

\section{Proof for the lower bound in \Eq{eq: explicit bounds on amplitude damping}}\label{app: lower bound proof}
The goal of this Appendix is to provide a proof for the following lower bound on $E_{\text{max}}(\mathcal N^{(\text{ad})}_\lambda)$:
\begin{equation}
E_{\text{max}}(\mathcal N^{(\text{ad})}_\lambda) \geq \min_{\sigma_{AB}\in \text{SEP}} \tilde D_{\text{max}}(\pi_{\mathcal N^{(\text{ad})}_\lambda}\Vert\sigma_{AB}) = 
\begin{cases}
\log_2\left(\frac{1}{2}(1+\sqrt{1-\lambda})^2\right), & \text{if }\lambda\leq \frac{\sqrt{5}-1}{2},\\
\log_2\left(\frac{1+\lambda}{2\lambda}\right), &\text{if } \lambda \geq \frac{\sqrt{5}-1}{2}.
\end{cases}
\end{equation}
Thanks to Lemma \ref{lem: optimisation on smaller set}, we can reduce the optimisation over all separable states $\sigma_{AB}$ that are left unaltered under all possible $\theta$ rotations, which can be parametrised as in \Eq{eq: sigma generic form}. The condition $y \sigma_{AB}  - \pi_{\mathcal N^{(\text{ad})}_\lambda} \geq 0 $ can be explicitly rewritten as
\begin{equation}
\begin{cases}
y \geq \frac{\lambda}{\delta}, \\
y \geq \frac{\alpha(1-\lambda) + \beta -2\sqrt{1-\lambda}\xi\cos\phi}{\alpha\beta - \xi^2},
\end{cases}
\end{equation}
so that
\begin{equation}
\tilde D_{\text{max}}\left(\pi_{\mathcal N^{(\text{ad})}_\lambda}\big\Vert\sigma_{AB}\right) = \log_2\max\left\{\frac{\alpha(1-\lambda) + \beta -2\xi\cos\phi\sqrt{1-\lambda}}{\alpha\beta - \xi^2},\frac{\lambda}{\delta} \right\}.
\end{equation}
In what follows, for any fixed $\lambda$ we will minimise this quantity over the parameters $\alpha,\beta,\gamma,\delta,\xi,\phi$, satisfying the constraints detailed after \Eq{eq: sigma generic form}.

The minimisation in $\phi$ can be easily performed, with the optimal choice being $\phi=0$. Moreover, for any fixed $\alpha,\beta,\xi$, the maximum $\delta$ (and thus the minimum $\lambda/\delta$) is given by 
\begin{equation}
\delta_{\text{max}}=\frac{1}{2}\left(2-\alpha - \beta + \sqrt{(2-\alpha - \beta)^2 - 4\xi^2}\right),
\end{equation} 
that is, when  $\delta>\gamma$ and $\gamma\delta$ equals the smallest allowed value $\xi^2$.
Notice that this choice implies 
\begin{equation}
2\xi = 2\sqrt{\gamma\delta} \leq \gamma + \delta = 2 - \alpha - \beta.
\end{equation} 
At this stage, the optimisation problem (without the logarithm) has been reduced to:
\begin{equation}
\min_{\alpha,\beta,\xi} \left\{ \max\left[
\frac{\alpha(1-\lambda) + \beta -2\xi\sqrt{1-\lambda}}{\alpha\beta - \xi^2},
\frac{2 \lambda}{2-\alpha - \beta + \sqrt{(2-\alpha - \beta)^2 - 4\xi^2}}
\right] \; 
\Bigg| 
	\begin{aligned}
	\scriptstyle \alpha,\beta,\xi \geq 0 \; \wedge \; \xi^2 \leq \alpha\beta \\ 	\scriptstyle  \wedge \; \alpha + \beta + 2\xi \leq 2
	\end{aligned}	
\right\}.
\end{equation}
Now we introduce the parameters $\nu = (\alpha + \beta)/2$ and $\mu = (\alpha-\beta)/2$. 
As $\alpha > \delta$ always yields a smaller value than the converse choice, we can limit our study to $\mu\geq 0$ and rewrite the problem in the new parameters:
\begin{equation}
\min_{\nu,\mu,\xi} \left\{ \max\left[
\frac{\nu(2-\lambda) -\lambda\mu -2\xi\sqrt{1-\lambda}}{\nu^2 - \xi^2 - \mu^2},
\frac{\lambda}{1-\nu + \sqrt{(1-\nu)^2 - \xi^2}}
\right] \; 
\Bigg|  
\begin{aligned}
\scriptstyle 0 \leq \mu \leq \sqrt{\nu^2 - \xi^2} \\ 
\scriptstyle 0\leq \xi \leq \nu  \; \wedge \; \nu + \xi \leq 1
\end{aligned}
\right\}.
\end{equation}
We can now minimise the first term over $\mu$. The value $\mu_0$ for which the function 
\begin{equation}\label{eq: f0 function}
f_0(\mu | \lambda,\nu,\xi) = \frac{\nu(2-\lambda) -\lambda\mu -2\xi\sqrt{1-\lambda}}{\nu^2 - \xi^2 - \mu^2}
\end{equation}
becomes zero is always bigger than $\sqrt{\nu^2 - \xi^2}$ in the considered region. Together with the asymptotic scaling $f_0(\mu | \lambda,\nu,\xi) \sim \lambda/\mu$ for $|\mu|\gg 1$, this can be used to deduce the qualitative behaviour of $f_0(\mu | \lambda,\nu,\xi)$, which is shown in Figure \ref{fig: typical plot}. Let $\mu_{\pm}(\lambda,\nu,\xi)$ be the zeros of $\partial_\mu f_0(\mu | \lambda,\nu,\xi)$, with $\mu_- \leq \mu_+$, where
\begin{equation}
\mu_{\pm}(\lambda,\nu,\xi) = \frac{1}{\lambda}\left[(2-\lambda)\nu - 2\sqrt{1-\lambda}\xi\right]\pm\frac{1}{\lambda}|2\sqrt{1-\lambda}\nu-(2-\lambda)\xi|.
\end{equation}
As $f_0(\mu_-| \lambda,\nu,\xi) \geq f_0(\mu_+ | \lambda,\nu,\xi)$, we can find the desired minimum of $f_0(\mu| \lambda,\nu,\xi)$ in $\mu\in[0,\sqrt{\nu^2-\xi^2}]$ as 
\begin{equation}
\min_{\mu} f_0(\mu| \lambda,\nu,\xi) = \max \{f_0(\mu_-),f_0(\mu+)\} = \max\left\{\frac{(1-\sqrt{1-\lambda})^2}{2(\nu-\xi)},\frac{(1+\sqrt{1-\lambda})^2}{2(\nu+\xi)} \right\}.
\end{equation}

\begin{figure}
	\includegraphics[scale = 0.7]{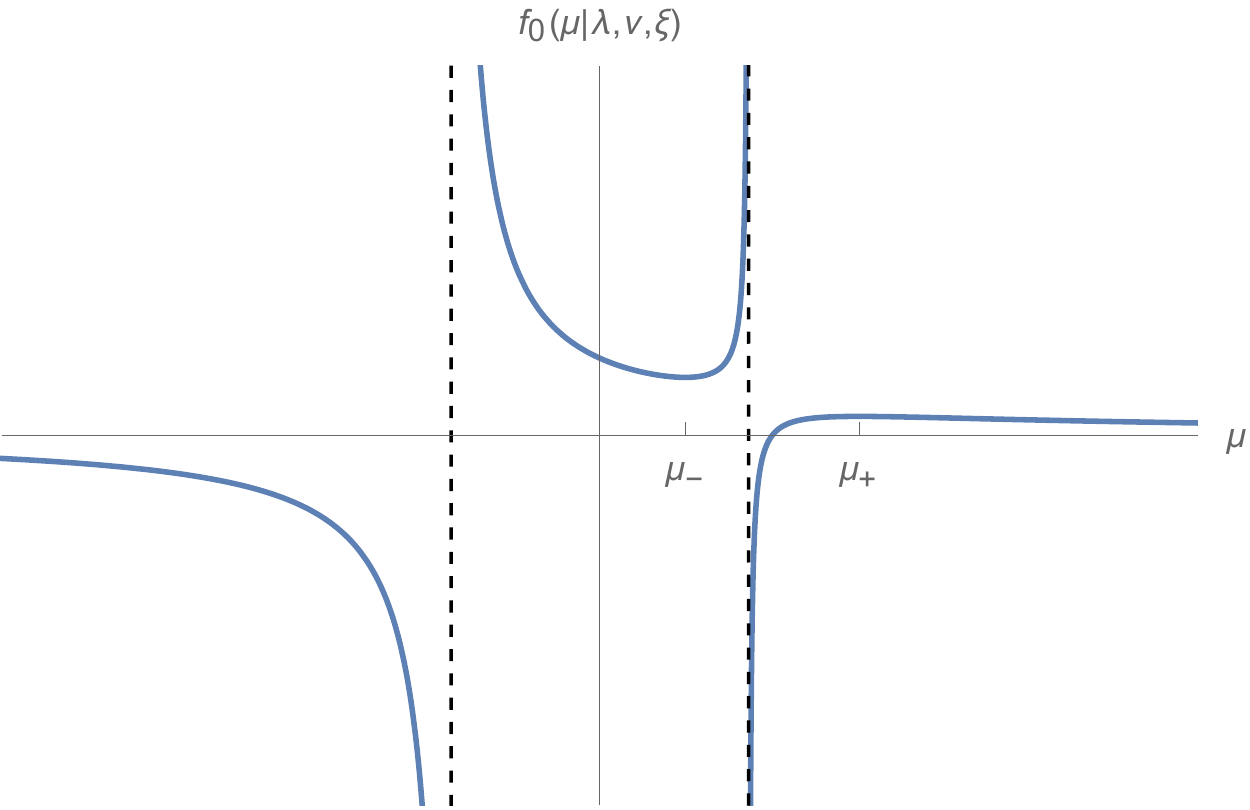}
	\caption{Typical plot of the function $f_0(\mu | \lambda,\nu,\xi) $ defined in \Eq{eq: f0 function}.
		\label{fig: typical plot}
	}
\end{figure}

It is worth substituting $\nu \to x(1+y)/2 $ and $\xi \to x(1-y)/2$. In terms of the new variables, the problem after the optimisation in $\mu$ becomes
\begin{equation}
\min_{x,y} \left\{ \max\left[
\frac{(1-\sqrt{1-\lambda})^2}{2xy},
\frac{(1+\sqrt{1-\lambda})^2}{2x},
\frac{\lambda}{1-x\frac{(1+y)}{2} + \sqrt{(1-x)(1-xy)}}
\right] \; 
\Bigg| 
\begin{aligned}
\scriptstyle 0 \leq x \leq 1 \\\scriptstyle 0 \leq y \leq 1
\end{aligned}
\right\},
\end{equation}
whose form is suitable to perform the minimisation in $y$. Let us label the three function appearing between square brackets in order as $f_1,f_2$ and $f_3$. Note that $f_1$ and $f_3$ are respectively monotonically decreasing and increasing with $y$, with only the first one diverging to infinity for $y\to 0$. If the two functions do not cross each other, i.e., if $x\leq x_{th} \equiv (1-\sqrt{1-\lambda})/2$, the minimum over $y$ is thus obtained by evaluating $f_1$ in $y=1$, otherwise we need to pick their intersection point. Explicitly, this can be written as
\begin{equation}
\min_{y}\{f_1,f_3\} = \begin{cases}
\frac{(1-\sqrt{1-\lambda})^2}{2 x}, & \text{if} \, 0\leq x \leq x_{\rm{th}}, \\
f_4(x,\lambda), & \text{if} \, x_{\rm{th}} \leq x \leq 1,
\end{cases}
\end{equation}
where
\begin{equation}
f_4(x,\lambda) = \frac{1}{2 x^2}\left\{8+x\left[(\frac{1-\sqrt{1-\lambda}}{\sqrt{\lambda}})^2-4\right] -4 \sqrt{(1-x)\left[4+x\left(\frac{1-\sqrt{1-\lambda}}{\sqrt{\lambda}}\right)^2\right]}\right\}.
\end{equation}
Finally, we can optimise over $x$. If $x\leq x_{\rm{th}}$, we are left with:
\begin{equation}\label{eq: temp minimum}
\min_{x\leq x_{\rm{th}}} \max \left\{\frac{(1-\sqrt{1-\lambda})^2}{2 x},\frac{(1+\sqrt{1-\lambda})^2}{2 x} \right\} = \frac{(1+\sqrt{1-\lambda})^2}{2 x_{\rm{th}}} = f_2(x_{\rm{th}},\lambda).
\end{equation}
On the other hand, when $x\geq x_{\rm{th}}$, we can apply the same reasoning used for the minimisation over $y$. In particular, $f_2$ and $f_4$ are respectively monotonically decreasing and increasing with $x$, $f_2(x_{\rm{th}})\geq f_4(x_{\rm{th}})$, and they have a crossing point only when $\lambda \geq (\sqrt{5}-1)/2$. If there is no crossing, the minimum over $x$ is given by $f_2(x=1,\lambda)$, which is less than or equal to $f_2(x_{\rm{th}},\lambda)$ of \Eq{eq: temp minimum}. If there is a crossing, instead, the minimum corresponds to the value of the functions at the intersection, which is $\frac{\lambda+1}{2\lambda}$. This concludes the proof.

\section{Proof of Proposition \ref{prop: B UB for ampd damp}}\label{app: Emax Amd Damp}
Because of \Eq{eq: SDP Emax} we need to show the relation $\Sigma(\mathcal N_\lambda^{(\rm{ad})}) = 2- \lambda$, where $\Sigma(\mathcal N_\lambda^{(\rm{ad})})$ has been defined in \Eq{eq:Sigma}. 
This is equivalent to showing that
\begin{eqnarray}
\min_{Y_{AB}\in\overrightarrow{\rm{SEP}}}\{
\Vert{\rm Tr}_B Y_{AB}\Vert_\infty:Y_{AB}-\pi_{\mathcal N_\lambda^{(\rm{ad})}}\geq 0
\}	
= 1-\frac12\lambda,\label{appeq: goal}
\end{eqnarray}
where $\Vert W\Vert_\infty := \max_{\left | \phi\right>} \left<\phi\right|W\left | \phi\right>$, and $\pi_{\mathcal N_\lambda^{(\rm{ad})}}$ is the normalized Choi state of the amplitude damping channel, which in basis $\{\ket{00},\ket{01},\ket{10},\ket{11}\}$ can be written as
\begin{eqnarray}
\pi_{\mathcal N_\lambda^{(\rm{ad})}}
:=\frac12\left(
\begin{array}{cccc}
1&0&0&\sqrt{1-\lambda}\\
0&0&0&0\\
0&0&\lambda&0\\
\sqrt{1-\lambda}&0&0&1-\lambda
\end{array}\right).
\end{eqnarray} 
Furthermore, as already observed in the main text, for qubit channels we can replace the cone of separable operators $\overrightarrow{\rm{SEP}}$ with that of PPT operators
\begin{eqnarray}
\overrightarrow{\rm{PPT}}&:=&\{V:V\geq 0\land V^{\rm{PT}}\geq 0\},
\end{eqnarray}
where the superscript $\rm{PT}$ represents partial transposition on the second qubit.

For the proof we exploit once again the symmetry of the channel under phase rotations, and we define a subset of $\overrightarrow{\rm{PPT}}$ as
\begin{align}
&\overrightarrow{\rm{PPT}}':=\{V:V\in\overrightarrow{\rm{PPT}} \land   U_\theta V U_\theta^\dagger =V \; \forall \theta\in \mathbb{R}\} \notag
\\
=&\left\{V:
V=\left(
\begin{array}{cccc}
\alpha&0&0&\xi e^{i\phi}\\
0&\gamma&0&0\\
0&0&\delta&0\\
\xi e^{-i\phi}&0&0&\beta
\end{array}\right)
\;\land\;
\alpha,\beta,\gamma,\delta,\xi\geq0
\;\land\;
\phi\in[0,2\pi]
\;\land\;
0\leq \xi\leq\min\{\sqrt{\alpha\beta},\sqrt{\gamma\delta}\}
\right\}, \label{appeq: long expr}
\end{align}
where $U_\theta$ is the unitary rotation defined in \Eq{eq: Utheta structure}. We now obtain a long sequence of equalities, which will be commented in the following. In particular, one has
\begin{align}
&\min_{Y_{AB}\in\overrightarrow{\rm{PPT}}}\left\{
\Vert{\rm Tr}_B Y_{AB}\Vert_\infty:Y_{AB}-\pi_{\mathcal N_\lambda^{(\rm{ad})}}\geq 0
\right\}
\nonumber\\
&\qquad=\min_{Y_{AB}\in\overrightarrow{\rm{PPT}}'}\{
||{\rm Tr}_B Y_{AB}||_\infty:Y_{AB}-\pi_{\mathcal N_\lambda^{(\rm{ad})}}\geq 0
\}
\nonumber\\
&\qquad=\min\left\{
\Vert{\rm Tr}_B V\Vert_\infty:V-\pi_{\mathcal N_\lambda^{(\rm{ad})}}\geq 0
\;\land\;
V\in\overrightarrow{\rm{PPT}}'
\right\}
\nonumber\\
&\qquad=\min\bigg\{\max\{\alpha+\gamma,\delta+\beta\}
:
\left(
\begin{array}{cc}
\alpha&\xi e^{i\phi}\\
\xi e^{-i\phi}&\beta
\end{array}\right)
-
\frac12\left(
\begin{array}{cc}
1&\sqrt{1-\lambda}\\
\sqrt{1-\lambda}&1-\lambda
\end{array}\right)
\geq 0
\nonumber\\&\qquad\qquad\qquad
\;\land\;
\delta-\frac12\lambda\geq0
\;\land\;
\alpha,\beta,\gamma,\delta\geq0
\;\land\;
\phi\in[0,2\pi]
\;\land\;
0\leq \xi\leq\min\{\sqrt{\alpha\beta},\sqrt{\gamma\delta}\}
\bigg\}
\nonumber\\
&\qquad=\min\bigg\{\max\{\alpha+\gamma,\delta+\beta\}
:
\alpha+\beta\geq 1-\frac12\lambda
\;\land\;
\left(\alpha-\frac12\right)\left[\beta-\frac12(1-\lambda)\right]\geq
\left|\xi e^{i\phi}-\frac12\sqrt{1-\lambda}\right|^2
\nonumber\\&\qquad\qquad\qquad
\;\land\;
\delta-\frac12\lambda\geq0
\;\land\;
\alpha,\beta,\gamma,\delta\geq0
\;\land\;
\phi\in[0,2\pi]
\;\land\;
0\leq \xi\leq\min\{\sqrt{\alpha\beta},\sqrt{\gamma\delta}\}
\bigg\}
\nonumber\\
&\qquad=\min\bigg\{\max\{\alpha+\gamma,\delta+\beta\}
:
\alpha+\beta\geq 1-\frac12\lambda
\;\land\;
\left(\alpha-\frac12\right)\left[\beta-\frac12(1-\lambda)\right]\geq
\left(\xi -\frac12\sqrt{1-\lambda}\right)^2
\nonumber\\&\qquad\qquad\qquad
\;\land\;
\delta-\frac12\lambda\geq0
\;\land\;
\alpha,\beta,\gamma,\delta\geq0
\;\land\;
0\leq \xi\leq\min\{\sqrt{\alpha\beta},\sqrt{\gamma\delta}\}
\bigg\}
\nonumber\\
&\qquad=\min\bigg\{\max\{\alpha+\gamma,\delta+\beta\}
:
\alpha+\beta\geq 1-\frac12\lambda
\;\land\;
\delta-\frac12\lambda\geq0
\;\land\;
\alpha,\beta,\gamma,\delta\geq0
\nonumber\\&\qquad\qquad\qquad
\;\land\;
\left(\alpha-\frac12\right)\left[\beta-\frac12(1-\lambda)\right]\geq
\left(\min\{\min\{\sqrt{\alpha\beta},\sqrt{\gamma\delta}\} -\frac12\sqrt{1-\lambda},0\}\right)^2
\bigg\}
\nonumber\\
&\qquad=\min\bigg\{\max\{\alpha+\gamma,\delta+\beta\}
: \delta-\frac12\lambda\geq0
\;\land\;
\alpha\geq \frac 12
\;\land\;
\beta\geq \frac12(1-\lambda)
\;\land\;
\gamma,\delta\geq0
\nonumber\\&\qquad\qquad\qquad
\;\land\;
\left(\alpha-\frac12\right)\left[\beta-\frac12(1-\lambda)\right]\geq
\left(\min\{\min\{\sqrt{\alpha\beta},\sqrt{\gamma\delta}\} -\frac12\sqrt{1-\lambda},0\}\right)^2
\bigg\}
\nonumber\\
&\qquad=
\min\{A,B\},\end{align}
where
\begin{eqnarray}
A&:=&\min\Big\{
\frac12\left(\alpha+\beta+\sqrt{(\alpha-\beta)^2+4x^2}\right)
:
\alpha+\frac{2x^2}\lambda\geq \beta+\frac \lambda 2
\;\land\;
\alpha\geq \frac 12
\;\land\;
\beta\geq \frac12(1-\lambda)
\;\land\;
x\geq0
\nonumber\\&&{}
\;\land\;
\left(\alpha-\frac12\right)\left[\beta-\frac12(1-\lambda)\right]\geq
\left(\min\{\min\{\sqrt{\alpha\beta},x\} -\frac12\sqrt{1-\lambda},0\}\right)^2
\bigg\}, \label{appeq: Adef}
\\
B&:=&\min\Big\{\beta+\frac12\lambda
:
\alpha+\frac{2x^2}\lambda\leq \beta+\frac \lambda 2
\;\land\;
\alpha\geq \frac 12
\;\land\;
\beta\geq \frac12(1-\lambda)
\;\land\;
x\geq0
\nonumber\\&&{}
\;\land\;
\left(\alpha-\frac12\right)\left[\beta-\frac12(1-\lambda)\right]\geq
\left(\min\{\min\{\sqrt{\alpha\beta},x\} -\frac12\sqrt{1-\lambda},0\}\right)^2
\bigg\}. \label{appeq: B def}
\end{eqnarray}
The first equality comes from the following two observations 
\begin{eqnarray}
V\in \overrightarrow{\rm{PPT}}&\Rightarrow &V'\in \overrightarrow{\rm{PPT}}',
\\
V-\pi_{\mathcal N_\lambda^{(\rm{ad})}}\geq 0
&\Rightarrow &
V'-\pi_{\mathcal N_\lambda^{(\rm{ad})}},
\end{eqnarray}
and from the inequality $\Vert{\rm Tr}_B V\Vert_\infty \geq \Vert{\rm Tr}_B V'\Vert_\infty$,
where $V, V'$ are generic matrices of the form 
\begin{equation}
V = \left(
\begin{array}{cccc}
a_{11}&a_{12}&a_{13}&a_{14}\\
a_{21}&a_{22}&a_{23}&a_{24}\\
a_{31}&a_{32}&a_{33}&a_{34}\\
a_{41}&a_{42}&a_{43}&a_{44}
\end{array}\right),\qquad \qquad
V' = \left(
\begin{array}{cccc}
a_{11}&0&0&a_{14}\\
0&a_{22}&0&0\\
0&0&a_{33}&0\\
a_{41}&0&0&a_{44}
\end{array}\right).
\end{equation}
The second equality is just a rearrangement of the previous expression, whereas in the third equality we exploit \Eq{appeq: long expr}. The fourth equality can be obtained by expanding the matrix inequality previously found, and in the fifth equality we used the following relation 
\begin{equation}
\min_{\phi\in \mathbb{R}}\left|\xi e^{i\phi}-\frac12\sqrt{1-\lambda}\right|^2=\left(\xi -\frac12\sqrt{1-\lambda}\right)^2,
\end{equation}
which holds for $\xi\geq 0$.
In the sixth and seventh equalities we used respectively
\begin{equation}
x,y\geq0 \Rightarrow \min_{0\leq \xi\leq x}(\xi-y)^2=\min\{x-y,0\}^2, 
\end{equation}
and 
\begin{equation}
\alpha+\beta\geq x+y\land(\alpha-x)(\beta-y)\geq 0\;\Leftrightarrow\;\alpha\geq x \land \beta\geq y.
\end{equation}
Finally, in order to obtain the last equality we observed that 
\begin{eqnarray}
\min_{\delta\geq \frac12\lambda}\max\left\{\alpha+\frac {x^2}\delta,\beta+\delta\right\}
&=&\left\{
\begin{array}{cl}
\frac12(\alpha+\beta+\sqrt{(\alpha-\beta)^2+4x^2}),&\makebox{for $\alpha+\frac{2x^2}\lambda\geq \beta+\frac \lambda 2
	$},\\
\beta+\frac12\lambda,&\makebox{for $\alpha+\frac{2x^2}\lambda\leq \beta+\frac \lambda 2
	$}.
\end{array}
\right.
\end{eqnarray}

From this analysis it follows that \Eq{appeq: goal} is proven if we can show that $A=1-\frac12\lambda$ and $B\geq 1-\frac12 \lambda$. This is what we do in the following.

\subsection{Proof of $A=1-\frac12\lambda$}
Note that the choices $\alpha=\frac12$, $\beta=\frac12(1-\lambda)$, and $x=\frac 12\sqrt{1-\lambda}$ satisfy the conditions appearing in \Eq{appeq: Adef}, providing
\begin{equation}
\frac12(\alpha+\beta+\sqrt{(\alpha-\beta)^2+4x^2}) = 1-\frac12\lambda,
\end{equation}
so that $A\leq 1-\frac12 \lambda$.

In order to derive the converse inequality, we first rewrite $A$ as
\begin{eqnarray}
A&=&\min\Bigg\{
\frac12\left[ a+b+1-\frac12\lambda+\sqrt{\Big(a-b+\frac12\lambda\Big)^2+4x^2}\right]
:
a+\frac{2x^2}\lambda\geq b
\;\land\;
a,b,x\geq0
\nonumber\\&&{}
\land\; ab\geq
\left(\min\left\{\min\left\{\sqrt{\left(a+\frac12\right)\left[b+\frac12(1-\lambda)\right]},x\right\} -\frac12\sqrt{1-\lambda},0\right\}\right)^2
\Bigg\}
\label{appeq: line 1}\\
&\geq&\min\Bigg\{
\frac12\left[a+b+1-\frac12\lambda+\sqrt{\Big(a-b+\frac12\lambda\Big)^2+4x^2}\right]
: a,b,x\geq0
\nonumber\\&&{}
\land \; ab\geq
\left(\min\left\{\min\left\{\sqrt{\left(a+\frac12\right)\left[b+\frac12(1-\lambda)\right]},x\right\} -\frac12\sqrt{1-\lambda},0\right\}\right)^2
\Bigg\}
\label{appeq: line 2}\\
&=&\min\{A_1,A_2\}, \label{appeq: line 3}
\end{eqnarray}
where
\begin{eqnarray}
A_1&:=&
\min\Bigg\{
\frac12\left[a+b+1-\frac12\lambda+\sqrt{\Big(a-b+\frac12\lambda\Big)^2+4x^2}\right]
: a,b,x\geq0
\nonumber\\&&{}
\land\; \min\bigg\{\sqrt{\Big(a+\frac12\Big)\Big[b+\frac12(1-\lambda)\Big]},x\bigg\} \geq \frac12\sqrt{1-\lambda}
\Bigg\},
\\
A_2&:=&
\min\Bigg\{
\frac12\left[a+b+1-\frac12\lambda+\sqrt{\Big(a-b+\frac12\lambda\Big)^2+4x^2}\right]
: a,b,x\geq0
\nonumber\\&&{}
\land\; ab\geq
\left(\min\left\{\sqrt{\Big(a+\frac12\Big)\left[b+\frac12(1-\lambda)\right]},x\right\} -\frac12\sqrt{1-\lambda}\right)^2
\nonumber\\&&{}
\;\land\;
\min\left\{\sqrt{\Big(a+\frac12\Big)\left[b+\frac12(1-\lambda)\right]},x\right\} \leq \frac12\sqrt{1-\lambda}
\Bigg\}.
\end{eqnarray}
The equality in \Eq{appeq: line 1} follows from the definition of $A$ in \Eq{appeq: Adef}, by substituting
$\alpha\rightarrow a+\frac12$ and 
$\beta\rightarrow b+\frac12(1-\lambda)$, and in order to obtain the following inequality we drop a condition on the parameters $a,b,x$.
Then, the equality in \Eq{appeq: line 3} can be proven by dividing the parameter region into two sub-regions: one such that 
$\min\Big\{\sqrt{(a+\frac12)(b+\frac12(1-\lambda))},x\Big\} \geq \frac12\sqrt{1-\lambda}$, and one such that the converse inequality holds.

As next step, we show that both $A_1$ and $A_2$ are larger than $1-\frac12\lambda$. In particular, we can explicitly evaluate $A_1$ as
\begin{eqnarray}
A_1&=&
\min\Bigg\{
\frac12\left(a+b+1-\frac12\lambda+\sqrt{\Big(a-b+\frac12\lambda\Big)^2+(1-\lambda)}\right)
:
\nonumber\\&&{}
\sqrt{\Big(a+\frac12\Big)\left[b+\frac12(1-\lambda)\right]} \geq \frac12\sqrt{1-\lambda}
\;\land\;
a,b\geq0
\Bigg\}
\nonumber\\
&=&
\min\Bigg\{
\frac12\left(a+b+1-\frac12\lambda+\sqrt{\Big(a-b+\frac12\lambda\Big)^2+(1-\lambda)}\right)
:
\nonumber\\&&{}
\sqrt{\Big(a+\frac12\Big)\left[b+\frac12(1-\lambda)\right]} \geq \frac12\sqrt{1-\lambda}
\;\land\;a=b=0
\Bigg\} = 1-\frac12\lambda.
\end{eqnarray}
The two equalities can be respectively shown by noticing that the quantity being minimised is a monotonically increasing function of $x\geq 0$ and of $a,b\geq 0$. In order to show that $A_2 \geq 1-\frac12\lambda$, it is convenient to write
\begin{eqnarray}
A_2&=&\min\{A_{3},A_{4}\},
\end{eqnarray}
where we divide the parameter region into two sub-regions, depending on the ordering between 
$\sqrt{(a+\frac12)\left[b+\frac12(1-\lambda)\right]}$ and $x$, that is:
\begin{eqnarray}
A_{3}&:=&
\min\Bigg\{
\frac12\left[ a+b+1-\frac12\lambda+\sqrt{\Big(a-b+\frac12\lambda\Big)^2+4x^2}\right]
: a,b,x\geq0
\nonumber\\&&{}
\land\; ab\geq
\left(\sqrt{\Big(a+\frac12\Big)\left[b+\frac12(1-\lambda)\right]} -\frac12\sqrt{1-\lambda}\right)^2
\;\land\;
\sqrt{\Big(a+\frac12\Big)\left[b+\frac12(1-\lambda)\right]} \leq \frac12\sqrt{1-\lambda}
\nonumber\\&&{}
\;\land\;
\sqrt{\Big(a+\frac12\Big)\left[b+\frac12(1-\lambda)\right]}\leq x
\Bigg\}
\nonumber\\
&=&
\min\bigg\{
\frac12\left(1-\frac12\lambda+\sqrt{\frac{\lambda^2}4+4x^2}\right)
:
\frac12\sqrt{1-\lambda}\leq x
\;\land\;
x\geq0
\bigg\}
= 1-\frac12\lambda,
\end{eqnarray}
and
\begin{eqnarray}
A_{4}&:=&
\min\Bigg\{
\frac12\left[ a+b+1-\frac12\lambda+\sqrt{\Big(a-b+\frac12\lambda\Big)^2+4x^2}\right]
: a,b,x\geq0
\nonumber\\&&{}
\land\; ab\geq
\left(x -\frac12\sqrt{1-\lambda}\right)^2
\;\land\;
x \leq \frac12\sqrt{1-\lambda}
\;\land\;
\sqrt{\Big(a+\frac12\Big)\left[b+\frac12(1-\lambda)\right]}\geq x
\Bigg\}
\nonumber\\
&\geq&
\min\Bigg\{
\frac12\left[ a+b+1-\frac12\lambda+\sqrt{\Big(a-b+\frac12\lambda\Big)^2+4x^2}\right]
: a,b,x\geq0
\nonumber\\&&{}
\land\; ab\geq
\left(\frac12\sqrt{1-\lambda}-x\right)^2
\;\land\;
x \leq \frac12\sqrt{1-\lambda}
\Bigg\}
\nonumber\\
&=&
\min\Bigg\{
\frac12\left[ a+b+1-\frac12\lambda+\sqrt{\Big(a-b+\frac12\lambda\Big)^2+4x^2}\right]
: a,b,x\geq0\nonumber\\&&{}
\land\; \frac12\sqrt{1-\lambda}-\sqrt{ab}\leq
x \leq \frac12\sqrt{1-\lambda}
\Bigg\}
\nonumber\\
&=&
\min\{A_{5},A_{6}\},
\end{eqnarray}
where we further expanded $A_4$ in terms of $A_5$ and $A_6$, depending on the ordering between $\frac12\sqrt{1-\lambda}$ and $\sqrt{ab}$:
\begin{eqnarray}
A_5&:=&
\min\Bigg\{
\frac12\left[ a+b+1-\frac12\lambda+\sqrt{\Big(a-b+\frac12\lambda\Big)^2+4x^2}\right]
: a,b,x\geq0 \nonumber\\&&
\land\; \frac12\sqrt{1-\lambda}-\sqrt{ab}\leq
x \leq \frac12\sqrt{1-\lambda}
\;\land\;
\frac12\sqrt{1-\lambda}\geq\sqrt{ab}
\Bigg\}
\nonumber\\
&=&
\min\Bigg\{
\frac12\left[a+b+1-\frac12\lambda+\sqrt{\Big(a-b+\frac12\lambda\Big)^2+(\sqrt{1-\lambda}-2\sqrt{ab})^2}\right]
:\nonumber\\&&
\quad \frac12\sqrt{1-\lambda}\geq\sqrt{ab}
\;\land\;
a,b\geq0
\Bigg\}
\nonumber\\
&\geq&
\min\Bigg\{
\frac12\left[a+b+1-\frac12\lambda+\sqrt{\Big(a-b+\frac12\lambda\Big)^2+(\sqrt{1-\lambda}-2\sqrt{ab})^2}\right]
:
a,b\geq0
\Bigg\}
\nonumber\\
&=&
\min\Bigg\{
\frac12\left[x+1-\frac12\lambda+\sqrt{\Big(1-\frac12\lambda\Big)^2+x^2+x\left(y  \lambda-2\sqrt{1-\lambda}\sqrt{1-y^2}\right)
}\right]
: \nonumber\\&&
\quad 1\geq y\geq -1\land x\geq0
\Bigg\}
\nonumber\\
&=&
\min\left\{
\frac12\left[x+1-\frac12\lambda+\Big|1-\frac12\lambda-x\Big|\right]
:
x\geq0
\right\}=1-\frac12\lambda,\label{appeq: A5}\\
&&\nonumber\\
A_6&:=&
\min\Bigg\{
\frac12\left[ a+b+1-\frac12\lambda+\sqrt{\Big(a-b+\frac12\lambda\Big)^2+4x^2}\right]
: a,b,x\geq0 \nonumber\\&&
\;\land\;\frac12\sqrt{1-\lambda}-\sqrt{ab}\leq
x \leq \frac12\sqrt{1-\lambda}
\;\land\;
\frac12\sqrt{1-\lambda}\leq\sqrt{ab}
\Bigg\}
\nonumber\\
&=&
\min\bigg\{
\frac12\left[a+b+1-\frac12\lambda+\Big|a-b+\frac12\lambda\Big|\right]
:
\frac12\sqrt{1-\lambda}\leq\sqrt{ab}
\;\land\;
a,b\geq0
\bigg\}
\nonumber\\
&=&
\min\bigg\{
\frac12\left[\sqrt{4x^2+y^2}+1-\frac12\lambda+\Big|y+\frac12\lambda\Big|\right]
:
\frac12\sqrt{1-\lambda}\leq x
\;\land\;
x\geq0
\bigg\}
\nonumber\\
&=&
\min\bigg\{
\frac12\left[\sqrt{1-\lambda+y^2}+1-\frac12\lambda+\Big|y+\frac12\lambda\Big|\right]
\bigg\}=1-\frac12\lambda. \label{appeq: A6}
\end{eqnarray}
In order to manipulate the expression of $A_3$, we first used the fact that 
\begin{equation}
\sqrt{\Big(a+\frac12\Big)\left[b+\frac12(1-\lambda)\right]}\leq \frac12\sqrt{1-\lambda}\land a,b\geq 0\Leftrightarrow a=0\land b=0,
\end{equation}
and then we exploited the monotonicity of $\frac12\Big(1-\frac12\lambda+\sqrt{\frac{\lambda^2}4+4x^2}\Big)$ in $x$ for $x\geq 0$.
The inequalities appearing in the manipulations of $A_4$ and $A_5$ are obtained by dropping a condition on $a,b,x$ which restricts the minimisation region.
In the first equalities written for $A_5$ and $A_6$ we used the monotonicity in $x$ of the function 
\begin{equation}
\frac12\left[a+b+1-\frac12\lambda+\sqrt{\left(a-b+\frac12\lambda\right)^2+4x^2}\right],
\end{equation}
which is minimised for $x=0$.
In the third relation appearing in the manipulation of $A_5$ we changed variables as 
$a\rightarrow \frac12(x+xy)$ and $b\rightarrow \frac12(x-xy)$, whereas in the second relation appearing in the manipulation of $A_6$ we parametrized $a,b$ as $a\rightarrow \frac12(y+\sqrt{4x^2+y^2})$ and $b\rightarrow \frac12(-y+\sqrt{4x^2+y^2})$.
Finally, the last equalities leading to the evaluation of $A_5$ and $A_6$
in \Eq{appeq: A5} and \Eq{appeq: A6} are respectively due to
\begin{equation}
\min_{-1\leq y\leq 1} \left[y  \lambda-2\sqrt{1-\lambda}\sqrt{1-y^2} \right] =
-(2-\lambda),
\end{equation}
and
\begin{equation}
\min_y\left[\sqrt{1-\lambda+y^2}+\left|y+\frac12\lambda\right| \right] = 1-\frac12\lambda.
\end{equation}

Overall, we have been able to show that $A_1 = 1-\frac12\lambda$, and that $A_2$ can be written as the minimum among quantities larger than or equal to $1-\frac12\lambda$. Therefore, from \Eq{appeq: line 3} it follows that $A=1-\frac12\lambda$, as desired.

\subsection{Proof of  $B\geq1-\frac12\lambda$}
We start by writing the values of $\alpha$ and $\beta$ appearing in the definition of $B$ in \Eq{appeq: B def} as
$\alpha\rightarrow a+\frac12$ and
$\beta\rightarrow b+\frac12(1-\lambda)$. Then, we divide the parameter region into two sub-regions depending on the ordering between 
$\min\Big\{\sqrt{(a+\frac12)\left[b+\frac12(1-\lambda)\right]},x\Big\}$ and $\frac12\sqrt{1-\lambda}$. This leaves us with

\begin{eqnarray}
B&=&\min\Bigg\{b+\frac12
:
ab\geq
\left(\min\left\{\min\left\{\sqrt{\Big(a+\frac12\Big)\left[b+\frac12(1-\lambda)\right]},x\right\} -\frac12\sqrt{1-\lambda},0\right\}\right)^2
\nonumber\\&&{}
\;\land\;
a+\frac{2x^2}\lambda\leq b
\;\land\;
a,b,x\geq0
\Bigg\}
\nonumber\\
&=&\min\{B_1,B_2\},
\end{eqnarray}
where
\begin{eqnarray}
B_1&:=&
\min\bigg\{b+\frac12
:
ab\geq0
\;\land\;
\min\left\{\sqrt{\Big(a+\frac12\Big)\left[b+\frac12(1-\lambda)\right]},x\right\} \geq\frac12\sqrt{1-\lambda}
\nonumber\\&&{}
\;\land\;
a+\frac{2x^2}\lambda\leq b
\;\land\;
a,b,x\geq0
\bigg\}
\nonumber\\
&=&
\min\bigg\{b+\frac12
:
\sqrt{\Big(a+\frac12\Big)\left[b+\frac12(1-\lambda)\right]} \geq\frac12\sqrt{1-\lambda}
\nonumber\\&&{}
\;\land\;
x \geq\frac12\sqrt{1-\lambda}
\;\land\;
a+\frac{2x^2}\lambda\leq b
\;\land\;
a,b,x\geq0
\bigg\}
\nonumber\\
\nonumber\\
&\geq&
\min\left\{b+\frac12
:
x \geq\frac12\sqrt{1-\lambda}
\;\land\;
a+\frac{2x^2}\lambda\leq b
\;\land\;
a,b,x\geq0
\right\}
\nonumber\\
&=&
\min\left\{b+\frac12
:
a+\frac{1-\lambda}\lambda\leq b
\;\land\;
a\geq0
\right\}
\nonumber\\
&=&
\min\left\{b+\frac12
:
\frac{1-\lambda}\lambda\leq b
\right\} = 
\frac1\lambda (1-\frac12\lambda)\geq 1-\frac12\lambda,
\end{eqnarray}
and 
\begin{eqnarray}
B_2
&:=&
\min\bigg\{b+\frac12
:
ab\geq
\left(\min\left\{\sqrt{\Big(a+\frac12\Big)\left[b+\frac12(1-\lambda)\right]},x\right\} -\frac12\sqrt{1-\lambda}\right)^2
\nonumber\\&&{}
\;\land\;
\min\left\{\sqrt{\Big(a+\frac12\Big)\left[b+\frac12(1-\lambda)\right]},x\right\} \leq\frac12\sqrt{1-\lambda}
\;\land\;
a+\frac{2x^2}\lambda\leq b
\;\land\;
a,b,x\geq0
\bigg\}
\nonumber\\
&=&
\min\bigg\{b+\frac12
:
\sqrt{ab}\geq
\frac12\sqrt{1-\lambda}-\min\left\{\sqrt{\Big(a+\frac12\Big)\left[b+\frac12(1-\lambda)\right]},x\right\}
\nonumber\\&&{}
\;\land\;
\min\left\{\sqrt{\Big(a+\frac12\Big)\left[b+\frac12(1-\lambda)\right]},x\right\} \leq\frac12\sqrt{1-\lambda}
\;\land\;
a+\frac{2x^2}\lambda\leq b
\;\land\;
a,b,x\geq0
\bigg\}
\nonumber\\
&\geq&
\min\bigg\{b+\frac12
:
\sqrt{ab}\geq
\frac12\sqrt{1-\lambda}-\min\left\{\sqrt{\Big(a+\frac12\Big)\left[b+\frac12(1-\lambda)\right]},x\right\}
\nonumber\\&&{}
\;\land\;
a+\frac{2x^2}\lambda\leq b
\;\land\;
a,b,x\geq0
\bigg\}
\nonumber\\
&=&
\min\bigg\{b+\frac12
:
\sqrt{ab}\geq
\frac12\sqrt{1-\lambda}-x 
\;\land\;
\sqrt{ab}\geq
\frac12\sqrt{1-\lambda}-\sqrt{\Big(a+\frac12\Big)\left[b+\frac12(1-\lambda)\right]} 
\nonumber\\&&{}
\;\land\;
a+\frac{2x^2}\lambda\leq b
\;\land\;
a,b,x\geq0
\bigg\}
\nonumber\\
&\geq&
\min\left\{b+\frac12
:
\sqrt{ab}\geq
\frac12\sqrt{1-\lambda}-x 
\;\land\;
a+\frac{2x^2}\lambda\leq b
\;\land\;
a,b,x\geq0
\right\}
\nonumber\\
&=&\min\{B_{3},B_{4}\}.
\end{eqnarray}
The inequalities appearing in the above manipulations are obtained by dropping conditions on $a,b,x$ which restrict the minimisation region. Moreover, $B_3$ and $B_4$ are obtained by splitting the parameter region into two sub-regions, defined according to the ordering between $\sqrt{ab}$ and $\frac12\sqrt{1-\lambda}$. More precisely, we can write
\begin{eqnarray}
B_{3}
&:=&
\min\left\{b+\frac12
:
x\geq
\frac12\sqrt{1-\lambda}-\sqrt{ab}
\;\land\;
\sqrt{ab}\geq
\frac12\sqrt{1-\lambda}
\;\land\;
a+\frac{2x^2}\lambda\leq b
\;\land\;
a,b,x\geq0
\right\}
\nonumber\\
&=&
\min\left\{b+\frac12
:
\sqrt{ab}\geq
\frac12\sqrt{1-\lambda}
\;\land\;
a\leq b
\;\land\;
a,b\geq0
\right\}
\nonumber\\
&=&
\min\left\{b+\frac12
:
b\geq
\frac12\sqrt{1-\lambda}
\;\land\;
b\geq0
\right\}=\frac12\sqrt{1-\lambda}+\frac12\geq1-\frac12\lambda,
\end{eqnarray}
and 
\begin{eqnarray}
B_{4}
&:=&
\min\left\{b+\frac12
:
x\geq
\frac12\sqrt{1-\lambda}-\sqrt{ab}
\;\land\;
\sqrt{ab}\leq
\frac12\sqrt{1-\lambda}
\;\land\;
a+\frac{2x^2}\lambda\leq b
\;\land\;
a,b,x\geq0
\right\}
\nonumber\\
&=&
\min\left\{b+\frac12
:
\sqrt{ab}\leq
\frac12\sqrt{1-\lambda}
\;\land\;
a+\frac{2\left(\frac12\sqrt{1-\lambda}-\sqrt{ab}\right)^2}\lambda\leq b
\;\land\;
a,b\geq0
\right\}
\nonumber\\
&=&
\min\left\{b+\frac12
:
\sqrt{ab}\leq
\frac12\sqrt{1-\lambda}
\;\land\;
\frac12\sqrt{1-\lambda}-\sqrt{ab}\leq \sqrt{\frac\lambda2}\sqrt{b-a}
\;\land\;
b\geq a
\;\land\;
a\geq0
\right\}
\nonumber\\
&=&
\min\left\{
\frac12\left(1+y^2+\sqrt{4x^2+y^4}\right)
:
x\leq
\frac12\sqrt{1-\lambda}
\;\land\;
\frac12\sqrt{1-\lambda}-x\leq \sqrt{\frac\lambda2}y
\;\land\;
x,y\geq0
\right\}
\nonumber\\
&=&
\min\left\{\frac12\left(1+y^2+\sqrt{4x^2+y^4}\right)
:
x\leq
\frac12\sqrt{1-\lambda}
\;\land\;
\frac12\sqrt{1-\lambda}-x= \sqrt{\frac\lambda2}y
\;\land\;
x\geq0
\right\}
\nonumber\\
&=&
\min\left\{\frac12\left[1+y^2+\sqrt{(\sqrt{1-\lambda}- \sqrt{2\lambda }y)^2+y^4}\right]
:
0\leq
y
\leq
\sqrt{\frac{1-\lambda}{2\lambda}}
\right\}
\nonumber\\
&\geq&
\min_{y\in \mathbb R}\left\{\frac12\left[1+y^2+\sqrt{(\sqrt{1-\lambda}- \sqrt{2\lambda }y)^2+y^4}\right]
\right\}
\nonumber\\
&=&
\min_{z\in \mathbb R}\left\{\frac12+\frac14(1-\lambda)\lambda\left[ z^2+\sqrt{\frac{4(1-\lambda z)^2}{(1-\lambda)\lambda^2}+z^4}\right]
\right\}
\nonumber\\
&=&
\frac12+\frac14(1-\lambda)\lambda\left[ 1+\sqrt{\frac{4(1-\lambda )^2}{(1-\lambda)\lambda^2}+1}\right]=1-\frac12\lambda.
\end{eqnarray}
In the third equality of the manipulations performed on $B_4$ we changed variables as 
$b\rightarrow \frac12(y^2+\sqrt{4x^2+y^4})$ and
$a\rightarrow \frac12(-y^2+\sqrt{4x^2+y^4})$, whereas in the fourth equality we used the fact that 
$\frac12(1+y^2+\sqrt{4x^2+y^4})$ is a monotonic function of $y$ when $y\geq 0$.
The seventh and eighth relations appearing in the manipulation of $B_4$, instead, are respectively obtained by changing variable as $y\rightarrow \sqrt{ \frac{(1-\lambda)\lambda}{2}} z$ and by exploiting the fact that $ z^2+\sqrt{\frac{4(1-\lambda z)^2}{(1-\lambda)\lambda^2}+z^4}$ is minimised at $z=1$ for $0<\lambda<1$. 

Overall, this shows that $B\geq 1-\frac 12 \lambda$ and the proof is concluded.

\end{document}